\documentclass[11pt]{article}

\usepackage[T1]{fontenc}
\usepackage[utf8]{inputenc}
\usepackage{amsmath,amssymb,amsthm, amsfonts}
\usepackage{aliascnt}
\usepackage{mathtools}
\usepackage{algorithm}
\usepackage{algpseudocode}
\usepackage{booktabs}
\usepackage{graphicx}
\usepackage[margin=1in]{geometry}
\usepackage[colorlinks=true,allcolors=blue]{hyperref}
\usepackage[capitalise,noabbrev]{cleveref}

\algtext*{EndIf}
\algtext*{EndWhile}
\algtext*{EndFor}

\newtheorem{theorem}{Theorem}[section]
\newaliascnt{lemma}{theorem}
\newtheorem{lemma}[lemma]{Lemma}
\aliascntresetthe{lemma}
\newaliascnt{proposition}{theorem}
\newtheorem{proposition}[proposition]{Proposition}
\aliascntresetthe{proposition}
\newaliascnt{corollary}{theorem}
\newtheorem{corollary}[corollary]{Corollary}
\aliascntresetthe{corollary}
\theoremstyle{definition}
\newaliascnt{definition}{theorem}
\newtheorem{definition}[definition]{Definition}
\aliascntresetthe{definition}
\newaliascnt{assumption}{theorem}
\newtheorem{assumption}[assumption]{Assumption}
\aliascntresetthe{assumption}
\newaliascnt{remark}{theorem}
\newtheorem{remark}[remark]{Remark}
\aliascntresetthe{remark}
\newaliascnt{example}{theorem}
\newtheorem{example}[example]{Example}
\aliascntresetthe{example}

\crefname{equation}{equation}{equations}
\Crefname{equation}{Equation}{Equations}
\crefname{theorem}{theorem}{theorems}
\Crefname{theorem}{Theorem}{Theorems}
\crefname{lemma}{lemma}{lemmas}
\Crefname{lemma}{Lemma}{Lemmas}
\crefname{proposition}{proposition}{propositions}
\Crefname{proposition}{Proposition}{Propositions}
\crefname{corollary}{corollary}{corollaries}
\Crefname{corollary}{Corollary}{Corollaries}
\crefname{definition}{definition}{definitions}
\Crefname{definition}{Definition}{Definitions}
\crefname{assumption}{assumption}{assumptions}
\Crefname{assumption}{Assumption}{Assumptions}
\crefname{remark}{remark}{remarks}
\Crefname{remark}{Remark}{Remarks}
\crefname{example}{example}{examples}
\Crefname{example}{Example}{Examples}
\crefname{algorithm}{algorithm}{algorithms}
\Crefname{algorithm}{Algorithm}{Algorithms}
\crefname{section}{section}{sections}
\Crefname{section}{Section}{Sections}

\DeclareMathOperator{\depth}{depth}
\DeclareMathOperator{\diam}{diam}
\newcommand{\E}{\mathbb{E}}

\newcommand{\priv}{\mathrm{priv}}
\newcommand{\eps}{\varepsilon}

\makeatletter
\providecommand{\theHALG@line}{}
\renewcommand{\theHALG@line}{\thealgorithm.\arabic{ALG@line}}
\makeatother

\title{Intrinsic-Dimensional Wasserstein Guarantees for Private Synthetic Measures}
\author{Yiyun He\thanks{Department of Mathematics, University of California, San Diego. Email: \texttt{yih130@ucsd.edu}}}
\date{September 15, 2026}

\begin{document}

\maketitle

\begin{abstract}
We study an $\eps$-differentially private synthetic measure for $n$ points in $[0,1]^d$ by applying the existing PrivTree algorithm \cite{zhang2016privtree} to construct an adaptive binary partition and then privately releasing its leaf masses.
We consider the worst-case data model without any sampling or population-distribution assumption.
The 1-Wasserstein error of the synthetic measure is $\widetilde O_d((\eps n)^{-1/d})$ for $d\ge2$, which is optimal compared to the minimax lower bound \cite{boedihardjo2024private} up to a logarithmic factor.

Moreover, for $d\ge3$ and $2<s\le d$, if the data set has covering number at most $Ar^{-s}$ over the relevant finite range of scales $r$, the expected error improves to $\widetilde O_{d,s}((\eps n)^{-1/s})$.
Thus the rate depends on a finite-scale intrinsic dimension rather than the ambient dimension, without requiring the recovery of a low-dimensional manifold.
We also introduce a shifting technique to further avoid the exponential dependence of the constant on the ambient dimension $d$.
\end{abstract}


\section{Introduction}

Differential privacy, as one of the strongest privacy concepts, aims to protect each individual's information by introducing randomness in the algorithms \cite{dwork2006calibrating,dwork2014algorithmic}.
It requires similar output distributions when one record is added or removed in the input, so an observer cannot distinguish whether a specific individual's data is included in the real data set.
The 2020 U.S. Census applied differential privacy in a public data release at national scale.
This deployment highlights both the practical importance of privacy protection and the challenge of preserving accuracy across the many uses of released census data \cite{hawes2020implementing,abowd2022topdown,hauer2021census}.

Much of the differential privacy literature focuses on specific tasks, including private mean estimation \cite{liu2021robustmean} and covariance estimation \cite{amin2019covariance}.
Examples in machine learning include linear regression \cite{sheffet2017ols}, clustering \cite{ghazi2020clustering}, and stochastic-gradient training \cite{abadi2016deep}, etc.
These algorithms add privacy guarantees to the chosen output, and multiple uses of the real data usually require a larger privacy budget.
In contrast, a private synthetic data set can be released once and reused for many analyses without further access to the sensitive data \cite{blum2013learning,hardt2012simple,boedihardjo2024covariance,kreacic2023kdtrees}.
The postprocessing property of differential privacy incurs no additional privacy loss despite any further use of the private synthetic data.
For example, the Private Evolution algorithm is a recent training-free example that uses public foundation-model APIs to generate private synthetic data \cite{lin2024foundation}.
Its recent analysis gives a worst-case Wasserstein guarantee under approximate differential privacy and a Gaussian variation oracle \cite{gonzalezlara2025private}.
 
For synthetic measures, we measure the utility with the Wasserstein-1 distance between the empirical measures of the input and output.
Also known as the earth mover's distance, the Wasserstein distance is the minimum cost of transporting one probability measure to another.
Kantorovich--Rubinstein duality \cite{villani2009optimal} shows that the Wasserstein distance is the supremum of the integration error over all $1$-Lipschitz functions. Thus, a small Wasserstein error controls the difference in expectations of every Lipschitz real-valued statistic, including Lipschitz scores used in distance-based classification \cite{vonluxburg2004distance}. (See \Cref{sec:wasserstein}.)
On $[0,1]^d$, it has been shown that the upper bound of worst-case error as well as corresponding minimax lower bound match at order $n^{-1/d}$ \cite{boedihardjo2024private,he2023algorithmically} up to a constant.
The exponent $1/d$ reflects the \emph{curse of high dimensionality}: $d\lesssim \log n$ is necessary to have a nontrivial accuracy, and the rate is still practically worthless when $d$ is even a large constant due to the slow convergence rate.
However, this barrier is inevitable in general: Ullman and Vadhan show that, assuming the existence of the one-way functions, no polynomial-time private algorithm can output a synthetic database on the Boolean cube that accurately preserves all two-way marginals \cite{ullman2020pcps}. 

To tackle the curse of dimensionality, one solution is to restrict the query class, as preserving all Lipschitz queries might be too ambitious for many applications.
Some distance weaker than the Wasserstein distance may incur better rates.
For example, only considering $s$-sparse Lipschitz queries admits an error of $\widetilde O((\eps n)^{-1/s})$ \cite{donhauser2024certified}, while $k$-smooth queries admit an error matching the minimax exponent $\min\{1,k/d\}$ up to logarithmic factors with approximate differential privacy \cite{ding2026smooth}.
Another approach to reduce the dependence on the ambient dimension takes advantage of low-dimensional geometry of the input data.
We aim to replace the ambient dimension $d$ in the exponent by the intrinsic dimension of the data \cite{boedihardjo2024private, donhauser2024certified,he2025lowdimensional}.
Related non-private work also gives ambient-dimension-dependent empirical Wasserstein rates \cite{fournier2015rate} and refinements that replace the ambient dimension by Wasserstein dimensions of the population distribution \cite{weed2019sharp}, which encourage us to pursue the private version of polynomial algorithms with similar error rates.

To exploit low-dimensional geometry without first estimating a subspace or manifold, we use a multiscale partition of the ambient space, proposed in \cite{zhang2016privtree}.
Hierarchical partitions are a classical multiscale tool in optimal transport: they match mass locally inside fine cells and move only the remaining mass across coarser cells \cite{ba2011sublinear,dereich2013constructive,weed2019sharp}.
This idea also underlies the private construction in \cite{he2023algorithmically}, which adds noise to counts on a uniform binary partition.
PrivTree instead makes the hierarchy data dependent: it recursively splits a cell when its biased noisy count exceeds a threshold, while the noise scale is independent of depth \cite{zhang2016privtree}.

\subsection{Main results}

Our construction has two stages.
First, we apply PrivTree \cite{zhang2016privtree} to a cyclic binary partition of $[0,1]^d$.
Cells with large biased noisy masses are refined, while lighter cells remain coarse, so the tree adapts to the geometry of the data.
After releasing the tree, we add independent Laplace noise to its leaf masses and then reconstruct a probability measure.
This separates the Wasserstein error into a resolution term from aggregating each leaf to one representative point and a privacy term from perturbing the leaf masses.
We analyze these two terms with the following tools.

Our first tool is a deterministic resolution functional: it sums the mass of each terminal cell times the scale of that cell.
It bounds the cost of aggregating the leaves to representative points, especially when the tree is unbalanced.
The second tool is a data-dependent Laplacian complexity bound for the privacy error, which adapts to the geometry of the data.
For a data set of size $m$ and intrinsic dimension $s>2$, it has order $m^{-1/s}$, replacing the ambient dimension $d$ by $s$ and overcoming the curse of dimensionality.

Combining these two tools with the privacy analysis gives the following informal result.

\begin{theorem}[Informal]\label{thm:informal-main}
Assume that $0<\eps\le1$ and $\eps n>2$.
The resulting synthetic measure is $\eps$-differentially private, has expected support size $O(\eps n)$, and satisfies
\[
    \mathbb E W_1(\mu_n,\widehat\nu)
    \le C\sqrt d
    \begin{cases}
        (\eps n)^{-1/2}, & d=1,\\
        (\eps n)^{-1/2}\log(\eps n), & d=2,\\
        \left(\dfrac{\log(\eps n)}{\eps n}\right)^{1/d}, & d\ge3.
    \end{cases}
\]
Moreover, let $d\ge3$, $2<s\le d$, $1\le A<\eps n$, and $X=\{X_i\}_{i=1}^n\subseteq\Omega$.  Suppose the covering number bound
\[
    N(X,r)\le Ar^{-s}
    \qquad\text{for every } \left(\frac A{\eps n}\right)^{1/s}\le r\le1.
\]
Then
\[
    \mathbb E W_1(\mu_n,\widehat\nu)
    \le C_s d^{3/2}\left(\frac{Ad\log(\eps n)}{\eps n}\right)^{1/s}.
\]
\end{theorem}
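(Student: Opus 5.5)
The plan is to split the algorithm into its two stages and bound each separately, using the triangle inequality for $W_1$:
\[
W_1(\mu_n,\widehat\nu)\le W_1(\mu_n,\mu_T)+W_1(\mu_T,\widehat\nu),
\]
where $T$ is the released PrivTree partition and $\mu_T$ puts the exact mass $\mu_n(L)$ of each leaf $L$ at a fixed representative point of $L$.

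\textbf{Privacy and support size.} Privacy follows from basic composition. PrivTree is $\eps/2$-DP with a noise scale independent of depth, as in \cite{zhang2016privtree}. The Laplace release of leaf masses has sensitivity one, since each point lies in exactly one leaf. Normalization and projection to a probability measure are postprocessing. For the support size, I would use the standard PrivTree fact: a node is split only when its biased noisy count exceeds the threshold $\theta\asymp\eps^{-1}\log(\cdot)$. A charging argument then shows the expected number of internal nodes, and hence leaves, is $O(n/\theta+1)=O(\eps n)$, up to how $\theta$ is tuned.

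\textbf{Resolution term.} For the first term I use the deterministic resolution functional: $W_1(\mu_n,\mu_T)\le\sum_L\mu_n(L)\diam(L)$. In a cyclic binary partition a depth-$k$ cell has diameter $\lesssim\sqrt d\,2^{-k/d}$. A leaf is either at the maximal depth $K$, which I choose with $2^{-K/d}\approx(\eps n)^{-1/d}$, or unsplit because of low noisy count. In the second case, with high probability, its true count is $\lesssim\theta$ up to a logarithmic slack coming from the Laplace tails and the bias.

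I group leaves by depth. At depth $k$ the leaves' parents are disjoint split nodes with count $\gtrsim\theta$, so there are at most $n/\theta$ of them. Mass in unsplit leaves at depth $k$ is therefore at most $\min(2^k,n/\theta)\cdot\theta/n$. Summing against diameters $\sqrt d\,2^{-k/d}$ gives the rate $((\log \eps n)/\eps n)^{1/d}$ for $d\ge3$.

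Under the covering hypothesis, the number of depth-$k$ cells meeting $X$ is at most $N(X,c2^{-k/d})\lesssim A2^{ks/d}$. This replaces $2^k$ in the sum and gives $(A\log(\eps n)/\eps n)^{1/s}$ times $\sqrt d$. The extra powers of $d$ come from converting cubes of side $2^{-k/d}$ into balls.

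\textbf{Privacy term.} For the second term, I pass to the tree metric, which dominates $W_1$. Perturbing leaf masses by i.i.d.\ $\mathrm{Lap}(1/\eps n)$ costs, by the hierarchical matching bound, at most $\sum_{v}\diam(v)\,\bigl|\sum_{L\subseteq v}\lambda_L\bigr|$ over internal nodes $v$. Its expectation is at most $\sum_v\diam(v)\sqrt{\#\text{leaves under }v}/(\eps n)$, plus a normalization term of the same order.

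This is the Laplacian complexity bound. Counting nodes per depth gives at most $\min(2^k,A2^{ks/d},\eps n)$ of them, and the sum behaves like $\sum_k 2^{-k/d}\sqrt{N_k\cdot(\text{leaves}/N_k)}$. Summing per depth using Cauchy--Schwarz, with $N_k$ the number of depth-$k$ nodes, gives $(\eps n)^{-1}\sum_k\sqrt d\,2^{-k/d}\sqrt{N_k\,\eps n}$. This is dominated by the finest level when $s>2$, giving $(\eps n)^{-1/s}$. When $s=2$ the levels contribute equally, producing $\log$. When $d=1$ the coarsest levels dominate and give $(\eps n)^{-1/2}$.

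\textbf{Main obstacle.} The hard part is this last step. The tree is random and data-dependent, so I must condition on $T$ and use that the leaf noise is independent of $T$. I must also verify that the covering condition, which is only assumed down to scale $(A/\eps n)^{1/s}$, suffices: below that scale, nodes are too light to be split with high probability. Controlling the tails of PrivTree's split decisions uniformly over depth, so that no heavy cell is left coarse, is where the $\log(\eps n)$ factor enters. I would handle it with a union bound over at most $O(\eps n\cdot K)$ visited nodes.
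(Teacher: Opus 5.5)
\textbf{Main gap: turning the covering hypothesis into cell counts.} You assert that the number of depth-$k$ cells of the fixed cyclic grid meeting $X$ is at most $N(X,c2^{-k/d})\lesssim A2^{ks/d}$. You also say that ``converting cubes into balls'' costs only powers of $d$. This fails for every constant $c$, because a ball centered at a vertex of the depth-$d$ grid meets $2^d$ cells however small it is. For instance, take $2^d$ points, one in each orthant, all within distance $\eta\le2^{-d/s}$ of the center of $\Omega$. They satisfy the covering hypothesis with $A=1$, yet occupy $2^d$ cells at depth $d$.

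On a fixed grid the correct bound is $N_k^+(\mu_n)\le2^dN(X,2^{-k/d}/4)$. So $A$ becomes $2^dA$ and your rate picks up a factor $2^{d/s}$, which is exponential in $d$ for fixed $s$. Your argument therefore cannot give $C_sd^{3/2}$.

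The missing idea is the paper's public random shift. Translate the data by $U\sim\operatorname{Unif}([0,1]^d)$, run PrivTree on $[0,2]^d$, and discard cells disjoint from $\Omega+U$. A ball of radius $2^{-k/d}/(2d)$ then crosses a grid hyperplane in coordinate $i$ with probability at most $1/d$, independently over $i$. So it meets on average at most $(1+1/d)^d\le e$ cells, and $\E_UN^+_{k,U}\le eA(2d)^s2^{sk/d}$. After the $1/s$ power, the factor $(2d)^s$ costs only $\lesssim d$; together with the $\sqrt d$ from cell diameters this gives $d^{3/2}$. The rest of the argument is then run in expectation over $U$. This is legitimate because the resolution bound $\min\{1,(t+k\delta)N_k^+\}$ is concave in $N_k^+$ and the node-count recurrence is linear.

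\textbf{Three further steps need repair.}

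(i) Light cells are not ``too light to be split with high probability.'' In PrivTree every non-heavy cell, even an empty one, is split with probability $\frac12e^{-\delta/\sigma}$, which equals $\frac14$ for $\delta=\sigma\log2$. False-positive branches are therefore subcritical random trees with no deterministic size bound. Your per-depth node counts $\min(2^k,A2^{ks/d},\eps n)$ and the support-size bound hold only in expectation. They come from $\E V_{j+1}\le\frac12\E V_j+\frac32N_j^+(\mu_n)$ and a branching-process count of non-heavy subtrees, and they enter the noise term through $\E\sqrt{mN_k}\le\sqrt{\E m\,\E N_k}$. Likewise, the premature-stopping union bound should run over the deterministic set of heavy cells, or along each root-to-leaf path, not over the random set of visited nodes.

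(ii) Summing your hierarchical bound over every depth loses a factor of order $d$. Consecutive terms differ by the ratio $2^{(s/2-1)/d}$, so roughly $d/(s/2-1)$ levels contribute comparably; the finest level does not dominate. Bounding each deep level by $\eps n$ also makes the tail $\sum_{k\ge K}2^{-k/d}\approx d\,2^{-K/d}$, which hurts even the worst-case part. Chain only through depths $dj$ and close with a remainder term, as in the paper's data-dependent Laplacian complexity bound.

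(iii) A depth cap with $2^{-K/d}\approx(\eps n)^{-1/d}$ is incompatible with the second claim. For genuinely $s$-dimensional data with $s<d$, most of the mass reaches capped leaves of diameter about $\sqrt d(\eps n)^{-1/d}\gg(\eps n)^{-1/s}$. You need resolution down to depth $K_s\approx\frac ds\log_2(\eps n/A)$, which uncapped PrivTree reaches. On that range the bias $k\delta\lesssim d\log(\eps n)/(s\eps n)$, not just the Laplace tails, is what puts $d\log(\eps n)$ into the resolution term.
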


\paragraph{Discussion.}
\emph{Resolution scale.} We do not need the covering bound at arbitrarily small $r$. For any candidate $s$, it is enough to verify the bound down to $r=(A/(\eps n))^{1/s}$, because every finite data set has Minkowski dimension $0$ as $r\downarrow0$. A natural cutoff $r^*$ (before determining $s$) is the scale at which the covering number reaches the effective private sample size, namely $N(X,r^*)\asymp\eps n$; under the power-law bound $N(X,r)\le Ar^{-s}$, this balance gives $r^*\asymp(A/(\eps n))^{1/s}$.

\emph{Minimax lower bound.} The result with exponent $1/s$ matches the minimax lower bound up to logarithmic factors for fixed privacy parameter $\eps$. Indeed, consider the case where $s$ is an integer and only the first $s$ coordinates vary, that is, the data lies in $[0,1]^s\times\{0\}^{d-s}$. In any possible definition of intrinsic dimension, as long as this particular case is regarded as $s$-dimensional, the minimax lower bound on $[0,1]^s$ in \cite{boedihardjo2024private} gives error $\Omega(n^{-1/s})$.

\emph{Time complexity.} The adaptive partition \Cref{alg:binary-privtree} has expected tree size $O(\varepsilon n)$, and hence running time $\mathrm{poly}(n,d)$ for fixed $\eps$. In the synthetic data generation, \Cref{alg:private-synthetic-measure}, which has time complexity $\mathrm{poly}(n,d)$ as studied in \cite{he2023algorithmically}, dominates the time complexity.

\paragraph{Comparison with existing results.}
For arbitrary data and $d\ge2$, \Cref{thm:informal-main} has the exponent $1/d$, matching the minimax lower bound \cite{boedihardjo2024private}, yet the best algorithm using the uniform binary construction in \cite{he2023algorithmically} already attains $O_d((\eps n)^{-1/d})$ without our logarithmic factor and runs in linear time.
The main contribution of our result is to use adaptive partitions to avoid the curse of high-dimensionality by capturing the low-dimensional structure of data.
The construction in \cite{donhauser2024certified} also adapts to data on an $s$-dimensional covering manifold and more general unknown nonlinear structure, but its exponent is $1/(s+1)$, and its implementation has exponential complexity.
For data lying on an unknown $s$-dimensional affine subspace, the polynomial-time method in \cite{he2025lowdimensional} attains exponent $1/s$ without a spectral-gap assumption, which can be recovered by our result with better constant dependence in $d$.
The instance-optimal framework in \cite{feldman2024instance} embeds finite metric spaces into hierarchically separated trees, where the Wasserstein distance has a simple tree representation. Its utility guarantee requires i.i.d. sampling from the population distribution, while ours focuses on the worst-case scenario and holds conditionally for every fixed data set.
Related hierarchical methods give sparse EMD guarantees for two-dimensional heatmaps \cite{ghazi2023heatmaps}; they use a fixed grid and an input model in which each user contributes a distribution.

\paragraph{Organization.}
The paper is organized as follows.
\Cref{sec:preliminaries} introduces important concepts and notation used throughout the paper.
\Cref{sec:aggregation-laplacian} describes the private leaf-mass release and proves the data-dependent Laplacian complexity bound.
\Cref{sec:warmup}, as a warm-up for the adaptive partitions, develops the deterministic hard-threshold partitions and provides tools for later analysis.
\Cref{sec:soft-privtree} analyzes soft thresholds and PrivTree, getting a worst-case error bound for accuracy and also intrinsic-dimensional guarantees.
\Cref{sec:random-shift} replaces the fixed grid by a public random shift and states the resulting guarantee in terms of covering numbers.

\section{Preliminaries}\label{sec:preliminaries}

\subsection{Wasserstein distance and empirical measures}
\label{sec:wasserstein}

Throughout, let $d\ge1$ and let $\Omega=[0,1]^d$, equipped with the Euclidean metric.
For fixed points $X_1,\ldots,X_n\in\Omega$, their empirical measure is
\begin{equation}
    \mu_n\coloneqq\frac1n\sum_{i=1}^n\delta_{X_i}.
\end{equation}
For probability measures $\mu$ and $\nu$ on $\Omega$, write $\Pi(\mu,\nu)$ for the set of their couplings and define
\[
    W_1(\mu,\nu)
    \coloneqq\inf_{\pi\in\Pi(\mu,\nu)}
      \int_{\Omega\times\Omega}\|x-y\|_2\,d\pi(x,y).
\]
The quantity $W_1(\mu,\nu)$ is also called the earth-mover distance: it is the minimum cost of rearranging mass distributed according to $\mu$ into mass distributed according to $\nu$, where moving an amount $m$ over a distance $\ell$ costs $m\ell$.
Equivalently, a coupling $\pi$ specifies how much mass is transported from each location $x$ to each location $y$.

Let
\[
    \mathcal F
    \coloneqq\{f:\Omega\to\mathbb R:
       \operatorname{Lip}(f)\le1,\ \|f\|_\infty\le\sqrt d\}
\]
be the bounded $1$-Lipschitz function class on $\Omega$.  The Kantorovich--Rubinstein duality \cite{villani2009optimal} gives
\begin{equation}\label{eq:kr-duality}
    W_1(\mu,\nu)
    =\sup_{\operatorname{Lip}(f)\le1}
       \left|\int_\Omega f\,d\mu-\int_\Omega f\,d\nu\right|
    =\sup_{f\in\mathcal F}
       \left|\int_\Omega f\,d\mu-\int_\Omega f\,d\nu\right|.
\end{equation}
Indeed, adding a constant to $f$ does not change either difference because $\mu$ and $\nu$ are probability measures.  We may therefore impose $f(0)=0$, which gives $\|f\|_\infty\le\diam(\Omega)=\sqrt d$.

\subsection{Differential privacy}

We next define privacy for the sample $X_1,\ldots,X_n$.
Differential privacy requires the output distribution to change little when one observation is added or removed.

\begin{definition}[Differential privacy, \cite{dwork2014algorithmic}]
  Two data sets $D,D'$ are adjacent if they differ by one data point, that is, one can be obtained from the other by adding or removing one data point.
  A randomized mechanism $\mathcal M$ is $\eps$-differentially private if, for every adjacent data sets $D, D'$ and every measurable set $S$ of outputs,
  \[
    \mathbb P\{\mathcal M(D)\in S\}
    \le e^{\eps}\cdot \mathbb P\{\mathcal M(D')\in S\}.
  \]
\end{definition}

Our final mechanism first releases a private tree and then private leaf masses based on the tree. Therefore, we use the following principle of adaptive composition of differentially private mechanisms.

\begin{lemma}[Adaptive composition, \cite{dwork2014algorithmic} Theorem~B.1]\label{lem:adaptive-composition}
Let $\mathcal M_1$ be $\eps_1$-differentially private, and suppose that $D\mapsto\mathcal M_2(D,y)$ is $\eps_2$-differentially private for every fixed output $y$ of $\mathcal M_1$.  Then
\[
    D\longmapsto\bigl(\mathcal M_1(D),\mathcal M_2(D,\mathcal M_1(D))\bigr)
\]
is $(\eps_1+\eps_2)$-differentially private.
\end{lemma}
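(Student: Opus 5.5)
The plan is to prove the adaptive composition statement directly from the definition, by conditioning on the output of $\mathcal M_1$. Fix adjacent data sets $D,D'$. Write $\mathcal M(D)=(\mathcal M_1(D),\mathcal M_2(D,\mathcal M_1(D)))$. It suffices to show $\mathbb P\{\mathcal M(D)\in S\}\le e^{\eps_1+\eps_2}\,\mathbb P\{\mathcal M(D')\in S\}$ for every measurable $S$ in the product output space. I will first treat the case where $\mathcal M_1$ has a countable (discrete) output space. That case already covers the tree release here, since a binary partition tree is a discrete object. Afterwards I indicate the measure-theoretic extension.

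In the discrete case, write $S_y=\{z:(y,z)\in S\}$ for the section at $y$. I also assume the internal randomness of $\mathcal M_2$ is independent of that of $\mathcal M_1$ given $y$, which is how the mechanism is built. Then
\[
\mathbb P\{\mathcal M(D)\in S\}=\sum_y \mathbb P\{\mathcal M_1(D)=y\}\,\mathbb P\{\mathcal M_2(D,y)\in S_y\}.
\]
Apply the $\eps_2$-privacy of $D\mapsto\mathcal M_2(D,y)$ to each summand with the fixed $y$. This gives $\mathbb P\{\mathcal M_2(D,y)\in S_y\}\le e^{\eps_2}\,\mathbb P\{\mathcal M_2(D',y)\in S_y\}$. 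Next apply the $\eps_1$-privacy of $\mathcal M_1$ to the singleton $\{y\}$, which gives $\mathbb P\{\mathcal M_1(D)=y\}\le e^{\eps_1}\,\mathbb P\{\mathcal M_1(D')=y\}$. Multiplying these two bounds (all terms are nonnegative) and summing over $y$ recovers $e^{\eps_1+\eps_2}\,\mathbb P\{\mathcal M(D')\in S\}$.

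For a general output space, the per-point inequality for $\mathcal M_1$ is not available, and this is the main obstacle. The plan is to use the function $g_D(y)=\mathbb P\{\mathcal M_2(D,y)\in S_y\}$, which takes values in $[0,1]$ and is measurable by Fubini (assuming $\mathcal M_2$ is given by a measurable kernel). By the $\eps_2$-privacy of $\mathcal M_2$, $g_D\le e^{\eps_2}g_{D'}$ pointwise. It remains to show
\[
\int g\,dP_D\le e^{\eps_1}\int g\,dP_{D'}
\]
for every measurable $g$ with values in $[0,1]$, where $P_D$ is the law of $\mathcal M_1(D)$. This follows from the set-wise privacy inequality via the layer-cake formula:
\[
\int g\,dP_D=\int_0^1 P_D\{g>t\}\,dt\le e^{\eps_1}\int_0^1 P_{D'}\{g>t\}\,dt=e^{\eps_1}\int g\,dP_{D'}.
\]
Chaining the two inequalities yields the claim. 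The roles of $D$ and $D'$ are symmetric, since adjacency covers both adding and removing a point, so the argument covers both directions.
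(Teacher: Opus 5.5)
Your proof is correct. The paper does not prove this lemma; it imports it from \cite{dwork2014algorithmic}. Your discrete-case computation is the standard argument behind that citation: condition on the output $y$ of $\mathcal M_1$, pass to the section $S_y$, apply the $\eps_2$-guarantee for that fixed $y$ and the $\eps_1$-guarantee on the singleton $\{y\}$, then multiply and sum. So in substance you follow the same route.

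Your version adds two things worth keeping.
\begin{itemize}
\item It makes explicit a hypothesis the statement leaves implicit: $\mathcal M_2$ draws fresh randomness independent of $\mathcal M_1$. This is what justifies $\mathbb P\{\mathcal M(D)\in S\}=\int g_D\,dP_D$.
\item The layer-cake step removes the countability restriction on the output of $\mathcal M_1$. Your chain $\int g_D\,dP_D\le e^{\eps_2}\int g_{D'}\,dP_D\le e^{\eps_1+\eps_2}\int g_{D'}\,dP_{D'}$ correctly applies the $[0,1]$-valued inequality to $g_{D'}$ rather than to $e^{\eps_2}g_{D'}$, which can exceed $1$.
\end{itemize}

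The discrete case already suffices for composing the PrivTree release with the leaf-mass release, since the released tree ranges over a countable set. The general case is what handles the continuous, data-independent shift $U$ in \Cref{alg:randomly-shifted-privtree}, if you treat it as a first mechanism with $\eps_1=0$. That is one clean way to justify the paper's remark that averaging over $U$ preserves privacy.
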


To guarantee privacy in the leaf-mass perturbation step, we use the Laplace mechanism, one of the simplest and most fundamental mechanisms in differential privacy.

\begin{lemma}[Laplace mechanism and post-processing, \cite{dwork2014algorithmic} Theorem~3.6]\label{lem:laplace-mechanism}
For a query $q$ taking values in $\mathbb R^m$, define its $\ell_1$-sensitivity by
\[
    \Delta_1(q)\coloneqq\sup_{D\sim D'}\|q(D)-q(D')\|_1,
\]
where the supremum is over adjacent data sets.
Here $\operatorname{Lap}(b)$ denotes the centered Laplace distribution with scale $b$.
If $\lambda_1,\ldots,\lambda_m$ are independent $\operatorname{Lap}(\Delta_1(q)/\eps)$ variables, then $q(D)+(\lambda_1,\ldots,\lambda_m)$ is $\eps$-differentially private.
\end{lemma}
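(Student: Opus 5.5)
The statement immediately above is the Laplace mechanism together with its post-processing clause, so I will treat it as a standard fact and give the usual density-ratio argument. Fix adjacent data sets $D\sim D'$ and write $b=\Delta_1(q)/\eps$. I assume $0<\Delta_1(q)<\infty$; if $\Delta_1(q)=0$ the output $q(D)$ does not depend on the data and there is nothing to prove. The output $Y=q(D)+(\lambda_1,\ldots,\lambda_m)$ then has Lebesgue density
\[
p_D(y)=\prod_{j=1}^m\frac{1}{2b}\exp\!\left(-\frac{|y_j-q(D)_j|}{b}\right)
\]
on $\mathbb R^m$.

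The key step compares the two densities pointwise. For every $y\in\mathbb R^m$, the reverse triangle inequality $|y_j-q(D')_j|-|y_j-q(D)_j|\le|q(D)_j-q(D')_j|$, summed over coordinates, gives
\[
\frac{p_D(y)}{p_{D'}(y)}=\exp\!\left(\frac{\sum_j\bigl(|y_j-q(D')_j|-|y_j-q(D)_j|\bigr)}{b}\right)\le\exp\!\left(\frac{\|q(D)-q(D')\|_1}{b}\right)\le e^{\eps}.
\]
The last inequality uses the definition of $\Delta_1(q)$ as a supremum over adjacent pairs. Integrating $p_D\le e^{\eps}p_{D'}$ over an arbitrary measurable set $S$ yields $\mathbb P\{Y\in S\}\le e^{\eps}\,\mathbb P\{Y'\in S\}$, where $Y'=q(D')+(\lambda_1,\ldots,\lambda_m)$. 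Because adjacency is symmetric, the same bound holds with $D$ and $D'$ swapped, so the mechanism is $\eps$-differentially private.

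For the post-processing clause, let $g$ be any measurable map, possibly randomized with randomness independent of the data. For a deterministic $g$, apply the bound above to the measurable set $g^{-1}(S)$. For a randomized $g$, first condition on its independent randomness and then average the resulting inequality.

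There is no substantive obstacle. The only care needed is measurability and the degenerate case $\Delta_1(q)=0$, and both are handled by the remarks above.
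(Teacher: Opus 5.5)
Your proof is correct. It is the standard pointwise density-ratio argument, which is exactly the proof of Theorem~3.6 in Dwork--Roth; the paper cites that theorem rather than proving it. Your post-processing remark, first for deterministic maps and then by conditioning on independent randomness, is likewise the standard argument.
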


\subsection{Binary partition tree}

We use a geometrically balanced cyclic binary partition.
Let $\mathcal D_0=\{\Omega\}$.
To obtain $\mathcal D_{k+1}$, bisect every cell of $\mathcal D_k$ into two congruent rectangles, orthogonally to coordinate $1+(k\bmod d)$.
Thus each coordinate is bisected once in every block of $d$ levels.
We fix a consistent convention for assigning points on cell boundaries and denote by $Q_k(x)\in\mathcal D_k$ the cell containing $x$ at depth $k$.
A depth-$k$ cell $Q$ satisfies
\begin{equation}\label{eq:diameter-depth}
    \diam(Q)
    \le \sqrt d\,2^{-\lfloor k/d\rfloor}
    \le 2\sqrt d\,2^{-k/d}.
\end{equation}

A partition tree $\mathcal T$ is a finite rooted subtree of this complete binary partition whose leaves form a partition of $\Omega$.
Write $\mathcal L(\mathcal T)$ for its leaves and $\depth(Q)$ for the depth of a cell $Q$ as a node in the tree.
For any $x\in\Omega$, let $K_{\mathcal T}(x)$ be the depth of the leaf containing $x$.
For simplicity, set
\[
    \alpha\coloneqq2^{-1/d} \in [1/2, 1),\qquad
    p_Q\coloneqq\mu(Q).
\]
Given representatives $y_Q\in Q$ for the leaves of $\mathcal T$, define the exact synthetic measure associated with $\mu$ and $\mathcal T$, together with its corresponding resolution functional, by
\begin{align}
    \nu_{\mathcal T}
    &\coloneqq\sum_{Q\in\mathcal L(\mathcal T)}p_Q\delta_{y_Q},
    \\
    \mathfrak R_{\mathcal T}(\mu)
    &\coloneqq \E_{x\sim \mu}[\alpha^{K_{\mathcal T}(x)}]
    = \int_\Omega 2^{-K_{\mathcal T}(x)/d}\,d\mu(x)
      =\sum_{Q\in\mathcal L(\mathcal T)}p_Q2^{-\depth(Q)/d}.
\end{align}
The factor $\alpha^{K_{\mathcal T}(x)}$ records the geometric scale of the leaf containing $x$, so $\mathfrak R_{\mathcal T}(\mu)$ is the average resolution under $\mu$.  It directly controls the cost of aggregating each leaf to its representative.

The next lemma, which is the most commonly used in this paper, bounds the resolution error of the partition tree in terms of the resolution functional $\mathfrak R_{\mathcal T}(\mu)$. Therefore, to study the resolution error of the aggregation step, it suffices to study the resolution functional $\mathfrak R_{\mathcal T}(\mu)$.

\begin{lemma}[Resolution error]
\label{lem:resolution-error}
For every partition tree $\mathcal T$ and every choice of leaf representatives,
\begin{equation}
    W_1(\mu,\nu_{\mathcal T})
    \le \sum_{Q\in\mathcal L(\mathcal T)}p_Q\diam(Q)
    \le 2\sqrt d\,\mathfrak R_{\mathcal T}(\mu).
\end{equation}
\end{lemma}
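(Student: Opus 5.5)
The plan is to build an explicit coupling that moves all mass inside each leaf to that leaf's representative, and then to bound its cost cell by cell. Let $\pi\coloneqq\sum_{Q\in\mathcal L(\mathcal T)}\mu|_Q\otimes\delta_{y_Q}$, where $\mu|_Q$ is the restriction of $\mu$ to $Q$. The leaves partition $\Omega$ under the fixed boundary convention, so the first marginal of $\pi$ is $\sum_Q\mu|_Q=\mu$. The second marginal is $\sum_Q p_Q\delta_{y_Q}=\nu_{\mathcal T}$, hence $\pi\in\Pi(\mu,\nu_{\mathcal T})$. For $x\in Q$ and $y_Q\in Q$ we have $\|x-y_Q\|_2\le\diam(Q)$, which gives
\[
    W_1(\mu,\nu_{\mathcal T})\le\int\|x-y\|_2\,d\pi
    =\sum_{Q\in\mathcal L(\mathcal T)}\int_Q\|x-y_Q\|_2\,d\mu(x)
    \le\sum_{Q\in\mathcal L(\mathcal T)}p_Q\diam(Q).
\]
This is the first inequality.

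For the second inequality, I apply the diameter bound \eqref{eq:diameter-depth} to each leaf: a leaf of depth $k=\depth(Q)$ satisfies $\diam(Q)\le\sqrt d\,2^{-\lfloor k/d\rfloor}$. Since $\lfloor k/d\rfloor\ge k/d-1$, this is at most $2\sqrt d\,2^{-k/d}$. Summing against $p_Q$ and using the last expression for $\mathfrak R_{\mathcal T}(\mu)$ in its definition gives $\sum_Q p_Q\diam(Q)\le2\sqrt d\sum_Q p_Q2^{-\depth(Q)/d}=2\sqrt d\,\mathfrak R_{\mathcal T}(\mu)$.

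I do not expect a real obstacle. The one point that needs care is that the leaves form a genuine partition, with each point assigned to exactly one cell, so that $\sum_Q\mu|_Q=\mu$. This follows from the fixed boundary convention and the definition of a partition tree. Measurability of $\pi$ is immediate because the sum is finite.
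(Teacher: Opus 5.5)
Your proposal is correct and follows the paper's argument exactly: couple the mass in each leaf with its representative to obtain the first inequality, then use the diameter bound \Cref{eq:diameter-depth} for the second. You simply spell out the details the paper leaves implicit, namely the explicit coupling $\pi$ with its marginals and the step $\lfloor k/d\rfloor\ge k/d-1$.
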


\begin{proof}
Couple each $x\in Q$ with the representative $y_Q$.  The cost on leaf $Q$ is at most $p_Q\diam(Q)$, which gives the first inequality after summation.  The second follows from \Cref{eq:diameter-depth}.
\end{proof}

Since every depth-$k$ cell has volume $2^{-k}$ and the leaves partition $\Omega$, the volume argument (or Kraft's identity) gives
\begin{equation}\label{eq:kraft}
    \sum_{Q\in\mathcal L(\mathcal T)}2^{-\depth(Q)}=1.
\end{equation}
The following result is useful in subsequent computations.

\begin{lemma}
\label{lem:leafwise-holder}
The resolution functional satisfies
\begin{equation}\label{eq:leafwise-holder}
    \mathfrak R_{\mathcal T}(\mu)
    =\sum_{Q\in\mathcal L(\mathcal T)}p_Q2^{-\depth(Q)/d}
    \le\left(\sum_{Q\in\mathcal L(\mathcal T)}p_Q2^{-\depth(Q)}\right)^{1/d}.
\end{equation}
\end{lemma}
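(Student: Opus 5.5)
The bound is Jensen's inequality for the concave map $t\mapsto t^{1/d}$ on $[0,\infty)$, applied with respect to the probability weights $p_Q$. Since the leaves of $\mathcal T$ partition $\Omega$ and $\mu$ is a probability measure, $\sum_{Q\in\mathcal L(\mathcal T)}p_Q=1$. The plan is to view $Q\mapsto 2^{-\depth(Q)}$ as a random variable under the law that picks leaf $Q$ with probability $p_Q$. Equivalently, using the integral form of the functional, we take the random variable $x\mapsto 2^{-K_{\mathcal T}(x)}$ under $x\sim\mu$.

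With this choice, $\mathfrak R_{\mathcal T}(\mu)=\E_{x\sim\mu}\bigl[(2^{-K_{\mathcal T}(x)})^{1/d}\bigr]$. Concavity of $t\mapsto t^{1/d}$ for $d\ge1$ then gives
\[
\E\bigl[(2^{-K_{\mathcal T}(x)})^{1/d}\bigr]\le\bigl(\E[2^{-K_{\mathcal T}(x)}]\bigr)^{1/d}.
\]
Rewriting the right-hand expectation as $\sum_{Q}p_Q2^{-\depth(Q)}$ yields \eqref{eq:leafwise-holder}. The equality in the statement is just the definition of $\mathfrak R_{\mathcal T}(\mu)$.

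Alternatively, one can apply Hölder's inequality with exponents $d$ and $d/(d-1)$ to the factorization $p_Q2^{-\depth(Q)/d}=(p_Q^{1/d}2^{-\depth(Q)/d})\,p_Q^{(d-1)/d}$. The second factor contributes $(\sum_Q p_Q)^{(d-1)/d}=1$. The case $d=1$ is trivially an equality.

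There is no real obstacle here. The only points to check are that the leaf masses sum to one, which holds because the leaves partition $\Omega$, and that the case $d=1$ is handled, where concavity degenerates to linearity.
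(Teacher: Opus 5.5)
Your proposal is correct and matches the paper's proof: the paper applies H\"older's inequality to exactly your factorization $p_Q^{1-1/d}\bigl(p_Q2^{-\depth(Q)}\bigr)^{1/d}$, uses $\sum_Q p_Q=1$, and notes that $d=1$ gives equality. Your primary Jensen argument for $t\mapsto t^{1/d}$ under the weights $p_Q$ is an equivalent restatement of the same inequality.
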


\begin{proof}
For $d=1$, the assertion is an equality. For $d>1$, H\"older's inequality gives
\[
\begin{split}
    \sum_{Q\in\mathcal L(\mathcal T)}p_Q2^{-\depth(Q)/d}
    &=\sum_{Q\in\mathcal L(\mathcal T)}p_Q^{1-1/d}
       \bigl(p_Q2^{-\depth(Q)}\bigr)^{1/d} \\
    &\le\left(\sum_{Q\in\mathcal L(\mathcal T)}p_Q\right)^{1-1/d}
       \left(\sum_{Q\in\mathcal L(\mathcal T)}p_Q2^{-\depth(Q)}\right)^{1/d} \\
    &=\left(\sum_{Q\in\mathcal L(\mathcal T)}p_Q2^{-\depth(Q)}\right)^{1/d}.
\end{split}
\]
\end{proof}

\subsection{Intrinsic dimensions}

The Wasserstein error for full-dimensional data has rate of order $n^{-1/d}$, which becomes slow when $d$ is large \cite{boedihardjo2024private}.  We therefore use the number of occupied cells to describe lower-dimensional structure.  At depth $k$, define
\[
    N_k^+(\mu)\coloneqq\#\{Q\in\mathcal D_k:\mu(Q)>0\}.
\]
This quantity records how many cells at geometric scale about $2^{-k/d}$ carry positive mass.  It can grow as fast as $2^k$ for a full-dimensional measure such as $\mathrm{Unif}([0,1]^d)$, but more slowly when the measure is concentrated near a lower-dimensional set.  We use the following assumption throughout the intrinsic-dimensional results.

\begin{assumption}[Intrinsic dimension]\label{ass:finite-scale-intrinsic-dimension}
Fix $A\ge1$, $0<s\le d$, and a cutoff parameter $R>A$, and set
\[
    K_s\coloneqq\left\lceil\frac d s\log_2\!\left(\frac RA\right)\right\rceil.
\]
We say that $\mu$ satisfies the finite-scale intrinsic-dimension assumption with parameters $(A,s,R)$ if
\begin{equation}\label{eq:finite-scale-intrinsic-dimension}
    N_k^+(\mu)\le A2^{sk/d}
    \qquad\forall\,0\le k\le K_s.
\end{equation}
\end{assumption}

In the assumption, the parameter $s$ describes the intrinsic dimension of $\mu$, while $A$ allows for a fixed geometric constant. The parameter $R$ determines how far down the tree the growth condition is required to hold: $K_s$ is the first depth at which $A2^{sk/d}$ reaches $R$, and no restriction is imposed beyond this depth.

\begin{remark}[Relation to Minkowski dimension]
    For a bounded set $S\subset\mathbb R^d$, let $N(S,r)$ be the smallest number of Euclidean balls of radius $r$ needed to cover $S$.  Its upper Minkowski dimension is
    \[
    \overline{\dim}_{\mathrm M}(S)
    \coloneqq\limsup_{r\downarrow0}
       \frac{\log N(S,r)}{\log(1/r)}.
    \]
\begin{enumerate}
    \item
    At depth $k$, our cells have scale about $2^{-k/d}$, but converting a covering number $N(S,2^{-k/d})$ to fixed-grid occupancy $N^+_k$ can lose a factor exponential in $d$: a small ball aligned with grid boundaries may meet $2^d$ cells, and incomplete blocks of coordinate splits cause a similar loss.  \Cref{sec:random-shift} replaces the fixed grid by a public random shift and reduces this covering-to-occupancy loss to a polynomial factor for fixed intrinsic dimension.  
    
    \item Unlike Minkowski dimension, which takes $r\downarrow0$, \Cref{ass:finite-scale-intrinsic-dimension} only requires this growth down to $r=2^{-K_s/d}$.

    \item If the data indeed lies on a compact $s$-dimensional smooth manifold $M\subset[0,1]^d$, then $M$ can be covered by at most $C_{M}r^{-s}$ radius-$r$ balls.  Comparing this covering with the grid gives
    \[
        N_k^+(\mu_n)\le \min\{n,A2^{sk/d}\}
    \]
    for a constant $A$ depending only on $M$.  Therefore the data satisfy \Cref{ass:finite-scale-intrinsic-dimension} over the required range of depths.  In the special case $M=[0,1]^s\times\{0\}^{d-s}$, one may take $A=2^s$ because $N_k^+(\mu_n)\le2^{s\lceil k/d\rceil}\le2^s2^{sk/d}$.
\end{enumerate}
\end{remark}

\section{Aggregation algorithms and Laplacian complexity}\label{sec:aggregation-laplacian}

We will consider various partitioning algorithms, whose outputs are the partition trees.
For a given tree, we record here how to transform it into a private synthetic measure.

\begin{algorithm}[H]
\caption{Simple aggregation}
\label{alg:exact-synthetic-measure}
\begin{algorithmic}[1]
\Require Probability measure $\mu$ and a partition tree $\mathcal T$
\ForAll{$Q\in\mathcal L(\mathcal T)$}
    \State Pick $y_Q\in Q$ arbitrarily and set $p_Q\gets\mu(Q)$.
\EndFor
\State \Return $\nu_{\mathcal T}\gets\sum_{Q\in\mathcal L(\mathcal T)}p_Q\delta_{y_Q}$.
\end{algorithmic}
\end{algorithm}

To guarantee the privacy, simply aggregating the leaf masses $p_Q$ is not sufficient.
For an empirical input $\mu_n$, the following algorithm applies the Laplace mechanism in \Cref{lem:laplace-mechanism} to the leaf masses.
The privacy of \Cref{alg:private-synthetic-measure} is ensured when the tree $\mathcal T$ is either public or released in a differentially private manner. 
In the latter case, we can assign $\eps/2$ privacy budget to the tree release and synthetic measure release each to obtain $\eps$-differential privacy.

\begin{algorithm}[H]
\caption{Private synthetic measure from a given tree}
\label{alg:private-synthetic-measure}
\begin{algorithmic}[1]
\Require Empirical measure $\mu_n$, released tree $\mathcal T$, and privacy budget $\eps>0$
\ForAll{$Q\in\mathcal L(\mathcal T)$}
    \State Choose $y_Q\in Q$ by a fixed data-independent rule.
\EndFor
\State Draw $\widetilde p_Q\sim\mu_n(Q)+\operatorname{Lap}(1/(\eps n))$ independently for every leaf $Q$.
\State Compute the closest probability measure $(\widehat p_Q)_{Q\in\mathcal L(\mathcal T)}$ by \Cref{eq:psmm-projection}.
\State \Return $\widehat\nu_{\mathcal T}\gets\sum_{Q\in\mathcal L(\mathcal T)}\widehat p_Q\delta_{y_Q}$.
\end{algorithmic}
\end{algorithm}

The projection in the last step, formally stated below, is to find the closest probability measure $(\widehat p_Q)_{Q\in\mathcal L(\mathcal T)}$ to the signed measure $(\widetilde p_Q)_{Q\in\mathcal L(\mathcal T)}$ under integral probability metrics with respect to $\mathcal F$. 
\begin{equation}\label{eq:psmm-projection}
    (\widehat p_Q)_{Q\in\mathcal L(\mathcal T)}
    \in\underset{\substack{p_Q\ge0\\ \sum_Q p_Q=1}}{\arg\min}
    \ \max_{\substack{|f_Q|\le\sqrt d\\
                       |f_Q-f_R|\le\|y_Q-y_R\|_2}}
       \sum_{Q\in\mathcal L(\mathcal T)}(\widetilde p_Q-p_Q)f_Q.
\end{equation}
This is the PSMM projection of \cite[Algorithm~2]{he2023algorithmically}; an explicit linear program is given in Appendix~\ref{app:psmm-projection}.

Under the structure of \Cref{alg:private-synthetic-measure}, the expected 1-Wasserstein error of the private synthetic measure has two components:
\[W_1\text{ error of synthetic data} \leq \text{resolution error} + \text{privacy noise error}.\]
Here the \emph{resolution error} comes from aggregating the leaf masses to a single representative in each leaf cell, and therefore depends on the tree structure. We will analyze it later in \Cref{sec:warmup,sec:soft-privtree}. 
The \emph{privacy noise error} comes from the Laplace noise $(\lambda_Q)_{Q\in \mathcal L(\mathcal T)}$, and we will analyze it via Laplacian complexity.

\subsection{Laplacian complexity}

For a data set $Z=(z_1,\ldots,z_m)\in\Omega^m$, define its Laplacian complexity and the corresponding worst-case Laplacian complexity by
\begin{equation}\label{eq:laplacian-complexity}
\begin{split}
    \mathfrak L(\mathcal F;Z)
    &\coloneqq
    \E\left[
       \sup_{f\in\mathcal F}
       \left|\frac1m\sum_{i=1}^m\lambda_i f(z_i)\right|
    \right],
    \\
    \mathfrak L_m(\mathcal F)
    &\coloneqq
    \sup_{|Z|=m}\mathfrak L(\mathcal F;Z),
\end{split}
\end{equation}
where $\lambda_1,\ldots,\lambda_m\sim\operatorname{Lap}(1)$ are independent, as in \cite[Definition~2]{he2023algorithmically}.

\begin{proposition}
\label{prop:private-leaf-mass-error}
Fix a given finite tree $\mathcal T$ with $m=|\mathcal L(\mathcal T)|$, and denote $Z_{\mathcal T}=(y_Q)_{Q\in\mathcal L(\mathcal T)}$ for the leaf representatives.
Then \Cref{alg:private-synthetic-measure} is $\eps$-differentially private and
\begin{equation}\label{eq:private-leaf-mass-error}
    \mathbb E\bigl[W_1(\nu_{\mathcal T},\widehat\nu_{\mathcal T})\bigr]
    \le \frac{2m}{\eps n}\mathfrak L(\mathcal F;Z_{\mathcal T})
    \le \frac{2m}{\eps n}\mathfrak L_m(\mathcal F)
    \le \frac{C\sqrt d}{\eps n}
    \begin{cases}
        m^{1/2}, & d=1,\\
        m^{1/2}\log(1+m), & d=2,\\
        m^{1-1/d}, & d\ge3.
    \end{cases}
\end{equation}
Here $C$ is universal, and the expectation is over the fresh leaf-mass noise.
\end{proposition}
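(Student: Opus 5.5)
Privacy comes first and is short. For a fixed tree $\mathcal T$ the leaves partition $\Omega$, so adding or removing one data point changes exactly one leaf count by one. Strictly, with add/remove adjacency the normalization $1/n$ also shifts; we follow the paper's convention of treating $n$ as public, so the vector $(\mu_n(Q))_Q$ has $\ell_1$-sensitivity $1/n$. \Cref{lem:laplace-mechanism} then gives that $(\widetilde p_Q)$ is $\eps$-differentially private. The choice of $y_Q$ is data-independent and the projection \Cref{eq:psmm-projection} is post-processing, so $\widehat\nu_{\mathcal T}$ inherits $\eps$-differential privacy.

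For the accuracy bound, write $\widetilde p_Q=p_Q+\lambda_Q/(\eps n)$ with $\lambda_Q\sim\operatorname{Lap}(1)$ i.i.d. Let $\widetilde\nu=\sum_Q\widetilde p_Q\delta_{y_Q}$ be the signed measure, and set $d_{\mathcal F}(\mu,\nu)=\sup_{f\in\mathcal F}|\int f\,d\mu-\int f\,d\nu|$.

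The key point is that on measures supported on $Z_{\mathcal T}$, the inner maximum in \Cref{eq:psmm-projection} is exactly $d_{\mathcal F}(\widetilde\nu,\nu)$. Indeed, any values $(f_Q)$ satisfying the constraints extend, by McShane extension followed by truncation at $\pm\sqrt d$, to some $f\in\mathcal F$. Conversely, every $f\in\mathcal F$ restricts to values satisfying the constraints. The absolute value is harmless because the constraint set is symmetric.

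Since $\nu_{\mathcal T}$ is itself a feasible probability vector, the minimizer satisfies $d_{\mathcal F}(\widetilde\nu,\widehat\nu_{\mathcal T})\le d_{\mathcal F}(\widetilde\nu,\nu_{\mathcal T})$. The triangle inequality then gives
\[
d_{\mathcal F}(\nu_{\mathcal T},\widehat\nu_{\mathcal T})\le 2\,d_{\mathcal F}(\widetilde\nu,\nu_{\mathcal T})=\frac{2}{\eps n}\sup_{f\in\mathcal F}\Bigl|\sum_Q\lambda_Qf(y_Q)\Bigr|.
\]
Both $\nu_{\mathcal T}$ and $\widehat\nu_{\mathcal T}$ are probability measures, so \Cref{eq:kr-duality} identifies the left side with $W_1(\nu_{\mathcal T},\widehat\nu_{\mathcal T})$. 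Taking expectations and factoring out $m$ produces $\frac{2m}{\eps n}\mathfrak L(\mathcal F;Z_{\mathcal T})$. The middle inequality of \eqref{eq:private-leaf-mass-error} is immediate from the definition of $\mathfrak L_m$.

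The last inequality needs the worst-case Laplacian complexity bounds
\[
\mathfrak L_m(\mathcal F)\lesssim\sqrt d\cdot\begin{cases} m^{-1/2}, & d=1,\\ m^{-1/2}\log(1+m), & d=2,\\ m^{-1/d}, & d\ge3.\end{cases}
\]
Multiplying by $m$ gives the stated rates. I would cite these from \cite{he2023algorithmically}, where they are proved exactly in this form, and check that their normalization of the domain and of $\mathcal F$ (including the $\sqrt d$ sup-norm) matches ours.

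If the bounds had to be proved here, I would use a multiscale chaining argument on the cyclic binary partition, and this is the main obstacle: controlling the dimension dependence to be only $\sqrt d$. The steps would be:
\begin{itemize}
\item Decompose $f$ via its averages over cells of $\mathcal D_k$.
\item At level $k$, bound the fluctuation $\E|\sum_{i\in Q}\lambda_i|\lesssim\sqrt{|Q\cap Z|}$, using that $f$ varies by at most $\diam(Q)\le2\sqrt d\,2^{-k/d}$ on each cell.
\item Apply Cauchy--Schwarz over at most $\min(2^k,m)$ occupied cells.
\item Sum the geometric series up to depth $\approx\log_2 m$.
\end{itemize}
The resulting sum $\sum_k 2^{-k/d}\sqrt{2^k m}$ up to $2^k\approx m$ is dominated by its last term $m^{1-1/d}$ when $d\ge3$. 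It is dominated by its first term $\sqrt m$ when $d=1$. For $d=2$ every term is comparable, which produces the logarithmic factor.
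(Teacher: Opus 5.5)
Your proposal is correct and follows the paper's proof essentially step for step: the Laplace mechanism with sensitivity $1/n$ plus post-processing for privacy, then optimality of the PSMM projection and the triangle inequality to get the factor $2$, then the definition of $\mathfrak L_m$ and the citation of \cite{he2023algorithmically} for the worst-case rates. Your McShane-extension argument identifying the inner maximum in \Cref{eq:psmm-projection} with the $\mathcal F$-distance makes explicit a step the paper uses implicitly. One small correction: the cited bounds are stated for the $\ell_\infty$ metric rather than ``exactly in this form,'' and the paper gets the $\sqrt d$ factor when importing them from $\|x-y\|_2\le\sqrt d\,\|x-y\|_\infty$.
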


We refer to \cite{he2023algorithmically} for the original proof.  To keep the presentation self-contained, we include the short argument below.

\begin{proof}
For fixed $\mathcal T$, the leaf-mass vector has $\ell_1$ sensitivity $1/n$, so \Cref{lem:laplace-mechanism} proves privacy after post-processing.  For accuracy, the triangle inequality and the optimality of the projection in \Cref{eq:psmm-projection} give
\begin{align*}
    W_1(\nu_{\mathcal T},\widehat\nu_{\mathcal T})
    &\le \sup_{f\in\mathcal F}\left|\sum_Q(p_Q-\widetilde p_Q)f(y_Q)\right|
       +\sup_{f\in\mathcal F}\left|\sum_Q(\widetilde p_Q-\widehat p_Q)f(y_Q)\right| \\
    &\le 2\sup_{f\in\mathcal F}\left|\sum_Q(\widetilde p_Q-p_Q)f(y_Q)\right|\\
    & = \frac{2}{\eps n}\sup_{f\in\mathcal F}\left|\sum_Q\lambda_Q f(y_Q)\right|.
\end{align*}
Since $\lambda_Q$ are independent $\operatorname{Lap}(1)$ variables, taking expectations proves the first inequality in \Cref{eq:private-leaf-mass-error}.  
The second follows from the definition of $\mathfrak L_m(\mathcal F)$, and the last follows from \cite[Corollary~4]{he2023algorithmically}.  Since $\|x-y\|_2\le\sqrt d\,\|x-y\|_\infty$, passing from the $\ell_\infty$ metric to the Euclidean metric costs at most the factor $\sqrt d$ displayed above; the remaining constant $C$ is universal.
\end{proof}

\subsection{Dimension dependence in Laplacian complexity}

The last bound in \Cref{eq:private-leaf-mass-error} uses the worst-case complexity $\mathfrak L_m(\mathcal F)$ and therefore does not use the geometry of the particular data set $Z$.  This is appropriate when the points fill the ambient space, but it can be loose when they are concentrated near a lower-dimensional set.
We now give a data-dependent bound that can exploit lower-dimensional structure of $Z$.

For $Z=(z_1,\ldots,z_m)\in\Omega^m$, let
\[
    \mu_Z\coloneqq\frac1m\sum_{i=1}^m\delta_{z_i}
\]
and define its effective occupancy at depth $k$ by
\begin{equation}\label{eq:effective-occupancy}
    N_k^{\mathrm{eff}}(Z)
    \coloneqq
    \left(\sum_{Q\in\mathcal D_k}
       \sqrt{\mu_Z(Q)}\right)^2.
\end{equation}
The identity $\sum_{Q\in \mathcal D_k} \mu_Z(Q)=1$ and the Cauchy--Schwarz inequality give
\begin{equation}\label{eq:effective-occupancy-comparison}
    1\le N_k^{\mathrm{eff}}(Z)\le N_k^+(\mu_Z).
\end{equation}
Indeed, we have
\[
    1
    =\left(\sum_{Q\in\mathcal D_k}\mu_Z(Q)\right)^2
    \le\left(\sum_{Q\in\mathcal D_k}\sqrt{\mu_Z(Q)}\right)^2
    =N_k^{\mathrm{eff}}(Z)
    \le N_k^+(\mu_Z)\left(\sum_{Q\in\mathcal D_k}\mu_Z(Q)\right)
    =N_k^+(\mu_Z).
\]

The next proposition gives us a data-dependent bound on the Laplacian complexity that allows us to improve the $O(m^{-1/d})$ dependence via capturing the lower-dimensional structure of $Z$.
\begin{proposition}
\label{prop:data-dependent-laplacian-complexity}
For every $Z\in\Omega^m$ and every integer $J\ge0$,
\begin{equation}\label{eq:laplacian-multiscale-bound}
\begin{split}
    \mathfrak L(\mathcal F;Z)
    \le{}& \sqrt{\frac{2d}{m}}
       +\frac{\sqrt{2d}}{\sqrt m}
         \sum_{j=1}^J2^{-(j-1)}\sqrt{N_{dj}^{\mathrm{eff}}(Z)}
       +\sqrt d\,2^{-J}.
\end{split}
\end{equation}
\end{proposition}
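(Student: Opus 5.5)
We prove the bound with a chaining argument along the cyclic dyadic partition, using only the levels $k=dj$, $j=0,1,\dots,J$. At these levels every depth-$dj$ cell is a cube of side $2^{-j}$ and diameter $\sqrt d\,2^{-j}$.

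\textbf{Multiscale decomposition.} For $f\in\mathcal F$ and $0\le j\le J$, let $\pi_j f$ be the function that is constant on each cell $Q\in\mathcal D_{dj}$ and equals $f(c_Q)$, where $c_Q$ is a fixed data-independent point of $Q$ (say its center). Write
\[
    f=\pi_0 f+\sum_{j=1}^J(\pi_j f-\pi_{j-1}f)+(f-\pi_J f).
\]
Substituting into $\frac1m\sum_i\lambda_i f(z_i)$ and taking the supremum term by term splits $\mathfrak L(\mathcal F;Z)$ into three kinds of contributions.

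\textbf{The three contributions.}

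1. \emph{Coarse term.} Since $\pi_0 f$ is a single constant bounded by $\sqrt d$,
\[
    \E\sup_f\Bigl|\tfrac1m\sum_i\lambda_i\pi_0f(z_i)\Bigr|\le\frac{\sqrt d}{m}\,\E\Bigl|\sum_i\lambda_i\Bigr|\le\frac{\sqrt d}{m}\sqrt{2m}=\sqrt{\frac{2d}{m}},
\]
by Jensen and $\operatorname{Var}\operatorname{Lap}(1)=2$.

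2. \emph{Fine remainder.} For every $i$, $|f(z_i)-\pi_Jf(z_i)|\le\diam\le\sqrt d\,2^{-J}$. Together with $\E|\lambda_i|=1$, this gives the last term $\sqrt d\,2^{-J}$.

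3. \emph{Chaining increments.} Fix $j\ge1$. The function $g_j=\pi_jf-\pi_{j-1}f$ is constant on each $Q\in\mathcal D_{dj}$, with $|g_j|_Q|\le\|c_Q-c_{\widehat Q}\|_2$, where $\widehat Q$ is the parent cube at level $d(j-1)$. This distance is at most $\sqrt d\,2^{-j}$ (center to center is $\tfrac12\sqrt d\,2^{-(j-1)}$). Hence
\[
    \Bigl|\tfrac1m\sum_i\lambda_ig_j(z_i)\Bigr|\le\frac{\sqrt d\,2^{-j}}{m}\sum_{Q\in\mathcal D_{dj}}\Bigl|\sum_{z_i\in Q}\lambda_i\Bigr|,
\]
uniformly in $f$. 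So the supremum over $f$ is removed before taking the expectation; no entropy or union bound is needed. Jensen gives $\E|\sum_{z_i\in Q}\lambda_i|\le\sqrt{2m\mu_Z(Q)}$, and therefore the level-$j$ term is at most
\[
    \sqrt d\,2^{-j}\sqrt{\tfrac2m}\sum_Q\sqrt{\mu_Z(Q)}=\sqrt{\tfrac{2d}{m}}\,2^{-j}\sqrt{N^{\mathrm{eff}}_{dj}(Z)}.
\]
This is even better than the stated factor $2^{-(j-1)}$. The slack absorbs any cruder choice, for example bounding $|g_j|$ by the diameter $\sqrt d\,2^{-(j-1)}$ of the parent cell, which would give exactly $2^{-(j-1)}$.

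\textbf{Where the care is needed.} The argument does not bound the $\lambda_i$ uniformly over $f$ inside each cell. It relies on the observation that, because $g_j$ is constant on cells and bounded by the cell scale, the supremum reduces to an $\ell_1$ sum of cell-wise noise aggregates. That reduction is what lets the effective occupancy replace $2^{dj}$.

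The remaining points are bookkeeping. Boundary-assignment conventions are harmless because each $z_i$ lies in exactly one cell per level. Summing the three contributions over $j=1,\dots,J$ yields the bound stated in the proposition.
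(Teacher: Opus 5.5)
Your proposal is correct and follows essentially the paper's proof: the same telescoping chain over the cube levels $dj$, the regrouping of each increment into cell sums $\Lambda_Q=\sum_{z_i\in Q}\lambda_i$ so that the triangle inequality removes the supremum, and the bound $\E|\Lambda_Q|\le\sqrt{2m\mu_Z(Q)}$. The one difference is that you anchor the chain at data-independent cell centers, whereas the paper uses representatives $y_Q\in Q\cap Z$ and bounds each increment by the parent diameter $\sqrt d\,2^{-(j-1)}$; your choice gives the slightly sharper factor $2^{-j}$. (The exact center-to-center distance is $\sqrt d\,2^{-(j+1)}$, not the $\tfrac12\sqrt d\,2^{-(j-1)}$ you wrote, but your upper bound remains valid.)
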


\begin{proof}
We will run a classical chaining argument to compute the expectation of the supremum.
For every occupied cell $Q\in\mathcal D_{dj}$, $0\le j\le J$, choose $y_Q\in Q\cap Z$.  For $z_i\in Z$, let $y_j(z_i)$ be the representative of its depth-$dj$ cell.  Every $f\in\mathcal F$ satisfies
\[
\begin{split}
    f(z_i)
    =f(y_0(z_i))
     +\sum_{j=1}^J\bigl(f(y_j(z_i))-f(y_{j-1}(z_i))\bigr)
     +f(z_i)-f(y_J(z_i)).
\end{split}
\]
The two representatives in the $j$th increment lie in the same depth-$d(j-1)$ cube, while $z_i$ and $y_J(z_i)$ lie in the same depth-$dJ$ cube.  Hence
\[
    |f(y_j(z_i))-f(y_{j-1}(z_i))|
    \le\sqrt d\,2^{-(j-1)},
    \qquad
    |f(z_i)-f(y_J(z_i))|
    \le\sqrt d\,2^{-J}.
\]
For every occupied cell $Q$, set $\Lambda_Q\coloneqq\sum_{i:z_i\in Q}\lambda_i$.  Since the Laplace variables have variance $2$,
\[
    \E|\Lambda_Q|
    \le\sqrt{\E[\Lambda_Q^2]}
    =\sqrt{2m\mu_Z(Q)}.
\]

Let $Q_0=\Omega$, and for $Q\in\mathcal D_{dj}$ with $j\ge1$, let $Q^-$ denote its depth-$d(j-1)$ ancestor.  Grouping the $j$th increment according to the cell containing $z_i$ gives
\[
    \sum_{i=1}^m\lambda_i
       \bigl(f(y_j(z_i))-f(y_{j-1}(z_i))\bigr)
    =\sum_{\substack{Q\in\mathcal D_{dj}\\\mu_Z(Q)>0}}
       \Lambda_Q\bigl(f(y_Q)-f(y_{Q^-})\bigr).
\]
Applying the triangle inequality across the root term, the $J$ increments, and the remainder yields
\begin{align*}
 \sup_{f\in\mathcal F}
   \left|\sum_{i=1}^m\lambda_i f(z_i)\right|
    &\;\le\;
    \sup_{f\in\mathcal F}|\Lambda_{Q_0}f(y_{Q_0})|
    +\sum_{j=1}^J
       \sup_{f\in\mathcal F}
       \left|
         \sum_{\substack{Q\in\mathcal D_{dj}\\\mu_Z(Q)>0}}
           \Lambda_Q\bigl(f(y_Q)-f(y_{Q^-})\bigr)
       \right| \\
 &\qquad\qquad\qquad\qquad\quad
    +\sup_{f\in\mathcal F}
       \left|\sum_{i=1}^m\lambda_i
         \bigl(f(z_i)-f(y_J(z_i))\bigr)\right|.
\end{align*}

We now bound the three types of suprema separately.  Since
$\|f\|_\infty\le\sqrt d$ for every $f\in\mathcal F$,
\[
    \frac1m\E\sup_{f\in\mathcal F}
       |\Lambda_{Q_0}f(y_{Q_0})|
    \le\frac{\sqrt d}{m}\E|\Lambda_{Q_0}|
    \le\sqrt{\frac{2d}{m}}.
\]
For a fixed $j$, another application of the triangle inequality gives
\begin{align*}
 \sup_{f\in\mathcal F}
   \left|
      \sum_{\substack{Q\in\mathcal D_{dj}\\\mu_Z(Q)>0}}
        \Lambda_Q\bigl(f(y_Q)-f(y_{Q^-})\bigr)
    \right|
 & \le
    \sum_{\substack{Q\in\mathcal D_{dj}\\\mu_Z(Q)>0}}
       |\Lambda_Q|
       \sup_{f\in\mathcal F}|f(y_Q)-f(y_{Q^-})| \\
 &\le
    \sqrt d\,2^{-(j-1)}
    \sum_{\substack{Q\in\mathcal D_{dj}\\\mu_Z(Q)>0}}
       |\Lambda_Q|.
\end{align*}
Consequently, using the bound on $\E|\Lambda_Q|$ above and the definition of $N_{dj}^{\mathrm{eff}}(Z)$,
\begin{align*}
 \frac1m\E\sup_{f\in\mathcal F}
   \left|
      \sum_{\substack{Q\in\mathcal D_{dj}\\\mu_Z(Q)>0}}
        \Lambda_Q\bigl(f(y_Q)-f(y_{Q^-})\bigr)
    \right|
 & \le
    \frac{\sqrt d\,2^{-(j-1)}}m
       \sum_{\substack{Q\in\mathcal D_{dj}\\\mu_Z(Q)>0}}
          \sqrt{2m\mu_Z(Q)} \\
 & =
    \frac{\sqrt{2d}}{\sqrt m}\,
       2^{-(j-1)}\sqrt{N_{dj}^{\mathrm{eff}}(Z)}.
\end{align*}
Finally, the same pointwise argument and $\E|\lambda_i|=1$ give
\begin{align*}
 \frac1m\E\sup_{f\in\mathcal F}
    \left|\sum_{i=1}^m\lambda_i
      \bigl(f(z_i)-f(y_J(z_i))\bigr)\right|
 \le
    \frac{\sqrt d\,2^{-J}}m
       \sum_{i=1}^m\E|\lambda_i|
    =\sqrt d\,2^{-J}.
\end{align*}
Combining the root, increment, and remainder bounds proves \Cref{eq:laplacian-multiscale-bound}.
\end{proof}

\begin{corollary}[Laplacian complexity with intrinsic dimension]
\label{cor:intrinsic-laplacian-complexity}
Suppose that $A\ge1$, $0<s\le d$, and $A<m$.  Set
\[
    K_s\coloneqq\left\lceil\frac d s\log_2\!\left(\frac mA\right)\right\rceil
\]
and assume that
\begin{equation}\label{eq:laplacian-power-law-profile}
    N_k^{\mathrm{eff}}(Z)\le A2^{sk/d}
    \qquad\forall\,0\le k<K_s.
\end{equation}
Then
\begin{equation}\label{eq:intrinsic-laplacian-rates}
    \mathfrak L(\mathcal F;Z)
    \le C_s\sqrt d
    \begin{cases}
       (A/m)^{1/2}, & 0<s<2,\\
       (A/m)^{1/2}\bigl(1+\log(m/A)\bigr), & s=2,\\
       (A/m)^{1/s}, & 2<s\le d.
    \end{cases}
\end{equation}
\end{corollary}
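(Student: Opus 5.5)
We plug the power-law profile into \Cref{prop:data-dependent-laplacian-complexity} and choose $J$ so that the chaining sum is cut off just before the depth $K_s$. The profile is only assumed at depths $k<K_s$, and the chaining sum only looks at depths $k=dj$. We therefore take
\[
J\coloneqq\max\{j\ge0: dj<K_s\},
\]
so that $dJ<K_s\le d(J+1)$. In particular $dJ\ge K_s-d$, and hence $J\ge \frac1s\log_2(m/A)-1$. This gives
\[
2^{-J}\le 2(A/m)^{1/s}.
\]

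For $1\le j\le J$ the hypothesis gives $\sqrt{N_{dj}^{\mathrm{eff}}(Z)}\le \sqrt A\,2^{sj/2}$. The $j$th chaining term is then at most
\[
2\sqrt{2d}\,\sqrt{A/m}\;2^{j(s/2-1)}.
\]
The root term is $\sqrt{2d/m}\le\sqrt{2d}\sqrt{A/m}$, using $A\ge1$.

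We then split into cases according to the sign of $s/2-1$.

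\textbf{Case $s<2$.} The geometric series $\sum_j 2^{j(s/2-1)}$ is bounded by $C_s=1/(1-2^{s/2-1})$. The remainder term $\sqrt d\,2^{-J}\lesssim\sqrt d\,(A/m)^{1/s}$ is at most $\sqrt d\,(A/m)^{1/2}$, because $A/m<1$ and $1/s>1/2$. The total is $C_s\sqrt d\,(A/m)^{1/2}$.

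\textbf{Case $s=2$.} Each summand is constant, so the sum is $J\le K_s/d\le \frac12\log_2(m/A)+1$. This produces the factor $1+\log(m/A)$, and the remainder is $2\sqrt d\,(A/m)^{1/2}$.

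\textbf{Case $s>2$.} The sum is dominated by its last term:
\[
\sum_{j\le J}2^{j(s/2-1)}\le C_s 2^{J(s/2-1)}.
\]
Using $dJ<K_s$ we get $J< \frac1s\log_2(m/A)+1$. Hence
\[
\sqrt{A/m}\,2^{J(s/2-1)}\le 2^{s/2}\sqrt{A/m}\,(m/A)^{1/2-1/s}=2^{s/2}(A/m)^{1/s}.
\]
The remainder term is of the same order $(A/m)^{1/s}$. The root term $\sqrt{A/m}$ is also at most $(A/m)^{1/s}$, since $1/s<1/2$ and $A/m<1$.

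The main thing to check is the bookkeeping of $J$ against $K_s$. We need $2^{-J}$ and $2^{J(s/2-1)}$ to be controlled simultaneously up to constants depending only on $s$, and not on $d$.

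The worry is that the ceiling in $K_s$ and the restriction of depths to multiples of $d$ could lose a factor $2^{d/s}$. This does not happen. The gap between $dJ$ and $K_s$ is less than $d$ levels, which is less than one unit of $j$, so it costs only a factor $2$ in $2^{-J}$ and a factor $2^{s/2}$ in the growth term.

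One must also handle the degenerate case $J=0$, which occurs when $K_s\le d$, i.e. when $(m/A)^{1/s}\le 2$. There the chaining sum is empty, and the bound reduces to $\sqrt{2d/m}+\sqrt d$. This is $\lesssim_s\sqrt d\,(A/m)^{\min(1/2,1/s)}$, because $A/m\ge 2^{-s}$ in this regime.
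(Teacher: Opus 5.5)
Your proof is correct and follows the paper's argument: your $J=\max\{j\ge0:dj<K_s\}$ equals the paper's $J_s=\lceil\frac1s\log_2(m/A)\rceil-1$, and your root, increment, and remainder bounds and the three-case geometric sum match the paper's. One optional sharpening: since $J$ is an integer, $dJ\le K_s-1<\frac ds\log_2(m/A)$ gives $J<\frac1s\log_2(m/A)$, which removes your extra factor $2^{s/2}$ when $s>2$ and recovers the paper's $C_s\le C(1+1/|s-2|)$; your separate $J=0$ case is already covered by the general bound $2^{-J}\le2(A/m)^{1/s}$.
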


\begin{proof}
Set
\[
    J_s\coloneqq\left\lceil\frac1s\log_2\!\left(\frac mA\right)\right\rceil-1.
\]
Since $dJ_s<K_s$, applying \Cref{prop:data-dependent-laplacian-complexity} and \Cref{eq:laplacian-power-law-profile} shows that the middle term in \Cref{eq:laplacian-multiscale-bound} is at most
\[
    C\sqrt{\frac{dA}{m}}\sum_{j=1}^{J_s}2^{(s/2-1)j},
\]
while $2^{-J_s}\le2(A/m)^{1/s}$ and $m^{-1/2}\le(A/m)^{1/2}$.  The geometric sum is bounded by a constant depending only on $s$ when $s<2$, has order $1+\log(m/A)$ when $s=2$, and is at most $C_s(A/m)^{1/s-1/2}$ when $s>2$.  This proves \Cref{eq:intrinsic-laplacian-rates}.
\end{proof}

\begin{remark}[Dependence on $s$]\label{rem:intrinsic-laplacian-s-dependence}
The proof shows that one may take
\[
    C_s\le C\left(1+\frac1{|s-2|}\right),
    \qquad s\ne2,
\]
and $C_2\le C$, where $C$ is universal.  The apparent blow-up as $s\to2$ comes from the geometric sum with ratio $2^{s/2-1}$; at the critical value $s=2$, its truncated length gives the logarithmic factor in \Cref{eq:intrinsic-laplacian-rates}.
\end{remark}

The $m^{-1/2}$ floor is unavoidable for the present function class.  Indeed, the constant function $f\equiv\sqrt d$ belongs to $\mathcal F$, so $\mathfrak L(\mathcal F;Z)\gtrsim\sqrt{d/m}$ for every $Z$.  Thus lower-dimensional structure improves the ambient $m^{-1/d}$ rate to $m^{-1/s}$ when $s>2$, while the rate saturates at the parametric scale below dimension $2$.  For an approximately uniform $d$-dimensional grid, $N_k^{\mathrm{eff}}(Z)$ grows like $2^k$ until it reaches $m$, so $s=d$ and the ambient rate is recovered.  More concentrated or uneven data can have a smaller effective occupancy and hence a better bound.

\section{Warm-up: Adaptive binary partition}\label{sec:warmup}

In optimal transportation theory, a hierarchical partitioning of the space is common. However, the uniform partition, even though it gives the optimal rate in the worst-case scenario, will often lead to the curse of high dimensionality (e.g. the Wasserstein error rate is $O(n^{-1/d})$ \cite{he2023algorithmically,boedihardjo2024private}). Therefore, we will take advantage of the intrinsic structure of the input data to improve the dimension dependence.

In this section, as a warm-up to the adaptive partitioning, we consider some naive algorithms without differential privacy requirement. Their models are relatively simple, but the analysis is instructive for understanding the private partition algorithms.

\subsection{Naive adaptive binary partition}

Fix a threshold $\theta\in(0,1]$.  Consider the following adaptive partitioning algorithm, which bisects any cell whose mass exceeds $\theta$ under a ground measure $\mu$.

\begin{algorithm}[H]
\caption{Naive adaptive binary partition}\label{alg:naive-partition}
\begin{algorithmic}[1]
\Require Probability measure $\mu$ on $[0,1]^d$ and a threshold $\theta\in(0,1]$
\State Initialize $\mathcal T_\theta$ with its root $\Omega$.
\State Initialize a queue $\mathcal H$ with $\mathcal H\gets\{\Omega\}$.
\While{$\mathcal H\ne\varnothing$}
    \State Pop the front cell $Q$ from $\mathcal H$.
    \If{$\mu(Q)>\theta$}
        \State Bisect $Q$ by the cyclic coordinate rule, obtaining $Q_0$ and $Q_1$, and add them to $\mathcal T_\theta$.
        \State Append $Q_0$ and $Q_1$ to the back of $\mathcal H$.
    \EndIf
\EndWhile
\State \Return $\mathcal T_\theta$.
\end{algorithmic}
\end{algorithm}

Assume for now that \Cref{alg:naive-partition} terminates, although this is not guaranteed in general. Then the leaves of $\mathcal T_\theta$ form a partition of $\Omega$. A capped version that forces termination by limiting the number of leaves is given in \Cref{app:capped-adaptive-partition}.
Apply \Cref{alg:exact-synthetic-measure} to $\mu$ and $\mathcal T_\theta$, and abbreviate its output by $\nu\coloneqq\nu_{\mathcal T_\theta}$.

\begin{theorem}
We have
\[
    W_1(\mu,\nu)\le 2\sqrt d\cdot \theta^{1/d}.
\]
Consequently, suppose that $\mu=\mu_n$ is the empirical measure associated with $n$ data points, and no point appears more than $c$ times in the data. Then \Cref{alg:naive-partition} terminates for $\theta=c/n$ and
\[
    W_1(\mu_n,\nu)
    \le 2\sqrt d\,c^{1/d}n^{-1/d}.
\]
\end{theorem}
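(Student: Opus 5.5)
The plan is to reduce everything to the resolution functional. By \Cref{lem:resolution-error}, $W_1(\mu,\nu)\le 2\sqrt d\,\mathfrak R_{\mathcal T_\theta}(\mu)$, and by \Cref{lem:leafwise-holder} it suffices to show
\[
    \sum_{Q\in\mathcal L(\mathcal T_\theta)} p_Q 2^{-\depth(Q)}\le\theta .
\]
The key observation is that every leaf $Q$ of $\mathcal T_\theta$ was popped from the queue and not split, so $p_Q=\mu(Q)\le\theta$. Hence the sum is at most $\theta\sum_{Q}2^{-\depth(Q)}=\theta$ by Kraft's identity \Cref{eq:kraft}. This gives $\mathfrak R_{\mathcal T_\theta}(\mu)\le\theta^{1/d}$ and so the first bound $W_1(\mu,\nu)\le2\sqrt d\,\theta^{1/d}$. (A cruder route, $\mathfrak R\le\sum p_Q 2^{-\depth(Q)/d}$ with $2^{-\depth(Q)}\le$ volume, is essentially the same estimate; Hölder is what converts the weighted volume sum into the $1/d$ power.)

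For the empirical consequence, the only real work is termination with $\theta=c/n$. A cell is split only if $\mu_n(Q)>c/n$, i.e.\ it contains more than $c$ data points counted with multiplicity. Since each location has multiplicity at most $c$, such a cell must contain at least two distinct data locations. Let $\delta>0$ be the minimum $\ell_\infty$ distance between distinct data locations (a finite set, so $\delta>0$). Depth-$k$ cells have side lengths at most $2^{-\lfloor k/d\rfloor}$ in every coordinate, so once $2^{-\lfloor k/d\rfloor}<\delta$ no cell contains two distinct locations. Then no cell at that depth is split, the tree has bounded depth, and the algorithm terminates with finitely many leaves partitioning $\Omega$.

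Plugging $\theta=c/n$ into the first part yields $W_1(\mu_n,\nu)\le2\sqrt d\,c^{1/d}n^{-1/d}$. The expected obstacle is only this termination step, in particular the boundary convention: cells are half-open under the fixed convention, so distinct points at positive distance are separated once the cell side drops below their separation. Everything else is a direct chaining of the two earlier lemmas with Kraft's identity.
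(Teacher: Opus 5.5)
Your proposal is correct and follows essentially the same proof as the paper: every leaf has $p_Q\le\theta$, so \Cref{lem:leafwise-holder} with Kraft's identity gives $\mathfrak R_{\mathcal T_\theta}(\mu)\le\theta^{1/d}$, and \Cref{lem:resolution-error} finishes. Termination likewise comes from the fact that at large depth each cell holds at most one distinct data location, hence mass at most $c/n$; your minimum-separation argument simply makes the paper's ``sufficiently large depth'' quantitative.
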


\begin{proof}
Every terminal cell has $p_Q\le\theta$.
Thus \Cref{lem:leafwise-holder} and Kraft's identity \Cref{eq:kraft} give directly
\[
    \mathfrak R_{\mathcal T_\theta}(\mu)\le\theta^{1/d}.
\]
The resolution-error bound in \Cref{lem:resolution-error} proves the claim.
For the empirical consequence, at a sufficiently large depth every cell contains copies of at most one distinct data point.  Its mass is then at most $c/n=\theta$, so no further cell is split. 
\end{proof}

Here the accuracy decreases with the threshold $\theta$, but meanwhile the number of pieces in the partition increases since we are refining it. Moreover, when $\theta$ is too small, the algorithm may not terminate.

\subsection{Intrinsic-dimensional bounds}

By \Cref{lem:resolution-error}, bounding $W_1$ reduces to controlling the resolution functional $\mathfrak R_{\mathcal T}(\mu)$.  We next express it through the stopping-depth distribution to capture the intrinsic dimension of $\mu$.
For the remainder of this subsection, for simplicity, let $\mathcal T$ be the naive adaptive partition tree from \Cref{alg:naive-partition} and assume that it terminates at every point.
Then the threshold rule gives
\[
    K_{\mathcal T}(x)=\inf\{k\ge0:\mu(Q_k(x))\le\theta\}.
\]
Define the stopping-depth distribution function, which is the cumulative distribution function of the stopping depth $K_{\mathcal T}(x)$ under $\mu$, by
\begin{equation}\label{eq:stopping-depth-distribution}
    F_{\mu,\mathcal T}(k)
    \coloneqq\mu(\{x:K_{\mathcal T}(x)\le k\})
    =\sum_{\substack{Q\in\mathcal D_k\\0<\mu(Q)\le\theta}}\mu(Q).
\end{equation}
So $F_{\mu,\mathcal T}(k)$ is also the $\mu$-mass of the terminal cells of depth at most $k$ in the naive adaptive partition tree.

\begin{proposition}\label{prop:stopping-profile}
Under the preceding assumptions,
\begin{equation}\label{eq:stopping-depth-distribution-formula}
    W_1(\mu,\nu)
    \le2\sqrt d\,(1-\alpha)
       \sum_{k=0}^{\infty}\alpha^kF_{\mu,\mathcal T}(k).
\end{equation}
In particular,
\begin{equation}\label{eq:occupied-cell-entropy}
    W_1(\mu,\nu)
    \le2\sqrt d\,(1-\alpha)
       \sum_{k=0}^{\infty}\alpha^k
       \min\{1,\theta N_k^+(\mu)\}.
\end{equation}
\end{proposition}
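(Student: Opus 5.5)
By \Cref{lem:resolution-error}, $W_1(\mu,\nu)\le 2\sqrt d\,\mathfrak R_{\mathcal T}(\mu)$. The first claim therefore reduces to writing $\mathfrak R_{\mathcal T}(\mu)=\E_{x\sim\mu}[\alpha^{K_{\mathcal T}(x)}]$ in terms of the distribution function $F_{\mu,\mathcal T}$, via summation by parts (a layer-cake identity).

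Since termination is assumed, $K=K_{\mathcal T}(x)$ is finite $\mu$-a.s. For any nonnegative integer $K$ we have the telescoping identity
\[
    \alpha^K=\sum_{k\ge K}(\alpha^k-\alpha^{k+1})=(1-\alpha)\sum_{k=0}^\infty\alpha^k\mathbf 1\{K\le k\}.
\]
We integrate this in $\mu$. All terms are nonnegative, so Tonelli's theorem justifies exchanging the sum and the integral, and we obtain
\[
    \mathfrak R_{\mathcal T}(\mu)=(1-\alpha)\sum_{k\ge0}\alpha^kF_{\mu,\mathcal T}(k).
\]
This equality gives \Cref{eq:stopping-depth-distribution-formula}. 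The identity $F_{\mu,\mathcal T}(k)=\sum_{Q\in\mathcal D_k,\,0<\mu(Q)\le\theta}\mu(Q)$ is justified as follows. The leaf containing $x$ has depth at most $k$ iff $\mu(Q_j(x))\le\theta$ for some $j\le k$. Cell masses are monotone along the chain of nested cells, so this holds iff $\mu(Q_k(x))\le\theta$. Hence $\{K\le k\}$ is the union of the depth-$k$ cells with mass at most $\theta$, up to $\mu$-null sets.

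For \Cref{eq:occupied-cell-entropy}, we bound $F_{\mu,\mathcal T}(k)$ pointwise in $k$. First, $F\le1$ trivially. Second, each summand in the cell representation of $F_{\mu,\mathcal T}(k)$ is at most $\theta$, and there are at most $N_k^+(\mu)$ such summands because each corresponds to a depth-$k$ cell of positive mass. Hence $F_{\mu,\mathcal T}(k)\le\theta N_k^+(\mu)$, so $F_{\mu,\mathcal T}(k)\le\min\{1,\theta N_k^+(\mu)\}$. Substituting this bound into the series finishes the proof.

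The main point needing care is the identification of $\{K_{\mathcal T}(x)\le k\}$ with a union of depth-$k$ cells. This step uses the fact that the naive rule splits exactly when the mass exceeds $\theta$, together with monotonicity of mass under refinement. It also needs boundary points handled by the fixed assignment convention, so that $Q_k(x)$ is well defined. The rest is bookkeeping.
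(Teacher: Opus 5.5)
Your proposal is correct and follows the paper's proof. The paper applies the weighted tail-sum formula $\mathbb E[f(K)]=\sum_{k\ge0}\bigl(f(k)-f(k+1)\bigr)\mathbb P(K\le k)$ with $f(k)=\alpha^k$; this is your telescoping identity integrated in $\mu$. It then bounds $F_{\mu,\mathcal T}(k)\le\min\{1,\theta N_k^+(\mu)\}$ exactly as you do.
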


\begin{proof}
Since the cells $Q_j(x)$ are nested, the sequence $\mu(Q_j(x))$ is nonincreasing in $j$.
Let $X\sim\mu$ and set $K=K_{\mathcal T}(X)$.

For any nonnegative, nonincreasing function $f$ on $\mathbb Z_{\ge0}$ satisfying $f(k)\to0$, the weighted tail-sum formula gives $\mathbb E[f(K)]=\sum_{k=0}^{\infty}\bigl(f(k)-f(k+1)\bigr)\mathbb P(K\le k)$.
When $f(k)=\alpha^k$, this gives
\[
    \mathfrak R_{\mathcal T}(\mu)
    =\mathbb E[\alpha^K]
    =(1-\alpha)\sum_{k=0}^{\infty}\alpha^k\mathbb P(K\le k)
    =(1-\alpha)\sum_{k=0}^{\infty}\alpha^kF_{\mu,\mathcal T}(k).
\]
Combining this identity with \Cref{lem:resolution-error} proves \Cref{eq:stopping-depth-distribution-formula}.
Finally, every summand in \Cref{eq:stopping-depth-distribution} is at most $\theta$, so
\[
    F_{\mu,\mathcal T}(k)\le\min\{1,\theta N_k^+(\mu)\}.
\]
This proves \Cref{eq:occupied-cell-entropy}.
\end{proof}

The exact value of $F_{\mu,\mathcal T}$ in \Cref{eq:stopping-depth-distribution-formula} is generally sharper than the occupied-cell bound with $N^+_k(\mu)$, because it retains the distribution of mass across cells.
Nevertheless, \Cref{eq:occupied-cell-entropy} provides a bound where the intrinsic dimension is explicit, as shown in the next corollary.

\begin{corollary}
\label{cor:power-law-covering}
Suppose that $A\theta<1$, $0<s\le d$, and \Cref{ass:finite-scale-intrinsic-dimension} holds for $\mu$ with $R=\theta^{-1}$.  Then
\begin{equation}\label{eq:intrinsic-rates}
    W_1(\mu,\nu)
    \le C_{d,s}
    \begin{cases}
       A\theta, & 0<s<1,\\
       A\theta\bigl(1+\log\frac{1}{A\theta}\bigr), & s=1,\\
       (A\theta)^{1/s}, & s>1,
    \end{cases}
\end{equation}
where $C_{d,s}$ depends only on $d$ and $s$.
\end{corollary}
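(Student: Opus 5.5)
We start from the occupied-cell bound \Cref{eq:occupied-cell-entropy} in \Cref{prop:stopping-profile}:
\[
    W_1(\mu,\nu)\le2\sqrt d\,(1-\alpha)\sum_{k=0}^{\infty}\alpha^k\min\{1,\theta N_k^+(\mu)\},
\]
and split the sum at the depth $K_s=\lceil\frac ds\log_2(1/(A\theta))\rceil$ from \Cref{ass:finite-scale-intrinsic-dimension} with $R=\theta^{-1}$. Note that $R>A$ holds because $A\theta<1$.

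\textbf{Tail.} For $k\ge K_s$ we use the trivial bound $\min\{1,\cdot\}\le1$. This gives
\[
    (1-\alpha)\sum_{k\ge K_s}\alpha^k=\alpha^{K_s}=2^{-K_s/d}\le(A\theta)^{1/s}.
\]

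\textbf{Head.} For $0\le k<K_s$ the assumption gives $\theta N_k^+(\mu)\le A\theta\,2^{sk/d}$. Since $\alpha^k=2^{-k/d}$, this part is at most
\[
    (1-\alpha)\,A\theta\sum_{k=0}^{K_s-1}2^{(s-1)k/d}.
\]
We treat the geometric sum in three cases.
\begin{enumerate}
\item $s<1$: the ratio is $<1$, so the sum is at most $(1-2^{-(1-s)/d})^{-1}$. This is a constant depending on $d,s$, and the bound is $C_{d,s}A\theta$. The tail $(A\theta)^{1/s}\le A\theta$ also fits, because $A\theta<1$ and $1/s>1$.
\item $s=1$: the sum equals $K_s\le d\log_2(1/(A\theta))+1$, which gives $A\theta(1+\log\frac1{A\theta})$ up to $C_d$. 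The tail is $A\theta$.
\item $s>1$: the sum is dominated by its last term, roughly
\[
    \frac{2^{(s-1)K_s/d}}{2^{(s-1)/d}-1}.
\]
Since $2^{K_s/d}\le2^{1/s}(A\theta)^{-1/s}$ (from the ceiling, with a factor $2^{1/d}$ which is harmless), this yields
\[
    A\theta\,(A\theta)^{-(s-1)/s}=(A\theta)^{1/s}
\]
times a constant depending on $d,s$.
\end{enumerate}

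Multiplying by $2\sqrt d(1-\alpha)$ and absorbing all factors into $C_{d,s}$ proves \Cref{eq:intrinsic-rates}. The main point needing care is the case $s>1$: we must check that the ceiling in $K_s$ only costs a constant factor. The bound $2^{(s-1)K_s/d}\le2^{(s-1)/d}\cdot2^{(s-1)/s\cdot\log_2(1/(A\theta))}$ handles this. The factor $1-\alpha\le1$ can simply be dropped, or kept to partially cancel the $1/(2^{(s-1)/d}-1)$ factor.
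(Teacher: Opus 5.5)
Your proposal is correct and follows essentially the same route as the paper. Like the paper, you start from \Cref{eq:occupied-cell-entropy}, split the sum at $K_s$, bound the tail by $\alpha^{K_s}\le(A\theta)^{1/s}$, and treat the geometric head sum in the three cases $s<1$, $s=1$, $s>1$, with the ceiling in $K_s$ costing only a factor $2^{(s-1)/d}=\alpha^{1-s}$. The paper differs only in keeping the factor $1-\alpha$ and using concavity of $t\mapsto 1-\alpha^t$ to get explicit constants such as $1/(1-s)$ and $\max\{1,(s-1)^{-1}\}$, which the statement does not require. Also, the ceiling factor in $2^{K_s/d}$ is really $2^{1/d}$, which is at most the $2^{1/s}$ you wrote, so your bound there still holds.
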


\begin{proof}
Since $\alpha=2^{-1/d}$, the stated value of $K_s$ is precisely
\[
    K_s=\min\big\{k\ge0:\,\alpha^{k}\le (A\theta)^{1/s}\big\}.
\]
By \Cref{lem:resolution-error}, it suffices to control $\mathfrak R_{\mathcal T}(\mu)$. Applying \Cref{eq:occupied-cell-entropy} and \Cref{ass:finite-scale-intrinsic-dimension} and splitting the sum at $K_s$ gives
\begin{equation}\label{eq:split-covering-sum}
    \mathfrak R_{\mathcal T}(\mu)
    \le (1-\alpha)A\theta\sum_{k=0}^{K_s-1}\alpha^{(1-s)k}+\alpha^{K_s}.
\end{equation}
If $0<s<1$, then
\[
    \sum_{k=0}^{K_s-1}\alpha^{(1-s)k}
    \le\frac1{1-\alpha^{1-s}},
    \qquad
    \alpha^{K_s}\le (A\theta)^{1/s}\le A\theta,
\]
and concavity of $t\mapsto1-\alpha^t$ on $[0,1]$ gives
\[
    \frac{1-\alpha}{1-\alpha^{1-s}}\le\frac1{1-s}.
\]
Thus \Cref{eq:split-covering-sum} is at most $2A\theta/(1-s)$.
If $s=1$, then
\[
    K_s\le1+d\log_2\!\left(\frac1{A\theta}\right),
    \qquad \alpha^{K_s}\le A\theta.
\]
Since $d(1-\alpha)\le\log 2$, \Cref{eq:split-covering-sum} is at most $2A\theta\bigl(1+\log(1/(A\theta))\bigr)$.
Finally, if $s>1$, then
\[
    \sum_{k=0}^{K_s-1}\alpha^{(1-s)k}
    \le\frac{\alpha^{(1-s)K_s}}{\alpha^{1-s}-1},
    \qquad
    A\theta\,\alpha^{(1-s)K_s}\le \alpha^{1-s}(A\theta)^{1/s},
    \qquad
    \alpha^{K_s}\le (A\theta)^{1/s}.
\]
Substitution into \Cref{eq:split-covering-sum} gives
\[
    \mathfrak R_{\mathcal T}(\mu)
    \le (A\theta)^{1/s}\left(1+\frac{1-\alpha}{1-\alpha^{s-1}}\right).
\]
The fraction in parentheses is at most $\max\{1,(s-1)^{-1}\}$, again by concavity when $1<s<2$ and by monotonicity when $s\ge2$.
Hence the entire parenthesis is at most $2\max\{1,(s-1)^{-1}\}$, proving the $s>1$ case of \Cref{eq:intrinsic-rates}.
\end{proof}

\begin{remark}[Dependence on $s$ near the critical dimension]
The proof shows that, without optimizing numerical factors, one may take
\begin{equation}
    C_{d,s}=4\sqrt d
    \begin{cases}
       \displaystyle \frac1{1-s}, & 0<s<1,\\[6pt]
       1, & s=1,\\[4pt]
       \displaystyle \max\left\{1,\frac1{s-1}\right\}, & s>1.
    \end{cases}
\end{equation}
The apparent blow-up as $s\to1$ comes from bounds that discard the truncation of the geometric sum in \Cref{eq:split-covering-sum}. For $s\ne1$ near $1$, the factor $|s-1|^{-1}$ can be replaced, up to a universal constant, by
\[
    1+\min\left\{\frac1{|s-1|},\log\frac1{A\theta}\right\}.
\]
Consequently, one can obtain a bound uniform as $s\to1$ by paying at most one logarithmic factor:
\[
    W_1(\mu,\nu)
    \le C_d\left(1+\log\frac1{A\theta}\right)
    \begin{cases}
       A\theta, & s\le1,\\
       (A\theta)^{1/s}, & s>1,
    \end{cases}
    \qquad \frac12\le s\le2.
\]
\end{remark}

\begin{example}[Empirical data]
Let $\mu=\mu_n$ be the empirical measure associated with $n$ data points.  Fix constants $c\ge1$, $A\ge1$ and set $\theta\coloneqq c/n$.
Suppose that $1<s\le d$ and \Cref{ass:finite-scale-intrinsic-dimension} holds for $\mu_n$ with $R=\theta^{-1}=n/c$.  Assume the naive adaptive partition algorithm terminates with output $\mathcal T$.
Then \Cref{cor:power-law-covering} with $A\theta=Ac/n$ gives
\[
    W_1(\mu_n,\nu_{\mathcal T})
    \le4\sqrt d\max\left\{1,\frac1{s-1}\right\}
       \left(\frac{Ac}{n}\right)^{1/s}.
\]

\end{example}


\section{Partition with soft thresholds}\label{sec:soft-privtree}

A depth-dependent threshold gives a way to ensure termination while allowing us to introduce private noise in the partitioning later.
Partition with soft thresholds can be viewed as the deterministic skeleton of the partition in PrivTree \cite{zhang2016privtree}: the mechanism compares the cell mass with the depth-dependent threshold $\theta+\depth(Q)\delta$.
We therefore split a cell $Q$ when
\[
    \mu(Q)>\theta+\depth(Q)\delta,
\]
where $\theta\in(0,1]$ and $\delta>0$.
We first consider the deterministic partition and corresponding resolution error, and later we will discuss the noisy split decisions in PrivTree.

\begin{algorithm}[H]
\caption{Adaptive binary partition with a soft threshold}
\label{alg:soft-threshold-partition}
\begin{algorithmic}[1]
\Require Probability measure $\mu$ on $[0,1]^d$, base threshold $\theta\in(0,1]$, and increment $\delta>0$
\State Initialize $\mathcal T_{\theta,\delta}$ with its root $\Omega$.
\State Initialize a queue $\mathcal H$ with $\mathcal H\gets\{\Omega\}$.
\While{$\mathcal H\ne\varnothing$}
    \State Pop the front cell $Q$ from $\mathcal H$ and set $k\gets\depth(Q)$.
    \If{$\mu(Q)>\theta+k\delta$}
        \State Bisect $Q$ by the cyclic coordinate rule, obtaining $Q_0$ and $Q_1$, and add them to $\mathcal T_{\theta,\delta}$.
        \State Append $Q_0$ and $Q_1$ to the back of $\mathcal H$.
    \EndIf
\EndWhile
\State \Return $\mathcal T_{\theta,\delta}$.
\end{algorithmic}
\end{algorithm}

\subsection{Termination, size, and Wasserstein error}

The increasing threshold makes termination automatic, including for measures with atoms.
This is the main qualitative difference from \Cref{alg:naive-partition}.

\begin{proposition}[Termination]
\label{prop:soft-threshold-termination}
Let $k_{\max}\coloneqq\left\lceil\frac{1-\theta}{\delta}\right\rceil$.
Then \Cref{alg:soft-threshold-partition} has depth at most $k_{\max}$ and hence terminates after finitely many splits.
More precisely, if $\theta<1$, then
\begin{equation}\label{eq:soft-number-terminal-cells}
\begin{split}
    \#\{\text{terminal cells}\}
    &\le 1+\sum_{k=0}^{k_{\max}-1}
       \min\left\{2^k,\frac1{\theta+k\delta}\right\}
    \le 1+\frac1\theta+\frac1\delta\log\left(\frac1\theta\right).
\end{split}
\end{equation}
\end{proposition}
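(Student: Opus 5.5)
The argument has three parts: a depth bound, a per-depth count of split cells, and a summation.

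\emph{Depth bound.} A cell $Q$ at depth $k$ is split only if $\mu(Q)>\theta+k\delta$. Since $\mu(Q)\le1$, a split at depth $k$ forces $\theta+k\delta<1$, that is, $k<(1-\theta)/\delta$, so $k\le k_{\max}-1$. Hence every node of $\mathcal T_{\theta,\delta}$ has depth at most $k_{\max}$. The tree is therefore a subtree of the complete binary partition truncated at depth $k_{\max}$, which is finite. So the queue empties after finitely many pops and the algorithm terminates.

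\emph{Counting terminal cells.} In a full binary tree, the number of leaves equals $1$ plus the number of internal (split) nodes. It therefore suffices to bound the number $S_k$ of split cells at each depth $k\le k_{\max}-1$.
\begin{itemize}
\item Trivially, $S_k\le\#\mathcal D_k=2^k$.
\item The depth-$k$ cells are disjoint, and each split one has mass exceeding $\theta+k\delta$. Their total mass is at most $1$, so $S_k(\theta+k\delta)<1$, giving $S_k\le 1/(\theta+k\delta)$.
\end{itemize}
Summing over $0\le k\le k_{\max}-1$ gives the first inequality in \Cref{eq:soft-number-terminal-cells}. The assumption $\theta<1$ guarantees $k_{\max}\ge1$, so the sum is nonempty and meaningful.

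\emph{The explicit bound.} Drop the $2^k$ term and bound the sum directly. The $k=0$ term is $1/\theta$. For $k\ge1$, the function $t\mapsto1/(\theta+t\delta)$ is decreasing, so each term is at most its integral over $[k-1,k]$:
\[
\sum_{k=1}^{k_{\max}-1}\frac1{\theta+k\delta}\le\int_0^{k_{\max}-1}\frac{dt}{\theta+t\delta}=\frac1\delta\log\frac{\theta+(k_{\max}-1)\delta}{\theta}.
\]
By definition of the ceiling, $(k_{\max}-1)\delta<1-\theta$, so $\theta+(k_{\max}-1)\delta<1$. The integral is therefore at most $\delta^{-1}\log(1/\theta)$, which gives the second inequality.

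The only step needing care is this last one. The integral must cover only the $k\ge1$ terms, with the $k=0$ term kept separately as $1/\theta$, and the upper limit must be checked to stay below $1$. Everything else is routine.
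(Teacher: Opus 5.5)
Your proposal is correct and follows the paper's proof essentially step for step: the depth bound from $\mu(Q)\le 1$, the count of leaves as one more than the number of split cells, the per-depth bound $\min\{2^k,1/(\theta+k\delta)\}$ from disjointness of depth-$k$ cells, and the integral comparison with the $k=0$ term kept separately as $1/\theta$. Your integral over $[0,k_{\max}-1]$, together with the check $\theta+(k_{\max}-1)\delta<1$, is only a slightly more careful version of the paper's integral over $[0,(1-\theta)/\delta]$.
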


\begin{proof}
At every depth $k\ge k_{\max}$, the threshold $\theta+k\delta\geq 1$.
Since $\mu(Q)\le1$, no such cell is split, which proves finite termination and the depth bound.

Note that the number of terminal cells is one more than the number of split cells.
Let $s_k$ be the number of split cells at depth $k$.
Such cells are disjoint and each has mass strictly larger than $\theta+k\delta$, so
\[
    s_k\le \min\left\{2^k,\frac1{\theta+k\delta}\right\}.
\]
Summing over the possible splitting depths gives the first inequality in \Cref{eq:soft-number-terminal-cells}.
The summand $(\theta+k\delta)^{-1}$ is decreasing, and every possible splitting depth satisfies $k<(1-\theta)/\delta$.
Therefore,
\[
    \sum_{\substack{0\leq  k<\frac{1-\theta}{\delta}}}
       \frac1{\theta+k\delta}
    \le \frac1\theta+
       \int_0^{(1-\theta)/\delta}\frac{dx}{\theta+x\delta}
    \le \frac1\theta+\frac1\delta\log\left(\frac1\theta\right).
\]
This proves the second inequality in \Cref{eq:soft-number-terminal-cells}.
\end{proof}

Having established termination and size, we next bound the Wasserstein error.
Let $\mathcal T=\mathcal T_{\theta,\delta}$ be the output of \Cref{alg:soft-threshold-partition} and let $\nu_{\theta,\delta}\coloneqq\nu_{\mathcal T}$ be the exact synthetic measure produced by \Cref{alg:exact-synthetic-measure}.
Write $\mathcal L=\mathcal L(\mathcal T)$ and denote
\[
    L_\theta\coloneqq\left\lceil\log_2\left(\frac1\theta\right)\right\rceil.
\]
The next theorem shows that the price of the increasing threshold is only a logarithmic factor multiplying $\delta$.

\begin{theorem}[Soft-threshold resolution error]
\label{thm:w1-soft-threshold}
For every probability measure $\mu$ on $[0,1]^d$, the exact synthetic measure associated with the output tree of \Cref{alg:soft-threshold-partition} satisfies
\begin{equation}
    W_1(\mu,\nu_{\theta,\delta})
    \le 2\sqrt d\,
       \left[\theta+\delta(L_\theta+2)\right]^{1/d}.
\end{equation}
\end{theorem}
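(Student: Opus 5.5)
The plan is to reduce the statement to a bound on $\sum_{Q\in\mathcal L}p_Q2^{-\depth(Q)}$ and then read off the Wasserstein error from the earlier lemmas. By \Cref{lem:resolution-error} and \Cref{lem:leafwise-holder},
\[
    W_1(\mu,\nu_{\theta,\delta})\le 2\sqrt d\,\mathfrak R_{\mathcal T}(\mu)
    \le 2\sqrt d\left(\sum_{Q\in\mathcal L}p_Q2^{-\depth(Q)}\right)^{1/d}.
\]
It therefore suffices to prove
\[
    \sum_{Q\in\mathcal L}p_Q2^{-\depth(Q)}\le\theta+\delta(L_\theta+2).
\]

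\textbf{Step 1: the stopping rule.} A leaf $Q$ at depth $k$ was not split, so $p_Q\le\theta+k\delta$. Summing and using Kraft's identity \Cref{eq:kraft} gives
\[
    \sum_{Q\in\mathcal L}p_Q2^{-\depth(Q)}
    \le\theta\sum_{Q\in\mathcal L}2^{-\depth(Q)}+\delta\sum_{Q\in\mathcal L}\depth(Q)2^{-\depth(Q)}
    =\theta+\delta\,S,
\]
where $S\coloneqq\sum_{Q\in\mathcal L}\depth(Q)2^{-\depth(Q)}$. The sum is finite because the tree has depth at most $k_{\max}$ by \Cref{prop:soft-threshold-termination}. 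The remaining task is to show $S\le L_\theta+2$.

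\textbf{Step 2: bounding $S$ (the main obstacle).} The naive bound $\depth(Q)\le k_{\max}$ is useless here, since it would give a factor of order $1/\delta$. Instead I will interpret $S$ as the expected leaf depth $\E[K_{\mathcal T}(U)]$ for $U$ uniform (Lebesgue) on $\Omega$, since each depth-$k$ cell has volume $2^{-k}$. The tail-sum formula then gives
\[
    S=\sum_{j\ge0}\mathbb P\bigl(K_{\mathcal T}(U)>j\bigr)=\sum_{j\ge0}\operatorname{vol}(U_j),
\]
where $U_j$ is the union of the cells split at depth $j$. Each split cell at depth $j$ has mass greater than $\theta+j\delta\ge\theta$, and these cells are disjoint, so there are fewer than $1/\theta$ of them. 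Each has volume $2^{-j}$, hence
\[
    \operatorname{vol}(U_j)\le\min\{1,\,2^{-j}/\theta\}.
\]

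\textbf{Step 3: summing.} Use the bound $1$ for $j<L_\theta$ and the geometric bound for $j\ge L_\theta$. Since $2^{-L_\theta}\le\theta$,
\[
    S\le L_\theta+\frac1\theta\sum_{j\ge L_\theta}2^{-j}=L_\theta+\frac{2\cdot2^{-L_\theta}}{\theta}\le L_\theta+2.
\]
Substituting into Step 1 and then into the opening inequality proves the theorem.
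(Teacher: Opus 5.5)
Your proposal is correct and follows the paper's proof step for step. You reduce via \Cref{lem:resolution-error,lem:leafwise-holder} to bounding $\sum_{Q}p_Q2^{-\depth(Q)}$, then apply $p_Q\le\theta+\depth(Q)\delta$ together with Kraft's identity. Finally you bound the expected leaf depth under the uniform measure by the tail sum $\sum_k\min\{1,2^{-k}/\theta\}$, split at $L_\theta$, which gives at most $L_\theta+2$.
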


\begin{proof}
By \Cref{lem:resolution-error,lem:leafwise-holder}, it suffices to prove
\begin{equation}
    \label{eq:R_bound_soft}
    \sum_{Q\in\mathcal L}p_Q2^{-\depth(Q)}
    \le \theta+\delta(L_\theta+2).
\end{equation}
Every leaf $Q$ satisfies $p_Q\le\theta+\depth(Q)\delta$.  Let $U$ be uniform on $[0,1]^d$. Since $\mathcal L$ is a partition of $[0,1]^d$, the tail-sum formula therefore gives
\[
\begin{aligned}
    \sum_{Q\in\mathcal L}\depth(Q)2^{-\depth(Q)}
    =\mathbb E[K_{\mathcal T}(U)]
    & = \sum_{k=0}^{\infty}\mathbb P(K_{\mathcal T}(U)>k) \\
    & \le\sum_{k=0}^{\infty}\min\left\{1,\frac{2^{-k}}\theta\right\}
    \le L_\theta+2.
\end{aligned}
\]
The first inequality holds because the event $\{K_{\mathcal T}(U)>k\}$ is the union of the split cells at depth $k$, of which there are at most $1/\theta$.
The last inequality follows by splitting the sum at $L_\theta$: the first $L_\theta$ terms are at most $1$, while $2^{-L_\theta}\le\theta$ implies $\theta^{-1}\sum_{k=L_\theta}^{\infty}2^{-k}\le2$.

Substituting this bound into the left-hand side of \Cref{eq:R_bound_soft} gives the desired result.
\end{proof}

\paragraph{Empirical measures and intrinsic dimension.}
For the soft-threshold tree, monotonicity gives $K_{\mathcal T}(x)\le k$ if and only if $\mu(Q_k(x))\le\theta+k\delta$.  Thus the stopping-depth distribution is
\begin{equation}\label{eq:soft-stopping-profile}
    F_{\mu,\mathcal T}(k)
    =\sum_{\substack{Q\in\mathcal D_k\\
          0<\mu(Q)\le\theta+k\delta}}\mu(Q).
\end{equation}
The weighted tail-sum identity from the proof of \Cref{prop:stopping-profile}, together with \Cref{eq:soft-stopping-profile}, gives
\begin{equation}\label{eq:soft-stopping-profile-bound}
\begin{split}
    \mathfrak R_{\mathcal T}(\mu)
    &=(1-\alpha)
       \sum_{k=0}^{\infty}\alpha^kF_{\mu,\mathcal T}(k) \\
    &\le(1-\alpha)
       \sum_{k=0}^{\infty}\alpha^k
       \min\{1,(\theta+k\delta)N_k^+(\mu)\}.
\end{split}
\end{equation}

\begin{corollary}
Let $\mu=\mu_n$ be the empirical measure associated with $n$ data points where $n$ is sufficiently large. Fix $b,c>0$ as absolute constants, set
\begin{equation}
    \theta=\frac{c\log n}{n},
    \qquad
    \delta=\frac bn,
\end{equation}
and let $\mathcal T$ be the output of \Cref{alg:soft-threshold-partition}.
\begin{enumerate}
    \item There is
    \[
        W_1(\mu_n,\nu_{\mathcal T})
        \le C\sqrt d\left(\frac{\log n}{n}\right)^{1/d},
        \qquad
        |\mathcal L(\mathcal T)|\le Cn\log n.
    \]

    \item Moreover, if for some fixed $A\ge1$ and $1<s\le d$, \Cref{ass:finite-scale-intrinsic-dimension} holds for $\mu_n$ with $R=n/\log n$, then
    \[
        W_1(\mu_n,\nu_{\mathcal T})
        \le C_{d,s}\left(\frac{A\log n}{n}\right)^{1/s}.
    \]
\end{enumerate}
\end{corollary}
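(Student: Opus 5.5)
For part~1, I will substitute the chosen parameters into \Cref{thm:w1-soft-threshold} and \Cref{prop:soft-threshold-termination}. Since $n$ is large, $\theta=c\log n/n<1$ and $L_\theta=\lceil\log_2(n/(c\log n))\rceil\le C\log n$. The theorem then gives
\[
    W_1(\mu_n,\nu_{\mathcal T})\le 2\sqrt d\,\bigl[\theta+\delta(L_\theta+2)\bigr]^{1/d}
    \le 2\sqrt d\,\Bigl[\frac{c\log n}{n}+\frac{b(C\log n+2)}{n}\Bigr]^{1/d}
    \le C\sqrt d\Bigl(\frac{\log n}{n}\Bigr)^{1/d},
\]
because $(\text{const})^{1/d}$ is bounded by a constant. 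For the leaf count, \Cref{eq:soft-number-terminal-cells} gives
\[
    |\mathcal L(\mathcal T)|\le 1+\frac{n}{c\log n}+\frac nb\log n\le Cn\log n .
\]
This part is routine.

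For part~2, I will start from \Cref{eq:soft-stopping-profile-bound}, which combined with \Cref{lem:resolution-error} gives
\[
    W_1\le 2\sqrt d(1-\alpha)\sum_k\alpha^k\min\{1,(\theta+k\delta)N_k^+(\mu_n)\}.
\]
The idea is to reduce to the argument of \Cref{cor:power-law-covering} with an effective threshold $\theta'$ that is constant in $k$. I will take $K_s=\lceil\frac ds\log_2(R/A)\rceil$ with $R=n/\log n$, so that $\alpha^{K_s}\le(A/R)^{1/s}=(A\log n/n)^{1/s}$; this requires $A<n/\log n$, which holds for $n$ large. I will split the sum at $K_s$. For $k\ge K_s$ I use $\min\{1,\cdot\}\le1$, and the tail contributes $(1-\alpha)\sum_{k\ge K_s}\alpha^k=\alpha^{K_s}$. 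For $k<K_s$ I use \Cref{ass:finite-scale-intrinsic-dimension} together with $k\le K_s\le C_{d,s}\log n$, which gives
\[
    \theta+k\delta\le \frac{(c+bC_{d,s})\log n}{n}\eqqcolon\theta'.
\]
The head is then at most $(1-\alpha)A\theta'\sum_{k<K_s}\alpha^{(1-s)k}$. This is exactly \Cref{eq:split-covering-sum} with $\theta$ replaced by $\theta'$, except that $K_s$ is defined through $R=n/\log n$ rather than $1/\theta'$.

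The step I expect to need the most care is reconciling these two scales. The geometric sum is bounded by $\alpha^{(1-s)K_s}/(\alpha^{1-s}-1)$, and
\[
    \alpha^{(1-s)K_s}\le \alpha^{1-s}(R/A)^{(s-1)/s},
\]
since $\alpha^{K_s-1}>(A/R)^{1/s}$. Hence the head is at most
\[
    C_{d,s}\,A\theta'\,(R/A)^{1-1/s}
    =C_{d,s}\,\theta' R\,(A/R)^{1/s}
    \le C_{d,s}(c+bC_{d,s})\Bigl(\frac{A\log n}{n}\Bigr)^{1/s},
\]
using $\theta' R=c+bC_{d,s}$. The factor $(1-\alpha)/(1-\alpha^{s-1})$ is bounded by $\max\{1,(s-1)^{-1}\}$ exactly as in \Cref{cor:power-law-covering}. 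Adding the tail $\alpha^{K_s}\le(A\log n/n)^{1/s}$ and the prefactor $2\sqrt d$ gives the claimed bound
\[
    W_1(\mu_n,\nu_{\mathcal T})\le C_{d,s}\Bigl(\frac{A\log n}{n}\Bigr)^{1/s}.
\]
Throughout, $b$ and $c$ are absolute constants and are absorbed into $C_{d,s}$.
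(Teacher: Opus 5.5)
Your proposal is correct and follows essentially the same route as the paper. For part~1 you substitute the parameters into \Cref{thm:w1-soft-threshold} and \Cref{eq:soft-number-terminal-cells}; for part~2 you split \Cref{eq:soft-stopping-profile-bound} at $K_s$, use $k\delta\le C_{d,s}\log n/n$ below $K_s$, and reuse the geometric-sum estimate from \Cref{cor:power-law-covering}. Your check that $\theta' R$ is a constant fills in the step the paper compresses into ``as in the $s>1$ case,'' reconciling $K_s$ (defined through $R=n/\log n$) with the effective threshold $\theta'$.
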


\begin{proof}
Since $b$ and $c$ are absolute constants, $L_\theta\le C\log n$ and
\[
    \theta+\delta(L_\theta+2)\le C\frac{\log n}{n}.
\]
Thus \Cref{thm:w1-soft-threshold} gives the first bound in part~1, with the displayed dependence on $d$. Moreover, \Cref{eq:soft-number-terminal-cells} gives
\[
    |\mathcal L(\mathcal T)|
    \le 1+\frac1\theta+\frac1\delta\log\left(\frac1\theta\right)
    \le Cn\log n.
\]
For part~2, if $k<K_s$, then $k=O_{d,s}(\log n)$ and hence
\[
    (\theta+k\delta)N_k^+(\mu_n)
    \le C_{d,s}\frac{A\log n}{n}\alpha^{-sk}.
\]
For $k\ge K_s$, the minimum in \Cref{eq:soft-stopping-profile-bound} is at most $1$, so the tail is bounded by $(1-\alpha)\sum_{k=K_s}^{\infty}\alpha^k=\alpha^{K_s}$.  Splitting the sum at $K_s$ therefore gives
\[
    \mathfrak R_{\mathcal T}(\mu_n)
    \le C_{d,s}(1-\alpha)\frac{A\log n}{n}
       \sum_{k=0}^{K_s-1}\alpha^{(1-s)k}+\alpha^{K_s}
    \le C_{d,s}\left(\frac{A\log n}{n}\right)^{1/s},
\]
as in the $s>1$ case of \Cref{cor:power-law-covering}. The second assertion follows from \Cref{lem:resolution-error}, with its factor $2\sqrt d$ absorbed into $C_{d,s}$.
\end{proof}

\subsection{Accuracy of the PrivTree partition}

We now write the noisy split decisions of PrivTree \cite[Algorithm~2]{zhang2016privtree} and consider the new resolution error after the private noise is introduced.

\begin{algorithm}[H]
\caption{Binary PrivTree partition}
\label{alg:binary-privtree}
\begin{algorithmic}[1]
\Require Empirical measure $\mu_n$ on $[0,1]^d$, base threshold $\theta\in[0,1)$, increment $\delta>0$, and noise scale $\sigma>0$
\State Initialize $\mathcal T_{\priv}$ with its root $\Omega$.
\State Initialize a queue $\mathcal H$ with $\mathcal H\gets\{\Omega\}$.
\While{$\mathcal H\ne\varnothing$}
    \State Pop the front cell $Q$ from $\mathcal H$ and set $k\gets\depth(Q)$.
    \State Set $p_Q\gets\mu_n(Q)$.
    \State Draw an independent $\xi_Q\sim\operatorname{Lap}(\sigma)$ and set
    \[\widehat p_Q\gets\max\{\theta+(k-1)\delta,p_Q\}+\xi_Q.\]
    \If{$\widehat p_Q>\theta+k\delta$}
        \State Bisect $Q$ by the cyclic coordinate rule, obtaining $Q_0$ and $Q_1$, and add them to $\mathcal T_{\priv}$.
        \State Append $Q_0$ and $Q_1$ to the back of $\mathcal H$.
    \EndIf
\EndWhile
\State \Return $\mathcal T_{\priv}$ with all cell masses removed.
\end{algorithmic}
\end{algorithm}

Let $\nu_{\mathcal T_{\priv}}$ denote the oracle synthetic measure obtained by applying \Cref{alg:exact-synthetic-measure} to $\mu_n$ and $\mathcal T_{\priv}$. Note that it uses the exact leaf masses and is not itself a private release.

\begin{proposition}[Termination of PrivTree]
\label{prop:privtree-termination}
\Cref{alg:binary-privtree} terminates after finitely many splits almost surely.
Moreover, if $m=|\mathcal L(\mathcal T_{\priv})|$, then
\begin{equation}
    \mathbb E[m]
    \le (4+\delta^{-1})
       \frac{1-\frac12e^{-\delta/\sigma}}
            {1-e^{-\delta/\sigma}}.
\end{equation}
\end{proposition}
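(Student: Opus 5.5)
We count nodes depth by depth. The number of leaves equals one plus the number of split nodes. So it suffices to bound the expected number of split nodes $\mathbb E[S]$, where $S=\sum_k S_k$ and $S_k$ counts split cells at depth $k$. The key quantity is the biased count $\max\{\theta+(k-1)\delta,p_Q\}$, which is always at least $\theta+(k-1)\delta$. We split the cells $Q$ that are \emph{reached} by the algorithm (i.e.\ whose parent was split) into two classes: heavy cells with $p_Q>\theta+(k-1)\delta$, and light cells with $p_Q\le\theta+(k-1)\delta$.

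\textbf{Heavy cells.} At each depth the cells are disjoint, so there are at most $1/(\theta+(k-1)\delta)$ heavy cells at depth $k$. Moreover, heavy cells at depth $k\ge1+(1-\theta)/\delta$ cannot exist. To avoid dependence on $\theta$, we will instead use a path argument. Every heavy cell has all its ancestors heavy, because mass is monotone and thresholds increase. Any reached cell is either heavy or has a last heavy ancestor (or the root). So we charge each split node to the deepest heavy ancestor-or-self, call it $H$. The heavy cells at depth $k$ are disjoint with masses $>\theta+(k-1)\delta\ge (k-1)\delta$, so their count is at most $\min\{2^k,1/((k-1)\delta)\}$. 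Summed over depths this only gives a $\log$ bound, so instead we count differently: the heavy cells form a subtree, and its internal-heavy structure should be counted via leaves of the heavy subtree. Heavy cells with at least one heavy child form chains, and the heavy tree's "boundary" heavy cells number at most about $2+\delta^{-1}$ by a mass argument along depths. Concretely, I would aim to show that the expected number of heavy-tree nodes from which light subtrees hang is at most $4+\delta^{-1}$. Pinning down this combinatorial count is the main obstacle, and I would try first charging each heavy cell to the depth-increment $\delta$ of mass it "consumes" relative to its parent.

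\textbf{Light cells.} For a light cell $Q$ at depth $k$, the split probability is $\mathbb P(\xi_Q>\delta)=\tfrac12e^{-\delta/\sigma}$. All descendants of a light cell are also light, since mass is nonincreasing while thresholds increase. The light subtree hanging below any reached cell is therefore a Galton–Watson tree in which each node splits independently with probability $q=\tfrac12 e^{-\delta/\sigma}$ into 2 children. Its expected total node count is $\sum_j (2q)^j=1/(1-e^{-\delta/\sigma})$, which also gives almost-sure termination, since $2q<1$. The expected number of split nodes inside it is $q/(1-e^{-\delta/\sigma})$.

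\textbf{Combining.} Each heavy cell, split or not, contributes itself plus up to two light subtrees rooted at light children. This yields $\mathbb E[m]\le(\#\text{relevant heavy cells})\cdot\bigl(1+\tfrac{\cdot}{1-e^{-\delta/\sigma}}\bigr)$. After rearranging,
\[
\frac{1-\frac12e^{-\delta/\sigma}}{1-e^{-\delta/\sigma}}=1+\frac{\frac12e^{-\delta/\sigma}}{1-e^{-\delta/\sigma}},
\]
which is exactly one plus the expected number of split nodes in a light subtree, i.e.\ the expected number of leaves per light root. So the target is: the expected number of "root slots" (heavy cells together with light roots, counted appropriately) is at most $4+\delta^{-1}$, and each contributes at most the displayed factor in expected leaves. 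Finiteness of the heavy part follows from the depth bound $k\le 1+(1-\theta)/\delta$, and with the light subtrees this gives almost-sure termination.
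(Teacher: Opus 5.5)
Your light-cell analysis is correct: light cells have only light descendants, they form subcritical Galton--Watson subtrees with split probability $q=\tfrac12e^{-\delta/\sigma}$, and each such subtree has $(1-q)/(1-2q)$ expected leaves. The genuine gap is at exactly the step you flag as the main obstacle. You never bound the number of heavy cells, and that bound is the real content of the proposition.

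Your suggested route, charging each heavy cell to the mass it ``consumes'' relative to its parent, cannot work, because heavy chains need not lose mass. If all data sit at one point, the heavy cells form a single chain of length about $(1-\theta)/\delta$, and every cell in it has mass $1$.

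The missing idea is a path-integration count over the full, deterministic hierarchy:
\begin{itemize}
\item Fix $x$ and let $K\ge2$ be the deepest depth with $Q_K(x)$ heavy. Heaviness is inherited by ancestors, so the heavy cells on the path at depths $\ge2$ are exactly $Q_2(x),\dots,Q_K(x)$.
\item Each of these has mass at least $\mu_n(Q_K(x))>(K-1)\delta$. Hence $\sum_{k\ge2:\,Q_k(x)\text{ heavy}}\mu_n(Q_k(x))^{-1}\le (K-1)/\mu_n(Q_K(x))<\delta^{-1}$.
\item For every heavy (hence nonempty) cell $Q$ we have $\int_Q\mu_n(Q)^{-1}\,d\mu_n=1$. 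So integrating the pointwise bound against $\mu_n$ counts each heavy cell of depth $\ge2$ exactly once, giving at most $\delta^{-1}$ of them.
\item Adding the at most $3$ heavy cells at depths $0$ and $1$ gives $H\le 3+\delta^{-1}$. This count is deterministic, not an expectation.
\end{itemize}

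The combination step also has to be made precise to get the constant $4+\delta^{-1}$. Saying each heavy cell ``contributes itself plus up to two light subtrees'' overcounts: a split heavy cell is not a leaf, and allowing two light roots per heavy cell gives up to $2H$ root slots rather than $H+1$. Instead, stop each branch at its first light cell. The resulting binary tree has only heavy internal nodes, so it has at most $H$ internal nodes and hence at most $H+1\le 4+\delta^{-1}$ leaves.

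Each such leaf is one of two kinds:
\begin{itemize}
\item an unsplit heavy cell, contributing one leaf of $\mathcal T_{\priv}$;
\item a light root, whose subtree has $(1-q)/(1-2q)\ge1$ expected leaves, because its noise is independent of the noise that determines the truncated tree.
\end{itemize}
This gives the stated bound, and finiteness of $\mathbb E[m]$ yields almost-sure termination.
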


\begin{proof}
For every cell $Q\in \mathcal D_k$, set $p_Q=\mu_n(Q)$ and call it heavy if $p_Q>\theta+(k-1)\delta$.  There are at most 3 heavy cells at depths zero and one.  To count those at greater depths, fix $x$ and let $Q_K(x)$ where $K\ge2$ be the deepest heavy cell on its path, if one exists.  All its ancestors are heavy, and
\[
    \sum_{\substack{k\ge2:\ Q_k(x)\text{ is heavy}}}
       \frac1{p_{Q_k(x)}}
    \le \frac{K-1}{p_{Q_K(x)}}
    <\frac1\delta.
\]
Since every heavy cell has positive mass, integration with respect to $\mu_n$ gives
\[
    \#\{Q:\depth(Q)\ge2,\ Q\text{ is heavy}\}
    =\sum_{\substack{k\ge2:\ Q_k(x)\text{ is heavy}}}\int_\Omega
          \frac1{p_{Q_k(x)}}\,d\mu_n(x)
    \le\frac1\delta,
\]
where in the last step we exchanged the order of summation and integration and used the uniform upper bound for the summation.
Thus the full hierarchy contains at most $H=3+\delta^{-1}$ heavy cells.

A non-heavy cell is split precisely when $\xi_Q>\delta$, and all its descendants are non-heavy.  Set
\[
    q\coloneqq\mathbb P\{\xi_Q>\delta\}
      =\frac12e^{-\delta/\sigma},
    \qquad 2q<1.
\]
We next consider a non-heavy subtree of $\mathcal T_{\priv}$ rooted at a non-heavy cell.  Let $\ell_r$ be the expected number of leaves in a non-heavy subtree truncated at relative depth $r$.  Conditioning on the root split gives
\[
    \ell_0=1,
    \qquad
    \ell_{r+1}=1-q+2q\ell_r.
\]
Therefore,
\[
    \ell_r
    =(1-q)\sum_{j=0}^{r-1}(2q)^j+(2q)^r
    \longrightarrow\frac{1-q}{1-2q}.
\]
Monotone convergence shows that a non-heavy subtree terminates almost surely and has expected number of leaves $(1-q)/(1-2q)$.

Stop each branch when it first reaches a non-heavy cell.  The resulting binary tree $\mathcal T'$ has at most $H$ internal nodes (as all internal nodes are heavy cells), and hence at most $H+1$ leaves.  Each of these leaves of $\mathcal T'$ is either a terminal heavy cell, contributing one leaf to $\mathcal T_{\priv}$, or the root of a non-heavy subtree, contributing $(1-q)/(1-2q)$ leaves in expectation.  Therefore,
\[
    \mathbb E[m]
    \le (H+1)\max\left\{1,\frac{1-q}{1-2q}\right\}
    =(4+\delta^{-1})
       \frac{1-\frac12e^{-\delta/\sigma}}
            {1-e^{-\delta/\sigma}}.
\]
This finite expectation also proves almost-sure termination.
\end{proof}

\begin{theorem}[Resolution error of PrivTree]\label{thm:privtree-accuracy}
For every $t>0$ such that $\theta+t<1$, the aggregated measure $\nu_{\mathcal T_{\priv}}$ induced by the output of \Cref{alg:binary-privtree} satisfies
\begin{equation}\label{eq:privtree-general-accuracy}
\begin{split}
    \mathbb E\bigl[W_1(\mu_n,\nu_{\mathcal T_{\priv}})\bigr]
    \le 2\sqrt d\Bigg\{
       \left[\theta+t+\delta(L_{\theta+t}+2)\right]^{1/d}
       +\frac12\left(1+\frac1\delta\right)e^{-t/\sigma}
       \Bigg\}.
\end{split}
\end{equation}
Here the expectation is over the noise used in the split decisions.
\end{theorem}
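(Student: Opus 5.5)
We reduce to the resolution functional via \Cref{lem:resolution-error}: it suffices to show
\[
    \mathbb E\,\mathfrak R_{\mathcal T_{\priv}}(\mu_n)
    \le \left[\theta+t+\delta(L_{\theta+t}+2)\right]^{1/d}
       +\frac12\left(1+\frac1\delta\right)e^{-t/\sigma}.
\]
The idea is to compare $\mathcal T_{\priv}$ with the deterministic soft-threshold tree $\mathcal T'\coloneqq\mathcal T_{\theta+t,\delta}$ from \Cref{alg:soft-threshold-partition}. That tree has base threshold $\theta+t$, and \Cref{thm:w1-soft-threshold} (through \Cref{lem:leafwise-holder}) bounds its resolution functional by the first bracket.

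\textbf{Splitting by a good event.} Call a cell $Q\in\mathcal D_k$ \emph{critical} if it is split in $\mathcal T'$, i.e.\ $p_Q>\theta+t+k\delta$. Let $G$ be the event that every critical cell reached by the PrivTree recursion is actually split. A critical cell has $p_Q>\theta+(k-1)\delta$, so $\widehat p_Q=p_Q+\xi_Q$, and it fails to split only if $\xi_Q\le -t$. This has probability $\tfrac12 e^{-t/\sigma}$.

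On $G$, induction on depth shows that every internal node of $\mathcal T'$ is internal in $\mathcal T_{\priv}$. Hence $K_{\mathcal T_{\priv}}(x)\ge K_{\mathcal T'}(x)$ pointwise, and $\alpha^{K}$ is monotone, so
\[
\mathfrak R_{\mathcal T_{\priv}}(\mu_n)\le\mathfrak R_{\mathcal T'}(\mu_n).
\]
Note this uses the pointwise monotonicity of the resolution functional under refinement.

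\textbf{Controlling the bad event.} The cost off $G$ should not be charged as $\mathbb P(G^c)$ times the trivial bound $1$, because a union bound over all critical cells would bring in their number. Instead, use a pointwise version. For each $x$, $\alpha^{K_{\mathcal T_{\priv}}(x)}$ exceeds $\alpha^{K_{\mathcal T'}(x)}$ only if some critical ancestor $Q_k(x)$ with $k<K_{\mathcal T'}(x)$ fails to split, and in that case the excess is at most $1$. So
\[
\mathbb E\,\mathfrak R_{\mathcal T_{\priv}}
\le \mathfrak R_{\mathcal T'}
+\frac12e^{-t/\sigma}\sum_{Q\text{ critical}} p_Q .
\]
It remains to bound the total mass of critical cells, $\sum_{Q\text{ critical}}p_Q=\int \#\{k:Q_k(x)\text{ critical}\}\,d\mu_n(x)$. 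For each $x$, the critical cells on its path are nested with masses at most $1$ and exceeding $k\delta$ at depth $k$, so their depths satisfy $k<1/\delta$. They therefore number at most $1+1/\delta$. This yields exactly the factor $\tfrac12(1+1/\delta)e^{-t/\sigma}$.

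\textbf{Main obstacle.} The delicate step is the pointwise charging argument. Its justification is that PrivTree reaches every critical cell whose critical ancestors all split, and non-critical cells only add refinement. One must therefore check that the noisy splits of non-critical cells can only lower the resolution functional. Everything else follows from \Cref{thm:w1-soft-threshold} applied with $\theta$ replaced by $\theta+t<1$.
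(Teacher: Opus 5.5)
Your proposal is correct and is essentially the paper's proof: compare with the deterministic tree $\mathcal T_{\theta+t,\delta}$, note that a critical cell reached but not split forces $\xi_Q<-t$, use the pointwise bound $\alpha^{K_{\priv}(x)}\le\alpha^{K_t(x)}+\mathbf 1\{K_{\priv}(x)<K_t(x)\}$, union-bound over the at most $K_t(x)\le 1+1/\delta$ critical ancestors, and integrate against $\mu_n$ (your $\sum_{Q\text{ critical}}p_Q$ is exactly $\int K_t\,d\mu_n$). The good-event $G$ discussion is an unneeded detour, since the pointwise charging argument alone gives the bound.
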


\begin{proof}
Let $K_{\priv}(x)$ be the depth of the PrivTree leaf containing $x$, and let $K_t(x)$ be the stopping depth of the deterministic soft-threshold tree $\mathcal T_{\theta+t,\delta}$.
If $K_{\priv}(x)<K_t(x)$, then at some depth $j<K_t(x)$ the PrivTree cell $Q_j(x)$ is not split even though
\[
    p_{Q_j(x)}>\theta+t+j\delta.
\]
At such a cell, the maximum in \Cref{alg:binary-privtree} gives $p_{Q_j(x)}$, so failure to split implies $p_{Q_j(x)}+\xi_{Q_j(x)}\le\theta+j\delta$ and hence $\xi_{Q_j(x)}<-t$.

There are at most $K_t(x)$ possible depths $j<K_t(x)$, and at each one the Laplace lower tail satisfies $\mathbb P\{\xi_{Q_j(x)}<-t\}=\frac12e^{-t/\sigma}$.
The union bound therefore gives
\begin{equation}\label{eq:privtree-premature-stopping}
    \mathbb P\{K_{\priv}(x)<K_t(x)\}
    \le \frac{K_t(x)}2e^{-t/\sigma}
    \le \frac12\left(1+\frac1\delta\right)e^{-t/\sigma},
\end{equation}
where the last inequality follows because $p_Q\le1$ implies $K_t(x)\le k_{\max}\le1+1/\delta$.
Since $k\mapsto\alpha^k$ is decreasing, pointwise we have
\begin{align*}
    \alpha^{K_{\priv}(x)}
    &\leq \alpha^{K_t(x)}\cdot \mathbf 1_{\{K_{\priv}(x)\geq K_t(x)\}}
       +\alpha^{K_{\priv}(x)}\mathbf 1_{\{K_{\priv}(x)<K_t(x)\}}\\
    &\leq \alpha^{K_t(x)}
       +\mathbf 1_{\{K_{\priv}(x)<K_t(x)\}}.
\end{align*}
Taking expectations and using \Cref{eq:privtree-premature-stopping} gives
\[
    \mathbb E\bigl[\alpha^{K_{\priv}(x)}\bigr]
    \le \alpha^{K_t(x)}
       +\frac12\left(1+\frac1\delta\right)e^{-t/\sigma}.
\]
After integration with respect to $\mu_n$ and applying Fubini's theorem, we have
\begin{align}
    \mathbb E\bigl[W_1(\mu_n,\nu_{\mathcal T_{\priv}})\bigr]
    &\le 2\sqrt d\int_\Omega
       \mathbb E\bigl[\alpha^{K_{\priv}(x)}\bigr]d\mu_n(x) \notag\\
    &\le 2\sqrt d\left(
       \mathfrak R_{\mathcal T_{\theta+t,\delta}}(\mu_n)
       +\frac12\left(1+\frac1\delta\right)e^{-t/\sigma}
       \right) \label{eq:privtree-resolution-comparison}\\
    &\le 2\sqrt d\left(
       \left[\theta+t+\delta(L_{\theta+t}+2)\right]^{1/d}
       +\frac12\left(1+\frac1\delta\right)e^{-t/\sigma}
       \right), \notag
\end{align}
where the first inequality applies \Cref{lem:resolution-error} to each realized tree, and the last uses the resolution-error estimate in the proof of \Cref{thm:w1-soft-threshold}.
\end{proof}

We now combine the private parameter choice of PrivTree with \Cref{thm:privtree-accuracy}.

\begin{corollary}\label{cor:binary-privtree-rate}
Let $0<\eps\le1$ and $\eps n>2$.  Set
\begin{equation}
    \theta=0,
    \qquad
    \sigma=\frac3{\eps n},
    \qquad
    \delta=\sigma\log 2,
\end{equation}
the normalized binary-tree choice from \cite[Corollary~1]{zhang2016privtree}.
Then \Cref{alg:binary-privtree} returns an $\eps$-differentially private tree $\mathcal T_{\priv}$.
Moreover, for a universal constant $C$,
\begin{equation}
    \mathbb E\bigl[W_1(\mu_n,\nu_{\mathcal T_{\priv}})\bigr]
    \le C\sqrt d\left(\frac{\log(\eps n)}{\eps n}\right)^{1/d}.
\end{equation}
\end{corollary}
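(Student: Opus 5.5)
The corollary has two parts, privacy and accuracy, and I would prove them separately.

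\emph{Privacy.} I would not reprove it. \Cref{alg:binary-privtree} with $\theta=0$, $\delta=\sigma\log 2$ is exactly binary PrivTree (fanout $\beta=2$) from \cite[Algorithm~2, Corollary~1]{zhang2016privtree}, written in normalized masses $\mu_n(Q)$ rather than counts. Adding or removing one point changes each $\mu_n(Q)$ by at most $1/n$ along a single root-to-leaf path. Their theorem says that a count-based noise scale $\lambda\ge\frac{2\beta-1}{\beta-1}\cdot\frac1\eps=3/\eps$ with decay $\lambda\log\beta$ gives $\eps$-differential privacy. Rescaling by $1/n$ turns this into $\sigma=3/(\eps n)$ and $\delta=\sigma\log2$. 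The one thing to check is that the biased count $\max\{\theta+(k-1)\delta,p_Q\}$ matches theirs, $\max\{c-\delta,\cdot\}$ with threshold $\theta+k\delta$ playing the role of the depth-adjusted threshold. I expect this to be a matter of bookkeeping.

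\emph{Accuracy.} I would apply \Cref{thm:privtree-accuracy} with $\theta=0$ and the choice $t=2\sigma\log(\eps n)$, with constants to be tuned. The requirement $\theta+t<1$ holds once $\eps n$ exceeds a universal constant. For smaller $\eps n$ the claim is trivial, because $W_1\le\sqrt d$ and the right-hand side is at least $c\sqrt d$ when $\eps n$ is bounded. The two terms are then handled as follows.

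\begin{enumerate}
\item \emph{Noise term.} Since $1/\delta\asymp\eps n$,
\[
\left(1+\frac1\delta\right)e^{-t/\sigma}\lesssim \eps n\,(\eps n)^{-2}=(\eps n)^{-1}\le\left(\frac{\log(\eps n)}{\eps n}\right)^{1/d},
\]
where the last inequality uses $\eps n>2$. If $\log 2<1$ causes trouble with small $\log(\eps n)$, I would enlarge the constant in $t$.

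\item \emph{Resolution term.} Here $t\asymp\log(\eps n)/(\eps n)$, $\delta\asymp 1/(\eps n)$, and
\[
L_t=\lceil\log_2(1/t)\rceil\lesssim\log(\eps n),
\]
because $1/t\lesssim\eps n$. Hence
\[
\theta+t+\delta(L_t+2)\lesssim\frac{\log(\eps n)}{\eps n},
\]
and raising to the power $1/d$ costs only a constant $C^{1/d}\le C$.
\end{enumerate}

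Summing the two terms and keeping the prefactor $2\sqrt d$ gives the claim.

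The main obstacle is the small-$\eps n$ regime: with $\eps n$ only slightly above $2$, $\log(\eps n)$ may be below $1$ and the condition $t<1$ may fail. I would absorb this regime into the universal constant using the trivial bound $W_1\le\sqrt d$, and assume $\eps n\ge C_0$ for the main argument.
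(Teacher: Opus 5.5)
Your proposal is correct and matches the paper's proof: privacy is inherited from \cite[Corollary~1]{zhang2016privtree} after normalizing counts by $n$, and accuracy follows from \Cref{thm:privtree-accuracy} with $t=2\sigma\log(\eps n)$ when $\eps n\ge18$ (the paper's explicit $C_0$, since $6\log(\eps n)<\eps n$ gives $t<1$), with the range $2<\eps n<18$ absorbed via $W_1\le\sqrt d$. One small slip: $(\eps n)^{-1}\le(\log(\eps n)/(\eps n))^{1/d}$ needs $\log(\eps n)\ge1$, not merely $\eps n>2$ (it fails for $d=1$ and $\eps n$ just above $2$), but this is harmless because you are already working with $\eps n\ge C_0$.
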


\begin{proof}
For a binary tree, the noise scale $3/\eps$ in \cite[Corollary~1]{zhang2016privtree} becomes $3/(\eps n)$ after normalizing counts by $n$, which proves the privacy claim under the same add/remove neighboring relation.
If $\eps n\ge18$, then $6\log(\eps n)<\eps n$, so we may apply \Cref{thm:privtree-accuracy} with $t=2\sigma\log(\eps n)<1$.  Then
\[
    \theta+t=2\sigma\log(\eps n),
    \qquad
    \delta=\sigma\log 2,
    \qquad
    L_{\theta+t}\le\left\lceil\log_2(\eps n)\right\rceil,
\]
and hence
\[
    \theta+t+\delta(L_{\theta+t}+2)\le C\frac{\log(\eps n)}{\eps n},
    \qquad
    \left(1+\frac1\delta\right)e^{-t/\sigma}\le\frac{C}{\eps n}.
\]
Substitution into \Cref{eq:privtree-general-accuracy} proves the claimed estimate when $\eps n\ge18$.  If $2<\eps n<18$, then $\log(\eps n)/(\eps n)\ge\log 18/18$, so the same estimate follows from $W_1\le\diam(\Omega)=\sqrt d$ after increasing the universal constant $C$.
\end{proof}

\begin{theorem}[Private measure from PrivTree]
\label{thm:privtree_1/d}
Let $0<\eps\le1$ and $\eps n>2$.  Run \Cref{alg:binary-privtree} with
\[
    \theta=0,
    \qquad
    \sigma=\frac6{\eps n},
    \qquad
    \delta=\sigma\log2,
\]
and then \Cref{alg:private-synthetic-measure} with budget $\eps/2$.
Then $\mathbb E\bigl[|\mathcal L(\mathcal T_{\priv})|\bigr]\le C\eps n$.
The output measure $\widehat\nu_{\mathcal T_{\priv}}$ is $\eps$-differentially private and satisfies
\begin{equation}\label{eq:fully-private-privtree-error}
    \mathbb E\bigl[W_1(\mu_n,\widehat\nu_{\mathcal T_{\priv}})\bigr]
    \le C\sqrt d
    \begin{cases}
        (\eps n)^{-1/2}, & d=1,\\
        (\eps n)^{-1/2}\log(\eps n), & d=2,\\
        \left(\dfrac{\log(\eps n)}{\eps n}\right)^{1/d},
            & d\ge3.
    \end{cases}
\end{equation}
Here $C$ is universal, and the expectation is over the noise used in the split decisions and the leaf-mass release.
\end{theorem}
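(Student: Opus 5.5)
The argument has three parts: privacy by composition, the expected leaf count from \Cref{prop:privtree-termination}, and an error bound obtained by splitting $W_1$ into a resolution term and a privacy-noise term.

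\emph{Privacy.} With $\sigma=6/(\eps n)=3/((\eps/2)n)$, \Cref{cor:binary-privtree-rate} applied with budget $\eps/2$ shows that the released tree is $(\eps/2)$-differentially private. For every fixed tree, \Cref{prop:private-leaf-mass-error} shows that \Cref{alg:private-synthetic-measure} with budget $\eps/2$ is $(\eps/2)$-private. \Cref{lem:adaptive-composition} combines these into $\eps$-privacy, and the measure is post-processing.

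\emph{Expected size.} In \Cref{prop:privtree-termination} we have $\delta/\sigma=\log 2$, so $e^{-\delta/\sigma}=1/2$. The fraction then equals $(3/4)/(1/2)=3/2$. Hence
\[
\mathbb E m\le \tfrac32\bigl(4+\eps n/(6\log 2)\bigr)\le C\eps n,
\]
where the last step uses $\eps n>2$.

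\emph{Error decomposition.} By the triangle inequality,
\[
W_1(\mu_n,\widehat\nu_{\mathcal T_{\priv}})\le W_1(\mu_n,\nu_{\mathcal T_{\priv}})+W_1(\nu_{\mathcal T_{\priv}},\widehat\nu_{\mathcal T_{\priv}}).
\]
The first term is the resolution error. Running the proof of \Cref{cor:binary-privtree-rate} with $\sigma=6/(\eps n)$ (only constants change) bounds its expectation by $C\sqrt d(\log(\eps n)/(\eps n))^{1/d}$.

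For $d=1$ and $d=2$ this must be dominated by the stated rates. When $d=1$, the resolution bound becomes $\log(\eps n)/(\eps n)\le C(\eps n)^{-1/2}$. When $d=2$, it becomes $(\log(\eps n)/(\eps n))^{1/2}\le(\eps n)^{-1/2}\log(\eps n)$ once $\eps n\ge e$; small $\eps n$ is handled by the trivial bound $W_1\le\sqrt d$.

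\emph{Privacy-noise term.} Condition on the tree, which is independent of the fresh leaf noise. \Cref{prop:private-leaf-mass-error} with budget $\eps/2$ gives a conditional bound of $(C\sqrt d/(\eps n))\,g(m)$, where $g(m)=m^{1/2}$, $m^{1/2}\log(1+m)$, or $m^{1-1/d}$ according to the dimension. The main obstacle is taking the expectation over the random $m$. For $d=1$ and $d\ge3$, $g$ is concave, so Jensen's inequality with $\mathbb E m\le C\eps n$ gives $C(\eps n)^{-1/2}$ and $C(\eps n)^{-1/d}$ respectively. For $d=2$, $m^{1/2}\log(1+m)$ is concave for large $m$ but not globally. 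I would instead bound it by $m^{1/2}\log(1+\eps n)+m^{1/2}\log\bigl(1+m/(\eps n)\bigr)$, then use $\log(1+u)\le u^{1/2}$ on the second piece and apply Jensen or Cauchy--Schwarz with a second-moment bound on $m$. If a second-moment bound is unavailable, I would use the concave majorant $h(m)=\sqrt{m}\,\log(e+m)$, which dominates $m^{1/2}\log(1+m)$ and whose concavity on $[1,\infty)$ I would check directly; then $\mathbb E h(m)\le h(\mathbb E m)\le C(\eps n)^{1/2}\log(\eps n)$.

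Adding the resolution and privacy-noise bounds gives \eqref{eq:fully-private-privtree-error} in every case.
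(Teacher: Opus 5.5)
Your proposal is correct and follows the paper's proof essentially step for step. The ingredients match: privacy by adaptive composition of the $\eps/2$ tree and the $\eps/2$ leaf release, $\mathbb E m\le 6+\eps n/(4\log 2)$ from \Cref{prop:privtree-termination}, the resolution bound from \Cref{cor:binary-privtree-rate} rerun at the new noise scale, and Jensen for the leaf-mass term. For $d=2$, your majorant $\sqrt m\log(e+m)$ (concave on all of $(0,\infty)$) plays the role of the paper's $\sqrt x\log(ex)$; your first $d=2$ route also needs no second moment, since $m^{1/2}\bigl(m/(\eps n)\bigr)^{1/2}=m/\sqrt{\eps n}$ is linear in $m$.
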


\begin{proof}
With the stated noise scale, $\mathcal T_{\priv}$ is $\eps/2$-differentially private by \cite[Corollary~1]{zhang2016privtree}.  For every fixed tree, the leaf-mass release is $\eps/2$-differentially private by \Cref{prop:private-leaf-mass-error}.  Hence \Cref{lem:adaptive-composition} proves that $\widehat\nu_{\mathcal T_{\priv}}$ is $\eps$-differentially private.

The accuracy bound in \Cref{cor:binary-privtree-rate}, with privacy budget $\eps/2$, gives
\[
    \mathbb E\bigl[W_1(\mu_n,\nu_{\mathcal T_{\priv}})\bigr]
    \le C\sqrt d\left(\frac{\log(\eps n)}{\eps n}\right)^{1/d}.
\]

Let $m=|\mathcal L(\mathcal T_{\priv})|$.
Since $\delta=\sigma\log2=6\log2/(\eps n)$, \Cref{prop:privtree-termination} gives
\[
    \mathbb E[m]\le 6+\frac3{2\delta}
    =6+\frac{\eps n}{4\log2}
    \le C\eps n.
\]
By Jensen's inequality,
\[
    \mathbb E[m^{1/2}]\le(\mathbb E[m])^{1/2},
    \qquad
    \mathbb E[m^{1-1/d}]\le(\mathbb E[m])^{1-1/d}\quad(d\ge3).
\]
For $d=2$, the function $x\mapsto\sqrt{x}\log(ex)$ is concave on $[1,\infty)$ and dominates $\sqrt{x}\log(1+x)$, so Jensen's inequality also gives
\[
    \mathbb E\bigl[m^{1/2}\log(1+m)\bigr]
    \le (\mathbb E[m])^{1/2}\log(e\mathbb E[m]).
\]
Since $\eps n>2$, the right-hand sides above are bounded respectively by $C(\eps n)^{1/2}$, $C(\eps n)^{1/2}\log(\eps n)$, and $C(\eps n)^{1-1/d}$.
Finally, condition on $\mathcal T_{\priv}$ and apply \Cref{prop:private-leaf-mass-error}.  Averaging over the tree, using the preceding moment bounds, the triangle inequality, and the resolution-error bound above gives the sum of the resolution and leaf-mass terms; the factor $2$ from the leaf-mass budget $\eps/2$ is absorbed into $C$.  The resolution term is absorbed by the leaf-mass term when $d=1,2$; when $d\ge3$, the leaf-mass term is absorbed by the resolution term.  This proves \Cref{eq:fully-private-privtree-error}.
\end{proof}

\subsection{Intrinsic dimension dependence of PrivTree}

We finally combine the intrinsic-dimensional partition bound with the data-dependent Laplacian complexity bound.  We aim to avoid the curse of high dimensionality and deduce a 1-Wasserstein error bound of order $(\eps n)^{-1/s}$ up to logarithmic factors.
Throughout this subsection, assume $d\ge3$ and $2<s\le d$. This is because the Laplacian complexity saturates at the parametric rate when $s\le2$, as noted after \Cref{cor:intrinsic-laplacian-complexity}.

There is one additional issue in applying \Cref{prop:data-dependent-laplacian-complexity}.  Conditional on the private tree, that proposition must be applied to the representatives
\[
    Z
    \coloneqq (y_Q)_{Q\in\mathcal L(\mathcal T_{\priv})},
\]
rather than to the original data.  Noise may create branches inside empty cells, so a low-dimensional bound for the original data does not transfer to every realization of $Z$.  The next lemma gives the expected form of the transfer that we need.

\begin{lemma}[Leaf representatives of PrivTree]
\label{lem:private-representative-occupancy}
Run \Cref{alg:binary-privtree} with $\theta=0$ and $\delta=\sigma\log2$.  Fix $A\ge1$, $0<s\le d$, and $R>A$, and suppose that \Cref{ass:finite-scale-intrinsic-dimension} holds for $\mu_n$ with parameters $(A,s,R)$.
For any choice of representatives $y_Q\in Q$, the data set $Z$ defined above satisfies
\begin{equation}\label{eq:private-representative-occupancy}
    \mathbb E\bigl[N_k^{\mathrm{eff}}(Z)\bigr]
    \le
    \mathbb E\bigl[N_k^+(\mu_Z)\bigr]
    \le \frac{6A}{2^{s/d}-1}\,2^{sk/d}
    \qquad\forall\,0\le k\le K_s,
\end{equation}
where the expectation is over the split-decision noise.
\end{lemma}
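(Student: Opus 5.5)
The first inequality in \Cref{eq:private-representative-occupancy} is immediate from \Cref{eq:effective-occupancy-comparison}, applied pathwise to each realization of $Z$. The work is therefore to bound $\mathbb E[N_k^+(\mu_Z)]$.

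I will bound $N_k^+(\mu_Z)$ by counting depth-$k$ cells of $\mathcal D_k$ that meet a leaf of $\mathcal T_{\priv}$. Every representative $y_Q$ lies in its leaf $Q$. If $\depth(Q)\ge k$, then $y_Q$ lies in the depth-$k$ ancestor of $Q$, and that ancestor is a node of $\mathcal T_{\priv}$. If $\depth(Q)<k$, then $y_Q$ lies in exactly one depth-$k$ cell, which is contained in $Q$. In either case the occupied depth-$k$ cell is charged injectively to a node of $\mathcal T_{\priv}$ at depth at most $k$, namely the depth-$k$ ancestor or the shallow leaf. Hence
\[
    N_k^+(\mu_Z)\le\#\{\text{nodes of }\mathcal T_{\priv}\text{ at depth } j\le k\},
\]
and it remains to count, in expectation, the nodes at each depth $j\le k$.

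Next I classify nodes as in \Cref{prop:privtree-termination}. With $\theta=0$, a cell $Q\in\mathcal D_j$ is \emph{heavy} if $p_Q>(j-1)\delta$. I distinguish nodes with $p_Q>0$ from empty nodes with $p_Q=0$.
\begin{itemize}
\item \emph{Nonempty nodes.} At depth $j$ these number at most $N_j^+(\mu_n)\le A2^{sj/d}$ by \Cref{ass:finite-scale-intrinsic-dimension}, since $j\le k\le K_s$.
\item \emph{Empty nodes.} Every empty node at depth $j$ descends from a nonempty node at some depth $i<j$ through a chain of empty cells. Each such cell is non-heavy and is split independently with probability $q=\tfrac12 e^{-\delta/\sigma}=\tfrac14$, because $\delta=\sigma\log 2$. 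So each nonempty node at depth $i$ has at most two children, and the expected number of its empty descendants at relative depth $r\ge1$ is at most $2(2q)^{r-1}=2^{2-r}$. To make this rigorous I condition on the noise of the ancestor and use independence of the fresh Laplace variables, in the same style as the recursion $\ell_{r+1}=1-q+2q\ell_r$.
\end{itemize}
Combining the two cases gives
\[
    \mathbb E[\#\text{nodes at depth } j]\le A2^{sj/d}+\sum_{i<j}A2^{si/d}\,2^{2-(j-i)}.
\]

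Finally I sum over $j\le k$ and compare with the geometric series in $2^{s/d}$:
\[
    \sum_{j\le k}A2^{sj/d}\le \frac{A\,2^{s/d}}{2^{s/d}-1}\,2^{sk/d}\le\frac{2A}{2^{s/d}-1}\,2^{sk/d}.
\]
For the empty-descendant terms, I swap the order of summation. Each nonempty node at depth $i$ contributes at most $\sum_{r\ge1}2^{2-r}=4$ expected empty descendants in total, so this part is at most $4\sum_{i\le k}A2^{si/d}\le \frac{4A\,2^{s/d}}{2^{s/d}-1}2^{sk/d}$. Adding the two parts gives a bound of the form $\frac{6A}{2^{s/d}-1}2^{sk/d}$, possibly after tightening by one factor.

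The main obstacle is matching the constant $6$ exactly. The crude count above gives roughly $(1+4)\cdot 2^{s/d}\le 10$, not $6$. To close the gap I would refine it in two ways. First, count only nodes that actually contain a representative, i.e.\ leaves at depth $\le k$ plus depth-$k$ nodes, rather than all nodes at depth $\le k$. Second, use that the descendants at depth $\le k$ of a given nonempty node contribute leaves whose expected number is bounded by the leaf recursion $(1-q)/(1-2q)=3/2$, instead of by the total node count $4$. Leaves hanging off nonempty nodes, weighted by the geometric sum $\sum_{i\le k}2^{si/d}\le 2^{sk/d}\cdot 2^{s/d}/(2^{s/d}-1)\le 2\cdot 2^{sk/d}/(2^{s/d}-1)$, then give $2\cdot\tfrac32\cdot 2=6$. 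This is the step where I expect the bookkeeping to need care.
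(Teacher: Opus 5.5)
Your argument is correct and is essentially the paper's proof. The paper also bounds $N_k^+(\mu_Z)$ by the number of tree nodes at depth at most $k$, and it controls the node counts by a subcritical-branching argument. It classifies nodes as heavy/non-heavy rather than nonempty/empty. It then solves the one-step recurrence $\mathbb E[V_{j+1}]\le\frac12\mathbb E[V_j]+\frac32A2^{sj/d}$ to get $\mathbb E[V_j]\le 3A2^{sj/d}$, and multiplies by $\sum_{j\le k}2^{sj/d}\le\frac{2}{2^{s/d}-1}2^{sk/d}$.

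The constant ``obstacle'' you identify is only an arithmetic slip in your last step. An empty node at depth $j\le k$ has its deepest nonempty ancestor at depth $i\le k-1$, so the empty-descendant part is at most
\[
4A\sum_{i=0}^{k-1}2^{si/d}\le\frac{4A}{2^{s/d}-1}2^{sk/d},
\]
not $\frac{4A\,2^{s/d}}{2^{s/d}-1}2^{sk/d}$. Adding your nonempty part $\frac{2A}{2^{s/d}-1}2^{sk/d}$ gives exactly $\frac{6A}{2^{s/d}-1}2^{sk/d}$.

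You should therefore drop the proposed refinement, which is unnecessary. As sketched, the product $2\cdot\frac32\cdot2$ counts only the leaves of the empty subtrees. It omits the nonempty nodes and leaves, so it would not yield $6$ without further accounting.
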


\begin{proof}
For a realization of the private tree $\mathcal T_\priv$, let $V_j$ be the number of its nodes at depth $j$.  A depth-$j$ node is called heavy if its mass is larger than $(j-1)\delta$, as in the proof of \Cref{prop:privtree-termination}.  Let $H_j$ be the number of heavy nodes with depth $j$.  Every heavy node is non-empty, so
\[
    H_j\le N_j^+(\mu_n).
\]
A non-heavy node is split exactly when its noise exceeds $\delta$.  Under the stated parameter choice, this event has probability
\[
    q=\frac12e^{-\delta/\sigma}=\frac14.
\]
Conditioning on the nodes reached at depth $j$, and bounding the probability of splitting a heavy node by one, gives
\[
    \mathbb E[V_{j+1}\mid V_j,H_j]
    \le 2H_j+2q(V_j-H_j)
    =\frac12V_j+\frac32H_j.
\]
Consequently, using \Cref{ass:finite-scale-intrinsic-dimension},
\[
    \mathbb E[V_{j+1}]
    \le\frac12\mathbb E[V_j]+\frac32A2^{sj/d},
    \qquad \forall \; 0\le j<K_s.
\]
Since $V_0=1$ and $1<2^{s/d}\le2$, iterating this recurrence gives
\begin{equation}\label{eq:private-tree-level-size}
\begin{split}
    \mathbb E[V_j]
    &\le 2^{-j}
       +\frac32A\sum_{\ell=0}^{j-1}
          2^{-(j-1-\ell)}2^{s\ell/d}
    \leq
      \frac{3}{2(2^{s/d}-\frac12)}
      A2^{sj/d}
    \leq 3A2^{sj/d},
    \qquad \forall \; 0\le j\le K_s.
\end{split}
\end{equation}

It remains to pass from tree nodes to leaf representatives.  A leaf of depth at least $k$ has its representative in the depth-$k$ ancestor of that leaf, which is one of the $V_k$ tree nodes.  Each leaf $Q$ of depth less than $k$ can contribute at most one additional occupied depth-$k$ cell to which $y_Q$ belongs.  Therefore, deterministically,
\[
    N_k^+(\mu_Z)
    = \# \text{ leaves of $\mathcal T_\priv$ truncated after depth $k$}
    \le V_k+\sum_{j=0}^{k-1}V_j
    =\sum_{j=0}^kV_j.
\]
Taking expectations, applying \Cref{eq:private-tree-level-size}, and using
\[
    \sum_{j=0}^k2^{sj/d}
    \le\frac{2}{2^{s/d}-1}\,2^{sk/d}
\]
prove the second inequality in \Cref{eq:private-representative-occupancy}.  The first follows from \Cref{eq:effective-occupancy-comparison}.
\end{proof}

The lemma is an expectation bound rather than a statement for every realized tree.  This is unavoidable: false-positive splits inside empty cells form subcritical random binary trees whose sizes are finite almost surely but have no deterministic upper bound.  We therefore use the multiscale bound in \Cref{prop:data-dependent-laplacian-complexity} directly and average over the tree; the deterministic-profile corollary \Cref{cor:intrinsic-laplacian-complexity} cannot be applied to each realization with a common constant.

\begin{theorem}[PrivTree with intrinsic-dimensional accuracy]
\label{thm:fully-private-intrinsic-privtree}
Follow the algorithm in \Cref{thm:privtree_1/d} under the same assumptions.  We further assume that $d\ge3$, $2<s\le d$, $1\le A<\eps n$, and \Cref{ass:finite-scale-intrinsic-dimension} holds for $\mu_n$ with $R=\eps n$.
Then
\begin{equation}\label{eq:fully-private-intrinsic-privtree-rate}
    \mathbb E\bigl[W_1(\mu_n,\widehat\nu_{\mathcal T_{\priv}})\bigr]
    \le C_s\sqrt d\left(\frac{Ad\log(\eps n)}{\eps n}\right)^{1/s}.
\end{equation}
\end{theorem}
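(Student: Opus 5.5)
We split the error as in \Cref{thm:privtree_1/d}: by the triangle inequality,
\[
\E W_1(\mu_n,\widehat\nu_{\mathcal T_\priv})\le \E W_1(\mu_n,\nu_{\mathcal T_\priv})+\E W_1(\nu_{\mathcal T_\priv},\widehat\nu_{\mathcal T_\priv}).
\]
We bound the resolution term and the privacy-noise term separately, each by the target rate $(Ad\log(\eps n)/(\eps n))^{1/s}$.

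\emph{Resolution term.} We use inequality \Cref{eq:privtree-resolution-comparison} from the proof of \Cref{thm:privtree-accuracy}, with $\theta=0$, $\sigma=6/(\eps n)$, $\delta=\sigma\log2$ and $t=2\sigma\log(\eps n)$. It gives
\[
\E W_1(\mu_n,\nu_{\mathcal T_\priv})\le 2\sqrt d\bigl(\mathfrak R_{\mathcal T_{t,\delta}}(\mu_n)+C/(\eps n)\bigr).
\]
We do not apply the $1/d$ estimate to $\mathfrak R_{\mathcal T_{t,\delta}}$. Instead we apply the soft-threshold profile bound \Cref{eq:soft-stopping-profile-bound} with base threshold $t$, which gives
\[
\mathfrak R\le(1-\alpha)\sum_k\alpha^k\min\{1,(t+k\delta)N_k^+(\mu_n)\}.
\]
For $k<K_s$ we use $N_k^+\le A2^{sk/d}$ and $t+k\delta\le C(\log(\eps n)+k)/(\eps n)$. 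Since $K_s\le 1+\frac ds\log_2(\eps n)$, we have $k\le Cd\log(\eps n)$, so $t+k\delta\le Cd\log(\eps n)/(\eps n)$. This is where the factor $d$ inside the power enters. We then split the sum at the first depth $K'$ at which $\alpha^{K'}\le (Ad\log(\eps n)/(\eps n))^{1/s}$; note $K'\le K_s$. Repeating the $s>1$ computation of \Cref{cor:power-law-covering} gives $\mathfrak R\le C_s(Ad\log(\eps n)/(\eps n))^{1/s}$. The additive $C\sqrt d/(\eps n)$ is smaller than this.

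\emph{Privacy term.} We condition on the tree and apply \Cref{prop:private-leaf-mass-error} with budget $\eps/2$ in its first form, which gives $\frac{4m}{\eps n}\mathfrak L(\mathcal F;Z)$. Then \Cref{prop:data-dependent-laplacian-complexity} applied to $Z$ (of size $m$) gives
\[
\frac{m}{\eps n}\mathfrak L(\mathcal F;Z)\le \frac{\sqrt{2d}}{\eps n}\Bigl(\sqrt m+\sum_{j=1}^J2^{-(j-1)}\sqrt{m\,N^{\mathrm{eff}}_{dj}(Z)}\Bigr)+\frac{\sqrt d\,m}{\eps n}2^{-J}.
\]
Since $N^{\mathrm{eff}}_{dj}(Z)\,m\le N^+_{dj}(\mu_Z)\,m$, Cauchy--Schwarz after taking expectations gives $\E\sqrt{mN^+}\le\sqrt{\E m\,\E N^+}$. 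Here $\E m\le C\eps n$ by \Cref{thm:privtree_1/d}, and $\E N^+_{dj}(\mu_Z)\le C_{s,d}A2^{sj}$ by \Cref{lem:private-representative-occupancy}, valid for $dj\le K_s$. The constant $6/(2^{s/d}-1)\le Cd/s$ contributes at most a factor $\sqrt d$ after the square root. With $J=\lfloor K_s/d\rfloor$, so $2^{-J}\asymp(A/(\eps n))^{1/s}$, the middle sum is
\[
\frac{C\sqrt{d}}{\eps n}\sqrt{\eps n\,dA}\sum_j 2^{(s/2-1)j}\le C_s d\,\sqrt{A/(\eps n)}\,(\eps n/A)^{1/2-1/s}=C_sd(A/(\eps n))^{1/s},
\]
using $s>2$. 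The remainder term is $\sqrt d\,\E m\,2^{-J}/(\eps n)\le C\sqrt d(A/(\eps n))^{1/s}$, and the root term is $\sqrt{d/(\eps n)}$, which is smaller. Finally, $d\le d^{1/s}\cdot d^{1-1/s}$ and $C_s d(A/\eps n)^{1/s}\le C_s\sqrt d\,(Ad/\eps n)^{1/s}\cdot d^{1/2-1/s}$. Obtaining the precise $d$-power displayed therefore needs care in this bookkeeping; if the factor does not close, the fallback is to state the constant as $C_s d^{3/2}$, as in \Cref{thm:informal-main}.

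\emph{Main obstacle.} The delicate step is the privacy term. $Z$ is random, and the occupancy control from \Cref{lem:private-representative-occupancy} holds only in expectation. The multiscale bound therefore has to be averaged before any power-law corollary is used; Cauchy--Schwarz coupling $m$ with $N^+$ is what keeps the square roots outside. One must also check that $J$ is chosen so that all scales $dj\le K_s$ lie in the range where the assumption holds with $R=\eps n$.
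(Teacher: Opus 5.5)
Your decomposition, the resolution estimate (including splitting at $K'\le K_s$), and the conditional use of \Cref{prop:private-leaf-mass-error,prop:data-dependent-laplacian-complexity} followed by Cauchy--Schwarz and \Cref{lem:private-representative-occupancy} all follow the paper's proof. The gap is the one you flag yourself. Your bound on the middle chaining term is $C_s d\,(A/(\eps n))^{1/s}$. This exceeds the claimed $C_s\sqrt d\,(Ad\log(\eps n)/(\eps n))^{1/s}$ by a factor of order $d^{1/2-1/s}(\log(\eps n))^{-1/s}$. For fixed $s>2$ that factor is unbounded, since the target bound stays nontrivial for $d$ as large as nearly $\eps n/\log(\eps n)$. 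Falling back to $C_s d^{3/2}$ proves a weaker statement, not this theorem. The $d^{3/2}$ in \Cref{thm:informal-main} belongs to the shifted version, where $A$ is replaced by $eA(2d)^s$.

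The loss comes from applying $\E N^+_{dj}(\mu_Z)\le C_s dA2^{sj}$ at every $j\le J=\lfloor K_s/d\rfloor$. Let $j^*$ be the scale with $dA2^{sj^*}=\eps n$. For $j>j^*$ this occupancy bound is worse than trivial. As a result, your geometric sum $\sum_{j\le J}2^{(s/2-1)j}$ is evaluated at $2^J\approx(\eps n/A)^{1/s}$ rather than at $2^{j^*}=(\eps n/(dA))^{1/s}$. That turns an inner $d^{1/s}$ into $\sqrt d$.

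The missing step is to also use $N^{\mathrm{eff}}_{dj}(Z)\le N^+_{dj}(\mu_Z)\le m$. Then $\sqrt{m N^{\mathrm{eff}}_{dj}(Z)}\le m$ pointwise, and
\[
\E\sqrt{mN^{\mathrm{eff}}_{dj}(Z)}\le\min\Bigl\{\E m,\ \sqrt{\E m\,\E N^{\mathrm{eff}}_{dj}(Z)}\Bigr\}\le C_s\sqrt{\eps n\,\min\{\eps n,\,dA2^{sj}\}}.
\]
After multiplying by $\sqrt d\,2^{-(j-1)}/(\eps n)$, the middle term becomes $C_s\sqrt d\sum_{j=1}^J 2^{-j}\min\{1,\sqrt{dA2^{sj}/(\eps n)}\}$. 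Now split at $j^*$. For $j\le j^*$, the sum is at most $C_s\sqrt{dA/(\eps n)}\,2^{(s/2-1)j^*}=C_s(dA/(\eps n))^{1/s}$, using $s>2$. For $j>j^*$, the terms sum to at most $2\cdot 2^{-j^*}=2(dA/(\eps n))^{1/s}$. The privacy term is therefore $C_s\sqrt d\,(dA/(\eps n))^{1/s}$. Together with your root and remainder terms and your resolution bound, this gives exactly the stated rate.

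Equivalently, you could stop the chaining at $J=\lfloor j^*\rfloor$ instead of $\lfloor K_s/d\rfloor$, which \Cref{prop:data-dependent-laplacian-complexity} allows for any integer $J\ge0$. With that choice your uncapped Cauchy--Schwarz bound already suffices, and the remainder $\sqrt d\,2^{-J}$ has the same order $\sqrt d\,(dA/(\eps n))^{1/s}$.
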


\begin{proof}
Recall that
\[\theta=0,\quad  \sigma=6/(\eps n), \quad \delta=\sigma\log2,\]
and the leaf-mass release receives budget $\eps/2$.  The claim is immediate from $W_1\le\sqrt d$ when $\eps n$ is bounded, so assume that $\eps n$ is sufficiently large.

For the resolution error, set $t=2\sigma\log(\eps n)$.  Since $k<K_s$ implies $t+k\delta\le C_s d\log(\eps n)/(\eps n)$, combining \Cref{eq:privtree-resolution-comparison,eq:soft-stopping-profile-bound} with \Cref{ass:finite-scale-intrinsic-dimension} and splitting the sum at $K_s$ gives
\begin{align*}
    \mathbb E\bigl[W_1(\mu_n,\nu_{\mathcal T_{\priv}})\bigr]
    &\le 2\sqrt d\Bigg\{(1-\alpha)\sum_{k=0}^{\infty}\alpha^k
       \min\{1,(t+k\delta)N_k^+(\mu_n)\}
       +\frac12\left(1+\frac1\delta\right)e^{-t/\sigma}\Bigg\} \\
    &\le 2\sqrt d\Bigg\{(1-\alpha)\sum_{k=0}^{K_s-1}\alpha^k
       \min\{1,(t+k\delta)N_k^+(\mu_n)\}
       +\alpha^{K_s}+\frac12\left(1+\frac1\delta\right)e^{-t/\sigma}\Bigg\} \\
    &\le C_s\sqrt d\Bigg\{(1-\alpha)\sum_{k=0}^{K_s-1}\alpha^k
       \min\left\{1,\frac{Ad\log(\eps n)}{\eps n}\alpha^{-sk}\right\}
       +\alpha^{K_s}+\frac1{\eps n}\Bigg\} \\
    &\le C_s\sqrt d\left(\frac{Ad\log(\eps n)}{\eps n}\right)^{1/s}.
\end{align*}
The last step is the same geometric-sum estimate as in the $s>1$ case of \Cref{cor:power-law-covering}, and also $\alpha^{K_s}\le(A/(\eps n))^{1/s}$.

It remains to control the leaf-mass error.  Let $m=|\mathcal L(\mathcal T_{\priv})|$ and $J=\lfloor K_s/d\rfloor$.  Conditional on the tree, \Cref{prop:private-leaf-mass-error,prop:data-dependent-laplacian-complexity} with privacy budget $\eps/2$ give
\begin{align*}
    \E\bigl[
       W_1(\nu_{\mathcal T_{\priv}},\widehat\nu_{\mathcal T_{\priv}})
       \mid\mathcal T_{\priv}\bigr]
    \le \frac{C\sqrt d}{\eps n}\left\{
       \sqrt m+\sum_{j=1}^J2^{-(j-1)}\sqrt{mN_{dj}^{\mathrm{eff}}(Z)}
       +m2^{-J}\right\}.
\end{align*}
Here the conditional expectation is taken over the leaf-mass noise.
By \Cref{prop:privtree-termination}, $\mathbb E[m]\le C\eps n$.  Since $N_{dj}^{\mathrm{eff}}(Z)\le m$, \Cref{lem:private-representative-occupancy} gives
\[
    \mathbb E\bigl[N_{dj}^{\mathrm{eff}}(Z)\bigr]
    \le C_s\min\{\eps n,dA2^{sj}\}.
\]
Consequently, Cauchy--Schwarz yields
\[
    \mathbb E\sqrt{mN_{dj}^{\mathrm{eff}}(Z)}
    \le \sqrt{\mathbb E[m]\,\mathbb E[N_{dj}^{\mathrm{eff}}(Z)]}
    \le C_s\sqrt{\eps n\min\{\eps n,dA2^{sj}\}}.
\]
Splitting the resulting geometric sum where $dA2^{sj}$ reaches $\eps n$ gives
\begin{align*}
    \mathbb E\bigl[W_1(\nu_{\mathcal T_{\priv}},
       \widehat\nu_{\mathcal T_{\priv}})\bigr]
    &\le C_s\sqrt d\left\{\frac{1}{\sqrt{\eps n}}
       +\sum_{j=1}^J2^{-j}\min\left\{1,\sqrt{\frac{dA2^{sj}}{\eps n}}\right\}
       +2^{-J}\right\} \\
    &\le C_s\sqrt d\left(\frac{dA}{\eps n}\right)^{1/s}.
\end{align*}
Combining the last display with the resolution-error bound and the triangle inequality proves \Cref{eq:fully-private-intrinsic-privtree-rate}.  The dependence of $C_s$ may be taken as in \Cref{rem:intrinsic-laplacian-s-dependence}.
\end{proof}

\section{Randomly shifted partitions}\label{sec:random-shift}

Our main theorem on the accuracy of PrivTree, \Cref{thm:fully-private-intrinsic-privtree}, provide the accuracy bound with only a weak dependence on the ambient dimension $d$. However, the factor $A$ in \Cref{ass:finite-scale-intrinsic-dimension} may be exponential in $d$ when the data is aligned with the grid.

For example, consider a cluster of data points uniformly located in a ball of radius $\eta$ centered at the center of $\Omega=[0,1]^d$, where $\eta$ can be sufficiently small. Then the data set has intrinsic dimension $s=0$ as its covering number is $O(1)$, while the first $d$ steps of the binary partitioning will bisect each coordinate once, leading to $2^d$ many nonempty cells. So $A=2^d$. In this section we show that such a factor can be avoided by randomly shifting the grid publicly before the partitioning.

Draw $U=(U_1,\ldots,U_d)$ uniformly from $[0,1]^d$ and translate each data point $x$ to $x+U$.  The translated data lie in the public window $\Omega_U\coloneqq\Omega+U\subseteq\Omega_2\coloneqq[0,2]^d$.  We use the cyclic binary partition of $\Omega_2$.  As before, the side lengths of a depth-$k$ cell remain of order $2^{-k/d}$, up to the factor $2$ from the larger domain.

During the partition, a reached cell $Q$ with $Q\cap\Omega_U=\varnothing$ is publicly known to have zero count and is discarded with neither further splitting nor privacy noise.  
For every active leaf $Q$, we choose its representative $y_Q\in Q\cap\Omega_U$, so every representative remains in $\Omega$ after translation by $-U$. 
Throughout this section, for any measure $\mu$ on $\Omega$, we write $\mu^U(E)\coloneqq\mu(E-U)$ for its translation to $\Omega_U$; the same notation without $U$ denotes the corresponding measure translated back to $\Omega$.

\begin{algorithm}[H]
\caption{Randomly Shifted PrivTree}
\label{alg:randomly-shifted-privtree}
\begin{algorithmic}[1]
\Require Empirical measure $\mu_n$ of data set $\{X_i\}_{i=1}^n\subset \Omega$ and privacy budget $\eps>0$
\State Draw $U\sim\operatorname{Unif}([0,1]^d)$ independently of the data, set $\Omega_U\gets\Omega+U$, and set $\mu_n^U\gets n^{-1}\sum_{i=1}^n\delta_{X_i+U}$.
\State Run \Cref{alg:binary-privtree} on $\mu_n^U$ over $\Omega_2$, discarding every cell disjoint from $\Omega_U$ without adding noise or further splitting, with $\theta=0$, $\sigma=6/(\eps n)$, and $\delta=\sigma\log2$.
\State Run \Cref{alg:private-synthetic-measure} with privacy budget $\eps/2$ on the active leaves, choosing each representative $y_Q$ in $Q\cap\Omega_U$, and denote the output by $\widehat\nu^U$.
\State \Return the measure $\widehat\nu$ obtained from $\widehat\nu^U$ by translating each atom $y_Q$ to $y_Q-U$.
\end{algorithmic}
\end{algorithm}

The next lemma is the geometric reason for introducing the shift.
Let $\mathcal D_k^{(2)}$ be the depth-$k$ cyclic partition of $\Omega_2$.  For a measure $\mu$ on $\Omega$, write
\[
    N_{k,U}^+(\mu)\coloneqq\#\{Q\in\mathcal D_k^{(2)}:\mu^U(Q)>0\}.
\]
For data sets in $\Omega_2$, define $N_k^{\mathrm{eff}}$ by \Cref{eq:effective-occupancy} with $\mathcal D_k^{(2)}$ in place of $\mathcal D_k$.

\begin{lemma}[Averaged occupancy of a shifted grid]\label{lem:shifted-grid-occupancy}
Let $S\subseteq\Omega$ satisfy $\mu(S)=1$.
For every integer $k\ge1$,
\begin{equation}\label{eq:shifted-grid-covering}
    \mathbb E_U\bigl[N_{k,U}^+(\mu)\bigr]
    \le e\,N\!\left(S,\frac{2^{-k/d}}{2d}\right).
\end{equation}
\end{lemma}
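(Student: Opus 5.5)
Cover $S$ by $N\coloneqq N(S,r)$ Euclidean balls $B(c_i,r)$ of radius $r\coloneqq 2^{-k/d}/(2d)$. Every atom of $\mu^U$ lies in $S+U$, so every occupied cell of $\mathcal D_k^{(2)}$ meets some translated ball $B(c_i+U,r)$. This gives
\[
    N_{k,U}^+(\mu)\le\sum_{i=1}^N\#\{Q\in\mathcal D_k^{(2)}:Q\cap B(c_i+U,r)\ne\varnothing\}.
\]
By linearity of expectation, it therefore suffices to show that for a fixed center $c$, a uniformly shifted ball of radius $r$ meets at most $e$ cells of $\mathcal D_k^{(2)}$ in expectation.

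Next I would reduce to a one-coordinate count. A depth-$k$ cell of the cyclic partition of $\Omega_2$ is a product of intervals. Write $k=qd+\rho$ with $0\le\rho<d$. Then coordinates $1,\ldots,\rho$ have side length $h_j=2\cdot2^{-(q+1)}$, and the remaining coordinates have side length $h_j=2\cdot2^{-q}$. In either case $h_j\ge 2^{-k/d}$.

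The ball is contained in the cube $\prod_j[c_j+U_j-r,c_j+U_j+r]$. Hence the number of cells meeting it is at most $\prod_j M_j$, where $M_j$ is the number of grid intervals of length $h_j$ in coordinate $j$ that meet an interval of length $2r$ centered at $c_j+U_j$. The coordinates $U_j$ are independent and uniform on $[0,1]$. Since $h_j$ divides $1$ (it is a power of two), the position of $c_j+U_j$ modulo $h_j$ is uniform. Thus $M_j\in\{1,2\}$, and $M_j=2$ exactly when a grid point falls within distance $r$ of $c_j+U_j$. This happens with probability at most $2r/h_j\le 2r\,2^{k/d}=1/d$. The boundary convention may matter on a null set only.

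By independence, $\mathbb E\prod_j M_j=\prod_j(1+\mathbb P\{M_j=2\})\le(1+1/d)^d\le e$. Summing over the $N$ balls gives \Cref{eq:shifted-grid-covering}.

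The main obstacle I expect is bookkeeping rather than probability. Two points need care. First, the side lengths in the cyclic partition of $[0,2]^d$ must be checked to be at least $2^{-k/d}$ in every coordinate, including the incomplete block of splits (it helps that the domain has side $2$). Second, the uniformity of the shift modulo $h_j$ must hold even when $h_j>1$, which happens for small $k$. For $h_j\ge1$, I would handle this case directly: since $c_j+U_j$ ranges over an interval of length $1$, the probability that it lies within $r$ of a grid point is at most $2r\le 2r/h_j\cdot h_j$, and this is still at most $1/d$ because $h_j\le2$ and $2r\le 2^{-k/d}/d\cdot$ suitably. If this constant is off, I would adjust by using $h_j\ge 2^{-k/d}$ together with the ratio $2r/h_j$ directly, which is at most $1/d$ in every case.
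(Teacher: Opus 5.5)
Your proposal is correct and is essentially the paper's argument: cover $S$ by balls of radius $2^{-k/d}/(2d)$, bound each shifted ball's cell count by a product of independent per-coordinate factors $M_j=1+\zeta_j$ with $\mathbb P\{\zeta_j=1\}\le 2r/h_j\le 1/d$ (using $h_j\ge 2^{-k/d}$ on $[0,2]^d$), and conclude with $(1+1/d)^d\le e$. Your paragraph on $h_j\ge1$ is garbled but the conclusion holds. The case $h_j=1$ is already covered by uniformity modulo $1$. The case $h_j=2$ occurs only for an unsplit coordinate, which has no interior grid point, so $M_j=1$ deterministically; the paper handles it the same way by setting $\zeta_i=0$.
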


\begin{proof}
Denote $\rho=2^{-k/d}/(2d)$ for simplicity and cover $S$ by $N(S,\rho)$ Euclidean balls of radius $\rho$. 
For each covering ball, we consider its contribution to $N_{k,U}^+(\mu)$, which is at most the number of cells in $\mathcal D_k^{(2)}$ that intersect the ball.

Fix a covering ball, and let $\zeta_i$ indicate whether its projection onto the $i$th coordinate crosses a depth-$k$ grid boundary, with $\zeta_i=0$ if that coordinate has not been split.  Every cell in $\mathcal D_k^{(2)}$ has side length at least $2^{-k/d}$, and each nontrivial grid phase is uniform modulo its mesh size.  Hence
\[
    \E_U [\zeta_i] = \mathbb P_U(\zeta_i=1)\le 2\rho 2^{k/d}=\frac1d.
\]
The ball meets at most $\prod_{i=1}^d(1+\zeta_i)=2^{\sum_{i=1}^d\zeta_i}$ cells.  Since the $\zeta_i$ are independent, summing over the covering balls gives
\[
    \mathbb E_U\bigl[N_{k,U}^+(\mu)\bigr]
    \le N(S,\rho)\prod_{i=1}^d\bigl(1+\E_U [\zeta_i]\bigr)
    \le N(S,\rho)\left(1+\frac1d\right)^d
    \le eN(S,\rho).
\]
This proves \Cref{eq:shifted-grid-covering}.
\end{proof}

We now apply the averaged profile directly in the proof of the private intrinsic-dimensional bound.

\begin{theorem}[Randomly shifted PrivTree with intrinsic-dimensional accuracy]\label{thm:shifted-covering-privtree}
Let $0<\eps\le1$ and $\eps n>2$, and follow \Cref{alg:randomly-shifted-privtree}.  Suppose that $d\ge3$, $2<s\le d$, $1\le A<\eps n$, and the data set $X=\{X_i\}_{i=1}^n\subseteq\Omega$ satisfies
\begin{equation}\label{eq:finite-scale-euclidean-covering}
    N(X,r)\le Ar^{-s}
    \qquad\text{for every }\left(\frac{A}{\eps n}\right)^{1/s}\le r\le1.
\end{equation}
Then the expected support size of the output measure is at most $C\eps n$, and the output measure $\widehat\nu$ is $\eps$-differentially private and satisfies
\begin{equation}\label{eq:shifted-covering-privtree-rate}
    \mathbb E\bigl[W_1(\mu_n,\widehat\nu)\bigr]
    \le C_s d^{3/2}\left(\frac{Ad\log(\eps n)}{\eps n}\right)^{1/s}.
\end{equation}
The expectation is over the public shift and the noise used in the split decisions and the leaf-mass release.
\end{theorem}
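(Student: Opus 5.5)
Conditional on the public shift $U$, \Cref{alg:randomly-shifted-privtree} is the algorithm of \Cref{thm:privtree_1/d}, run on the translated data $\mu_n^U$ over $\Omega_2$. The only change is that cells disjoint from $\Omega_U$ are discarded. Since $U$ is independent of the data, privacy follows from \Cref{thm:privtree_1/d} and \Cref{lem:adaptive-composition} for each fixed $U$. Discarding a cell is a data-independent decision given $U$, so it only post-processes PrivTree and prunes subtrees with zero count. The support-size bound follows from \Cref{prop:privtree-termination}, since pruning only removes nodes. Translation by $-U$ is an isometry, so $W_1(\mu_n,\widehat\nu)=W_1(\mu_n^U,\widehat\nu^U)$, and each representative lies in $\Omega_U$. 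The plan is to rerun the proof of \Cref{thm:fully-private-intrinsic-privtree} conditionally on $U$. Wherever that proof uses the deterministic occupancy $N_k^+(\mu_n)\le A2^{sk/d}$, I will use its $U$-averaged version instead. The conditional bounds are linear in the occupancy numbers, so taking $\mathbb E_U$ at the end should suffice.

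\textbf{Step 1 (averaged occupancy).} By \Cref{lem:shifted-grid-occupancy} with $S=X$ and the covering hypothesis \Cref{eq:finite-scale-euclidean-covering},
\[
\mathbb E_U N_{k,U}^+(\mu_n)\le eA(2d)^s2^{sk/d}
\]
whenever $r=2^{-k/d}/(2d)$ lies in the admissible range. The domain is $[0,2]^d$ rather than $[0,1]^d$, which changes the diameter bound \Cref{eq:diameter-depth} by a factor $2$. The cost is a factor $(2d)^s$ in $A$. After taking $1/s$ powers this becomes a factor $2d$, and together with the $\sqrt d$ already present it produces the $d^{3/2}$ in \Cref{eq:shifted-covering-privtree-rate}.

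\textbf{Step 2 (resolution term).} Conditionally on $U$, the first chain of inequalities in the proof of \Cref{thm:fully-private-intrinsic-privtree} gives the bound
\[
(1-\alpha)\sum_{k<K}\alpha^k\min\{1,(t+k\delta)N_{k,U}^+\}+\alpha^K+\frac{C}{\eps n}
\]
for a cutoff depth $K$, which I choose. I will use $\mathbb E\min\{1,Y\}\le\min\{1,\mathbb EY\}$ and substitute Step 1. The same geometric-sum estimate as in \Cref{cor:power-law-covering} then gives $C_s\sqrt d\,\bigl(A(2d)^s d\log(\eps n)/(\eps n)\bigr)^{1/s}$.

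\textbf{Step 3 (leaf-mass term).} \Cref{lem:private-representative-occupancy} holds conditionally on $U$, with $N_k^+(\mu_n)$ replaced by $N_{k,U}^+(\mu_n)$. Its proof is linear in the heavy-node counts $H_j\le N_{j,U}^+$, so it yields
\[
\mathbb E[N_k^{\mathrm{eff}}(Z)\mid U]\le C\sum_{j\le k}2^{-(k-j)}N_{j,U}^+ + \text{lower order}.
\]
Taking $\mathbb E_U$ and applying Step 1 gives $\mathbb E N_{dj}^{\mathrm{eff}}(Z)\le C_s\min\{\eps n,(2d)^sA2^{sj}\}$. The Cauchy--Schwarz step and the dyadic splitting in \Cref{thm:fully-private-intrinsic-privtree} then give the same rate. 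Pruning only lowers $m$ and the node counts, so it does not affect this step.

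\textbf{Main obstacle.} The delicate point is the scale bookkeeping. The covering hypothesis is assumed only for $r\ge(A/(\eps n))^{1/s}$, while Step 1 needs it at radius $2^{-k/d}/(2d)$. I will therefore define the cutoff by
\[
K=\max\bigl\{k:\ 2^{-k/d}/(2d)\ge (A/(\eps n))^{1/s}\bigr\},
\]
which is the shifted analogue of $K_s$, and use $N\le n$-free trivial bounds beyond it. Then $\alpha^K\approx 2d\,(A/(\eps n))^{1/s}$. This again costs only the factor $d$ and is consistent with the final rate. I also need the recurrence of \Cref{lem:private-representative-occupancy} to run only up to depth $K$. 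This holds because the bound at depth $j$ uses occupancy only at depths $\le j$. Finally, the $k\ge1$ restriction in \Cref{lem:shifted-grid-occupancy} is handled by the trivial bound at $k=0$.
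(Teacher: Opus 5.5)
Your proposal is correct and follows the paper's proof. As there, you condition on $U$, use \Cref{lem:shifted-grid-occupancy} to replace $A$ by $eA(2d)^s$ in the averaged occupancy, and rerun \Cref{thm:fully-private-intrinsic-privtree}, using concavity of $x\mapsto\min\{1,cx\}$ for the resolution term and linearity of the node recurrence for the leaf-mass term.

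One small fix is needed in Step~3. Your weights $2^{-(k-j)}$ bound $\mathbb E[V_k\mid U]$, but $N_k^+(\mu_Z)\le\sum_{j\le k}V_j$ accumulates all levels. Averaging therefore carries the factor $1/(2^{s/d}-1)\lesssim d/s$, giving $\mathbb E N_{dj}^{\mathrm{eff}}(Z)\le C_s\min\{\eps n,\,d(2d)^sA2^{sj}\}$, which is the paper's $dA$. This extra $d$ is exactly the one inside the $1/s$ power in \Cref{eq:shifted-covering-privtree-rate}, so your conclusion is unaffected.
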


\begin{proof}
Conditional on $U$, the discarded cells are determined by the public set $\Omega_U$, and the active tree has the same law as the restriction to $\Omega_U$ of an ordinary PrivTree on $\Omega_2$.  Thus privacy and the expected support bound follow as in \Cref{thm:privtree_1/d}; averaging over the data-independent shift preserves privacy.

If $eA(2d)^s\ge\eps n$, then \Cref{eq:shifted-covering-privtree-rate} follows from $W_1\le\sqrt d$.  Otherwise, for every
\[
    1\le k\le\left\lceil\frac ds\log_2\!\left(\frac{\eps n}{eA(2d)^s}\right)\right\rceil,
\]
the covering radius in \Cref{lem:shifted-grid-occupancy} lies in the range of \Cref{eq:finite-scale-euclidean-covering}, and hence
\[
    \mathbb E_U\bigl[N_{k,U}^+(\mu_n)\bigr]
    \le eA(2d)^s2^{sk/d}.
\]
The same bound at $k=0$ is immediate.
The proof of \Cref{thm:fully-private-intrinsic-privtree} now applies after averaging over $U$, with $eA(2d)^s$ in place of $A$.  Indeed, the resolution argument uses the concavity of $x\mapsto\min\{1,cx\}$, while the recurrence for the number of active nodes is linear.  Since translation is an isometry and rescaling $\Omega_2$ contributes only a universal factor,
\[
    \mathbb E\bigl[W_1(\mu_n,\widehat\nu)\bigr]
    \le C_s\sqrt d\left(\frac{eA(2d)^sd\log(\eps n)}{\eps n}\right)^{1/s}
    \le C_s d^{3/2}\left(\frac{Ad\log(\eps n)}{\eps n}\right)^{1/s}.
\]
This proves \Cref{eq:shifted-covering-privtree-rate}; the dependence of $C_s$ may be taken as in \Cref{rem:intrinsic-laplacian-s-dependence}.
\end{proof}

\begin{remark}[Comparison with the fixed grid]\label{rem:random-shift-limitation}
For the fixed partition in \Cref{thm:fully-private-intrinsic-privtree}, a ball of radius $2^{-\lceil k/d\rceil-1}$ can contribute $2^d$ many cells in $N^+_k$. Therefore, as pointed out at the beginning of this section,
\[
    N_k^+(\mu_n)
    \le 2^dN\!\left(X,2^{-\lceil k/d\rceil-1}\right)
    \le 2^dN\!\left(X,\frac{2^{-k/d}}4\right).
\]
This comparison introduces a factor $2^{d/s}$ in the Wasserstein bound of \Cref{thm:fully-private-intrinsic-privtree}. On the other hand, the random shift replaces $2^{d/s}$ by the factor $d$ in \Cref{eq:shifted-covering-privtree-rate}.  This is an exponential-to-polynomial improvement in $d$ when $s$ is small relative to $d$; when $s$ is comparable to $d$, the fixed-grid comparison can instead be sharper.
\end{remark}

\section*{Acknowledgments}
The author thanks Roman Vershynin for helpful discussions and valuable insights. 
The author also thanks OpenAI's GPT-5.6 Sol for assistance in formalizing the author's proof outlines into initial LaTeX drafts.


\appendix

\section{Capped adaptive binary partition}\label{app:capped-adaptive-partition}

This section gives a capped version of \Cref{alg:naive-partition}. It stops when the number of leaves reaches a prescribed cap, typically after $O(n)$ splits. The breadth-first rule is needed for the error bound: among the leaves whose mass exceeds $\theta$, the algorithm splits one of minimum depth. The queue in \Cref{alg:capped-partition} enforces this rule.

\begin{algorithm}[H]
\caption{Capped adaptive binary partition}\label{alg:capped-partition}
\begin{algorithmic}[1]
\Require Probability measure $\mu$ on $[0,1]^d$, a threshold $\theta\in(0,1]$, and a leaf cap $M\ge1$
\State Initialize $\mathcal T_{\theta,M}$ with its root $\Omega$.
\State Initialize a queue $\mathcal H$ with $\mathcal H\gets\{\Omega\}$.
\While{$\mathcal H\ne\varnothing$ and $|\mathcal L(\mathcal T_{\theta,M})|<M$}
    \State Pop the front cell $Q$ from $\mathcal H$.
    \If{$\mu(Q)>\theta$}
        \State Bisect $Q$ by the cyclic coordinate rule, obtaining $Q_0$ and $Q_1$, and add them to $\mathcal T_{\theta,M}$.
        \State Append $Q_0$ and $Q_1$ to the back of $\mathcal H$.
    \EndIf
\EndWhile
\State \Return $\mathcal T_{\theta,M}$.
\end{algorithmic}
\end{algorithm}

Apply \Cref{alg:exact-synthetic-measure} to $\mu$ and the output tree $\mathcal T_{\theta,M}$, and write its output as $\nu_{\mathcal T_{\theta,M}}$. Each split increases the number of leaves by one, so \Cref{alg:capped-partition} terminates after at most $M-1$ splits. The next theorem shows that the cap preserves the Wasserstein rate.

\begin{theorem}
For every probability measure $\mu$ on $[0,1]^d$, the exact synthetic measure associated with the output tree of \Cref{alg:capped-partition} satisfies
\begin{equation}\label{eq:capped-w1}
    W_1(\mu,\nu_{\mathcal T_{\theta,M}})\le 2\sqrt d\left(\theta^{1/d}+\left(\frac{2}{M}\right)^{1/d}\right).
\end{equation}
In particular, if $\theta=c/n$ and $M=\lceil Cn\rceil$, where $c,C>0$ are independent of $n$, then
\[
    W_1(\mu,\nu_{\mathcal T_{\theta,M}})\le C'\sqrt d\,n^{-1/d}.
\]
\end{theorem}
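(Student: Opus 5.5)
The plan is to reduce to the resolution-error machinery (\Cref{lem:resolution-error,lem:leafwise-holder}) by splitting the leaves of $\mathcal T_{\theta,M}$ into two classes according to how the algorithm stopped.

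\textbf{Case 1: the queue empties.} If the loop ends because $\mathcal H=\varnothing$, then every leaf satisfies $p_Q\le\theta$. The argument for \Cref{alg:naive-partition} then gives $\mathfrak R\le\theta^{1/d}$.

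\textbf{Case 2: the cap is reached.} Otherwise the loop ends with $|\mathcal L|=M$. Let $k^*$ be the depth of the front cell of $\mathcal H$ at termination. Because of the breadth-first queue, every cell popped before the stop has depth at most $k^*$, and the remaining queue consists of cells of depth $k^*$ or $k^*+1$. Hence every leaf of $\mathcal T_{\theta,M}$ is one of two kinds.
\begin{itemize}
\item[(a)] A popped, unsplit cell. Such a leaf has $p_Q\le\theta$.
\item[(b)] A cell still in the queue. Such a leaf has depth at least $k^*$.
\end{itemize}

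For (b), I first need $k^*$ large. The leaves form a partition tree of $\Omega$, and every leaf has depth at most $k^*+1$. By Kraft's identity \Cref{eq:kraft},
\[
1=\sum_Q2^{-\depth(Q)}\ge M2^{-(k^*+1)},
\]
so $2^{-k^*}\le 2/M$.

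Now apply \Cref{lem:leafwise-holder} to the split sum. For type (a), $p_Q2^{-\depth(Q)}\le\theta2^{-\depth(Q)}$. For type (b), $p_Q2^{-\depth(Q)}\le p_Q\cdot 2/M$. Summing, with Kraft for (a) and $\sum p_Q\le1$ for (b),
\[
\mathfrak R\le\Bigl(\theta+\tfrac{2}{M}\Bigr)^{1/d}\le\theta^{1/d}+(2/M)^{1/d},
\]
using subadditivity of $t\mapsto t^{1/d}$. \Cref{lem:resolution-error} then gives \Cref{eq:capped-w1}.

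The particular case follows directly: with $\theta=c/n$ and $M=\lceil Cn\rceil$, both terms are $O(n^{-1/d})$ with constants depending only on $c$ and $C$ (and $c^{1/d}\le\max(1,c)$).

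\textbf{Main obstacle.} The delicate step is the depth invariant for the BFS queue in Case 2. I would prove by induction that queue depths are nondecreasing and span at most two consecutive values. The cap check happens only before a pop, so the termination state is exactly the leaf set described above. Cells popped and split are internal nodes, not leaves. If the cap is hit immediately after a split, the queue tail contains the two new children at depth $k^*+1$, which is consistent with the invariant.
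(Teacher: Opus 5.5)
Your proof is correct and follows the paper's argument: leaves that were popped but not split are light and are handled by H\"older plus Kraft, while the breadth-first depth invariant together with Kraft's identity gives $2^{-k^*}\le 2/M$ for the remaining (still-queued) leaves. The differences are cosmetic---you classify leaves by whether they were popped rather than by mass, anchor at the front-of-queue depth $k^*$ instead of the minimal heavy-leaf depth $h$, and apply H\"older to the whole sum, which gives the slightly sharper intermediate bound $(\theta+2/M)^{1/d}$ and removes the need for a separate $M=1$ case.
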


\begin{proof}
If $M=1$, the trivial bound $W_1(\mu,\nu_{\mathcal T_{\theta,M}})\le\sqrt d$ holds. Assume that $M\ge2$. Let $\mathcal L=\mathcal L(\mathcal T_{\theta,M})$ and divide the final leaves into
\[
    \mathcal L_{\mathrm{light}}=\{Q\in\mathcal L:p_Q\le\theta\},
    \qquad
    \mathcal L_{\mathrm{heavy}}=\{Q\in\mathcal L:p_Q>\theta\}.
\]
By \Cref{lem:resolution-error}, it suffices to bound the light and heavy contributions to $\mathfrak R_{\mathcal T_{\theta,M}}(\mu)$.

\begin{samepage}
The proof of \Cref{lem:leafwise-holder} also applies when both sums are restricted to any $\mathcal A\subseteq\mathcal L$. Applying it with $\mathcal A=\mathcal L_{\mathrm{light}}$, together with $p_Q\le\theta$ on light leaves and \Cref{eq:kraft}, gives
\begin{equation}\label{eq:light-leaves}
    \sum_{Q\in\mathcal L_{\mathrm{light}}}p_Q2^{-\depth(Q)/d}\le\theta^{1/d}.
\end{equation}
\end{samepage}

The heavy contribution is zero if $\mathcal L_{\mathrm{heavy}}$ is empty. Otherwise, the algorithm stopped upon reaching the cap, so $|\mathcal L|=M$. Let $h=\min_{Q\in\mathcal L_{\mathrm{heavy}}}\depth(Q)$. The depths of the split nodes are nondecreasing under the breadth-first rule, so no final leaf has depth greater than $h+1$. Thus $\mathcal T_{\theta,M}$ has depth at most $h+1$, and all heavy leaves have depth $h$ or $h+1$. Therefore, $M\le2^{h+1}$ and
\begin{equation}\label{eq:heavy-leaves}
\begin{split}
    \sum_{Q\in\mathcal L_{\mathrm{heavy}}}p_Q2^{-\depth(Q)/d}
    &\le 2^{-h/d}\sum_{Q\in\mathcal L_{\mathrm{heavy}}}p_Q
    \le 2^{-h/d}
    \le \left(\frac{2}{M}\right)^{1/d}.
\end{split}
\end{equation}
Combining \Cref{eq:light-leaves,eq:heavy-leaves} proves \Cref{eq:capped-w1}. The stated specialization follows from $M=\lceil Cn\rceil\ge Cn$.
\end{proof}

\section{Linear-programming form of the PSMM projection}
\label{app:psmm-projection}

Enumerate the leaves as $Q_1,\ldots,Q_m$ and write $\widetilde p_i=\widetilde p_{Q_i}$, $c_{ij}=\|y_{Q_i}-y_{Q_j}\|_2$, and $D=\sqrt d$.  The projection in \Cref{eq:psmm-projection} is equivalently obtained from the linear program
\begin{equation}\label{eq:psmm-projection-lp}
\begin{aligned}
\underset{p,\gamma,v^+,v^-}{\operatorname{minimize}}\quad
    &\sum_{i\ne j}c_{ij}\gamma_{ij}
      +D\sum_{i=1}^m(v_i^++v_i^-)\\
\text{subject to}\quad
    &\sum_{j\ne i}(\gamma_{ij}-\gamma_{ji})+v_i^+-v_i^-
       =\widetilde p_i-p_i, && i\in[m],\\
    &\sum_{i=1}^m p_i=1,\\
    &p_i,\gamma_{ij},v_i^+,v_i^-\ge0, && i,j\in[m],\ i\ne j.
\end{aligned}
\end{equation}
If $p^*$ is an optimal solution, then $\widehat p_{Q_i}=p_i^*$.  Indeed, the inner maximization in \Cref{eq:psmm-projection} is a linear program in $(f_{Q_i})_{i=1}^m$ with constraints $f_{Q_i}-f_{Q_j}\le c_{ij}$ and $-D\le f_{Q_i}\le D$.  Its feasible set is symmetric, so the absolute value does not change the maximum, and linear-programming duality gives \Cref{eq:psmm-projection-lp}.  This is the Euclidean-metric version of \cite[Algorithm~2 and Proposition~6]{he2023algorithmically}.

\bibliographystyle{plain}
\bibliography{ref}

@inproceedings{zhang2016privtree,
  title={{PrivTree}: A Differentially Private Algorithm for Hierarchical Decompositions},
  author={Zhang, Jun and Xiao, Xiaokui and Xie, Xing},
  booktitle={Proceedings of the 2016 International Conference on Management of Data},
  pages={155--170},
  year={2016},
  publisher={ACM},
  doi={10.1145/2882903.2882928}
}

@article{dwork2014algorithmic,
  title={The Algorithmic Foundations of Differential Privacy},
  author={Dwork, Cynthia and Roth, Aaron},
  journal={Foundations and Trends in Theoretical Computer Science},
  volume={9},
  number={3--4},
  pages={211--407},
  year={2014},
  publisher={now Publishers Inc.},
  doi={10.1561/0400000042}
}

@article{hawes2020implementing,
  title={Implementing Differential Privacy: Seven Lessons From the 2020 {United States Census}},
  author={Hawes, Michael B.},
  journal={Harvard Data Science Review},
  volume={2},
  number={2},
  year={2020},
  doi={10.1162/99608f92.353c6f99},
  url={https://doi.org/10.1162/99608f92.353c6f99}
}

@article{abowd2022topdown,
  title={The 2020 {Census} Disclosure Avoidance System {TopDown} Algorithm},
  author={Abowd, John M. and Ashmead, Robert and Cumings-Menon, Ryan and Garfinkel, Simson and Heineck, Micah and Heiss, Christine and Johns, Robert and Kifer, Daniel and Leclerc, Philip and Machanavajjhala, Ashwin and Moran, Brett and Sexton, William and Spence, Matthew and Zhuravlev, Pavel},
  journal={Harvard Data Science Review},
  year={2022},
  doi={10.1162/99608f92.529e3cb9},
  url={https://doi.org/10.1162/99608f92.529e3cb9},
  note={Special Issue 2: Differential Privacy for the 2020 U.S. Census}
}

@article{hauer2021census,
  title={Differential Privacy in the 2020 {Census} Will Distort {COVID}-19 Rates},
  author={Hauer, Mathew E. and Santos-Lozada, Alexis R.},
  journal={Socius: Sociological Research for a Dynamic World},
  volume={7},
  pages={2378023121994014},
  year={2021},
  doi={10.1177/2378023121994014}
}

@inproceedings{he2023algorithmically,
  title={Algorithmically Effective Differentially Private Synthetic Data},
  author={He, Yiyun and Vershynin, Roman and Zhu, Yizhe},
  booktitle={Proceedings of Thirty Sixth Conference on Learning Theory},
  series={Proceedings of Machine Learning Research},
  volume={195},
  pages={3941--3968},
  year={2023},
  publisher={PMLR},
  url={https://proceedings.mlr.press/v195/he23a.html}
}

@article{boedihardjo2024private,
  title={Private Measures, Random Walks, and Synthetic Data},
  author={Boedihardjo, March and Strohmer, Thomas and Vershynin, Roman},
  journal={Probability Theory and Related Fields},
  volume={189},
  number={1--2},
  pages={569--611},
  year={2024},
  publisher={Springer},
  doi={10.1007/s00440-024-01279-z},
  url={https://doi.org/10.1007/s00440-024-01279-z}
}

@article{boedihardjo2024covariance,
  title={Covariance's Loss Is Privacy's Gain: Computationally Efficient, Private and Accurate Synthetic Data},
  author={Boedihardjo, March and Strohmer, Thomas and Vershynin, Roman},
  journal={Foundations of Computational Mathematics},
  volume={24},
  number={1},
  pages={179--226},
  year={2024},
  publisher={Springer},
  doi={10.1007/s10208-022-09591-7},
  url={https://doi.org/10.1007/s10208-022-09591-7}
}

@article{weed2019sharp,
  title={Sharp asymptotic and finite-sample rates of convergence of empirical measures in {Wasserstein} distance},
  author={Weed, Jonathan and Bach, Francis},
  journal={Bernoulli},
  volume={25},
  number={4A},
  pages={2620--2648},
  year={2019},
  doi={10.3150/18-BEJ1065}
}

@article{ba2011sublinear,
  title={Sublinear Time Algorithms for Earth Mover's Distance},
  author={{Do Ba}, Khanh and Nguyen, Huy L. and Nguyen, Huy N. and Rubinfeld, Ronitt},
  journal={Theory of Computing Systems},
  volume={48},
  number={2},
  pages={428--442},
  year={2011},
  doi={10.1007/s00224-010-9265-8}
}

@article{dereich2013constructive,
  title={Constructive Quantization: Approximation by Empirical Measures},
  author={Dereich, Steffen and Scheutzow, Michael and Schottstedt, Reik},
  journal={Annales de l'Institut Henri Poincar\'e, Probabilit\'es et Statistiques},
  volume={49},
  number={4},
  pages={1183--1203},
  year={2013},
  doi={10.1214/12-AIHP489}
}

@article{vonluxburg2004distance,
  title={Distance-Based Classification with {Lipschitz} Functions},
  author={von Luxburg, Ulrike and Bousquet, Olivier},
  journal={Journal of Machine Learning Research},
  volume={5},
  pages={669--695},
  year={2004}
}

@book{villani2009optimal,
  title={Optimal Transport: Old and New},
  author={Villani, C{\'e}dric},
  series={Grundlehren der Mathematischen Wissenschaften},
  volume={338},
  publisher={Springer},
  address={Berlin, Heidelberg},
  year={2009},
  doi={10.1007/978-3-540-71050-9}
}

@article{blum2013learning,
  title={A Learning Theory Approach to Noninteractive Database Privacy},
  author={Blum, Avrim and Ligett, Katrina and Roth, Aaron},
  journal={Journal of the ACM},
  volume={60},
  number={2},
  pages={12:1--12:25},
  year={2013},
  doi={10.1145/2450142.2450148}
}

@inproceedings{hardt2012simple,
  title={A Simple and Practical Algorithm for Differentially Private Data Release},
  author={Hardt, Moritz and Ligett, Katrina and McSherry, Frank},
  booktitle={Advances in Neural Information Processing Systems},
  volume={25},
  pages={2348--2356},
  publisher={Curran Associates, Inc.},
  year={2012},
  url={https://proceedings.neurips.cc/paper/2012/hash/208e43f0e45c4c78cafadb83d2888cb6-Abstract.html}
}

@article{ullman2020pcps,
  title={{PCP}s and the Hardness of Generating Synthetic Data},
  author={Ullman, Jonathan and Vadhan, Salil},
  journal={Journal of Cryptology},
  volume={33},
  number={4},
  pages={2078--2112},
  year={2020},
  doi={10.1007/s00145-020-09363-y}
}

@inproceedings{donhauser2024certified,
  title={Certified Private Data Release for Sparse {L}ipschitz Functions},
  author={Donhauser, Konstantin and Lokna, Johan and Sanyal, Amartya and Boedihardjo, March and H\"onig, Robert and Yang, Fanny},
  booktitle={Proceedings of The 27th International Conference on Artificial Intelligence and Statistics},
  series={Proceedings of Machine Learning Research},
  volume={238},
  pages={1396--1404},
  year={2024},
  publisher={PMLR},
  url={https://proceedings.mlr.press/v238/donhauser24a.html}
}

@article{he2025lowdimensional,
  title={Differentially Private Low-Dimensional Synthetic Data from High-Dimensional Datasets},
  author={He, Yiyun and Strohmer, Thomas and Vershynin, Roman and Zhu, Yizhe},
  journal={Information and Inference: A Journal of the IMA},
  volume={14},
  number={1},
  pages={iaae034},
  year={2025},
  doi={10.1093/imaiai/iaae034}
}

@article{fournier2015rate,
  title={On the Rate of Convergence in {Wasserstein} Distance of the Empirical Measure},
  author={Fournier, Nicolas and Guillin, Arnaud},
  journal={Probability Theory and Related Fields},
  volume={162},
  number={3--4},
  pages={707--738},
  year={2015},
  doi={10.1007/s00440-014-0583-7}
}

@inproceedings{kreacic2023kdtrees,
  title={Differentially Private Synthetic Data Using {KD}-Trees},
  author={Krea\v{c}i\'{c}, Eleonora and Nouri, Navid and Potluru, Vamsi K. and Balch, Tucker and Veloso, Manuela},
  booktitle={Proceedings of the Thirty-Ninth Conference on Uncertainty in Artificial Intelligence},
  series={Proceedings of Machine Learning Research},
  volume={216},
  pages={1143--1153},
  year={2023},
  publisher={PMLR},
  url={https://proceedings.mlr.press/v216/kreacic23a.html}
}

@inproceedings{gonzalezlara2025private,
  title={Private Evolution Converges},
  author={Gonz{\'a}lez Lara, Tom{\'a}s and Fanti, Giulia and Ramdas, Aaditya},
  booktitle={Advances in Neural Information Processing Systems},
  volume={38},
  pages={164688--164724},
  year={2025},
  publisher={Neural Information Processing Systems Foundation, Inc.},
  doi={10.52202/085713-4961},
  url={https://proceedings.neurips.cc/paper_files/paper/2025/hash/da339ea9d0bcb0fc1a5ab07961f022bb-Abstract-Conference.html}
}

@inproceedings{dwork2006calibrating,
  title={Calibrating Noise to Sensitivity in Private Data Analysis},
  author={Dwork, Cynthia and McSherry, Frank and Nissim, Kobbi and Smith, Adam},
  booktitle={Theory of Cryptography},
  series={Lecture Notes in Computer Science},
  volume={3876},
  pages={265--284},
  publisher={Springer},
  year={2006},
  doi={10.1007/11681878_14}
}

@inproceedings{liu2021robustmean,
  title={Robust and Differentially Private Mean Estimation},
  author={Liu, Xiyang and Kong, Weihao and Kakade, Sham M. and Oh, Sewoong},
  booktitle={Advances in Neural Information Processing Systems},
  volume={34},
  pages={3887--3901},
  year={2021},
  publisher={Curran Associates, Inc.},
  url={https://proceedings.neurips.cc/paper/2021/hash/1fc5309ccc651bf6b5d22470f67561ea-Abstract.html}
}

@inproceedings{amin2019covariance,
  title={Differentially Private Covariance Estimation},
  author={Amin, Kareem and Dick, Travis and Kulesza, Alex and Mu\~noz Medina, Andr\'es and Vassilvitskii, Sergei},
  booktitle={Advances in Neural Information Processing Systems},
  volume={32},
  year={2019},
  publisher={Curran Associates, Inc.},
  url={https://proceedings.neurips.cc/paper/2019/hash/4158f6d19559955bae372bb00f6204e4-Abstract.html}
}

@inproceedings{abadi2016deep,
  title={Deep Learning with Differential Privacy},
  author={Abadi, Martin and Chu, Andy and Goodfellow, Ian and McMahan, H. Brendan and Mironov, Ilya and Talwar, Kunal and Zhang, Li},
  booktitle={Proceedings of the 2016 ACM SIGSAC Conference on Computer and Communications Security},
  pages={308--318},
  year={2016},
  publisher={ACM},
  doi={10.1145/2976749.2978318}
}

@inproceedings{sheffet2017ols,
  title={Differentially Private Ordinary Least Squares},
  author={Sheffet, Or},
  booktitle={Proceedings of the 34th International Conference on Machine Learning},
  series={Proceedings of Machine Learning Research},
  volume={70},
  pages={3105--3114},
  publisher={PMLR},
  year={2017},
  url={https://proceedings.mlr.press/v70/sheffet17a.html}
}

@inproceedings{ghazi2020clustering,
  title={Differentially Private Clustering: Tight Approximation Ratios},
  author={Ghazi, Badih and Kumar, Ravi and Manurangsi, Pasin},
  booktitle={Advances in Neural Information Processing Systems},
  volume={33},
  pages={4040--4054},
  year={2020},
  publisher={Curran Associates, Inc.},
  url={https://proceedings.neurips.cc/paper/2020/hash/299dc35e747eb77177d9cea10a802da2-Abstract.html}
}

@inproceedings{lin2024foundation,
  title={Differentially Private Synthetic Data via Foundation Model {API}s 1: Images},
  author={Lin, Zinan and Gopi, Sivakanth and Kulkarni, Janardhan and Nori, Harsha and Yekhanin, Sergey},
  booktitle={International Conference on Learning Representations},
  year={2024},
  url={https://openreview.net/forum?id=YEhQs8POIo},
  eprint={2305.15560},
  archivePrefix={arXiv}
}

@inproceedings{ding2026smooth,
  title={Minimax Optimal Differentially Private Synthetic Data for Smooth Queries},
  author={Ding, Rundong and He, Yiyun and Zhu, Yizhe},
  booktitle={Proceedings of Thirty Ninth Conference on Learning Theory},
  series={Proceedings of Machine Learning Research},
  volume={336},
  pages={1963--1964},
  publisher={PMLR},
  year={2026},
  url={https://proceedings.mlr.press/v336/ding26a.html}
}

@inproceedings{feldman2024instance,
  title={Instance-Optimal Private Density Estimation in the {Wasserstein} Distance},
  author={Feldman, Vitaly and McMillan, Audra and Sivakumar, Satchit and Talwar, Kunal},
  booktitle={Advances in Neural Information Processing Systems},
  volume={37},
  pages={90061--90131},
  year={2024},
  publisher={Neural Information Processing Systems Foundation, Inc.},
  doi={10.52202/079017-2860}
}

@article{ghazi2023heatmaps,
  title={Differentially Private Heatmaps},
  author={Ghazi, Badih and He, Junfeng and Kohlhoff, Kai and Kumar, Ravi and Manurangsi, Pasin and Navalpakkam, Vidhya and Valliappan, Nachiappan},
  journal={Proceedings of the AAAI Conference on Artificial Intelligence},
  volume={37},
  number={6},
  pages={7696--7704},
  year={2023},
  doi={10.1609/aaai.v37i6.25933}
}

\end{document}